\newtheorem{theorem}{Theorem}
\newtheorem{claim}[theorem]{Claim}
\newtheorem{corollary}[theorem]{Corollary}
\newtheorem{lemma}[theorem]{Lemma}
\newtheorem{remark}[theorem]{Remark}
\newenvironment{proof}[1][Proof]{\textbf{#1.} }{\ \rule{0.5em}{0.5em}}
\newenvironment{claim*}{\textbf{Claim}}{}
\newcommand{\R}{\mathbb{R}}
\newcommand{\B}{\mathcal{B}}
\newcommand{\E}{\mathbb{E}}
\newcommand{\cost}{\mathrm{cost}^{(2)}}
\newcommand{\costa}{\mathrm{cost}^{(\alpha)}}
\newcommand{\OPT}{\texttt{OPT}}
\newcommand{\COPT}{\mathcal{C}_{\texttt{OPT}}}
\newcommand{\smax}{\sigma_{\mathrm{max}}}
\newcommand{\smin}{\sigma_{\mathrm{min}}}
\title{An Analysis of $D^\alpha$ seeding for $k$-means} 
\author{%
  Etienne Bamas\thanks{Post-Doctoral Fellow at the ETH AI Center, Switzerland.} 
   \and
   Sai Ganesh Nagarajan\thanks{EPFL, Switzerland.}
  \and
  Ola Svensson\thanks{EPFL, Switzerland.}
}
\date{}
\begin{document}

\maketitle

\begin{abstract}
One of the most popular clustering algorithms is the celebrated $D^\alpha$ seeding algorithm (also know as \texttt{$k$-means++} when $\alpha=2$) by Arthur and Vassilvitskii (2007), who showed that it guarantees in expectation an $O(2^{2\alpha}\cdot  \log k)$-approximate solution to the ($k$,$\alpha$)-means cost (where euclidean distances are raised to the power $\alpha$) for any $\alpha\ge 1$. More recently, Balcan, Dick, and White (2018) observed experimentally that using $D^\alpha$ seeding with $\alpha>2$ can lead to a better solution with respect to the standard $k$-means objective (i.e. the $(k,2)$-means cost).

In this paper, we provide a rigorous understanding of this phenomenon. For any $\alpha>2$, we show that $D^\alpha$ seeding guarantees in expectation an approximation factor of
\begin{equation*}
    O_\alpha \left((g_\alpha)^{2/\alpha}\cdot \left(\frac{\smax}{\smin}\right)^{2-4/\alpha}\cdot (\min\{\ell,\log k\})^{2/\alpha}\right)
\end{equation*}
with respect to the standard $k$-means cost of any underlying clustering; where $g_\alpha$ is a parameter capturing the concentration of the points in each cluster, $\smax$ and $\smin$ are the maximum and minimum standard deviation of the clusters around their means, and $\ell$ is the number of distinct mixing weights in the underlying clustering (after rounding them to the nearest power of $2$). For instance, if the underlying clustering is defined by a mixture of $k$ Gaussian distributions with equal cluster variance (up to a constant-factor), then our result implies that: (1) if there are a constant number of mixing weights, any constant $\alpha>2$ yields a constant-factor approximation; (2) if the mixing weights are arbitrary, any constant $\alpha>2$ yields an $O\left(\log^{2/\alpha}k\right)$-approximation, and $\alpha=\Theta(\log\log k)$ yields an $O(\log\log k)^3$-approximation. We complement these results by some lower bounds showing that the dependency on $g_\alpha$ and $\smax/\smin$ is tight.

Finally, we provide an experimental confirmation of the effects of the aforementioned parameters when using $D^\alpha$ seeding. Further, we corroborate the observation that $\alpha>2$ can indeed improve the $k$-means cost compared to $D^2$ seeding, and that this advantage remains even if we run Lloyd's algorithm after the seeding.
\end{abstract}

\section{Introduction}
Clustering is a quintessential machine learning problem with numerous practical applications in medicine \cite{alashwal2019application}, image segmentation \cite{shi2000normalized, burney2014k}, market analysis \cite{chiu2009intelligent} and anomaly detection \cite{munz2007traffic}, to name a few. One of the most popular formulations is the $k$-means problem that requires us to pick $k$ centers such that the sum of the squared distance from each data point to its closest center is minimized. The $k$-means problem is NP-hard even in 2 dimensions \cite{mahajan2009planar} and most research is therefore focused on heuristics and approximation algorithms. For a long time, a heavily used heuristic for this problem has been the Lloyd's algorithm, with Expectation Maximization (EM) style updates for the centers after an initial set of $k$ centers are chosen uniformly at random from the data. While this method finds a local optimum, it is known not to have any approximation guarantees and it could have an exponential run time in the worst case \cite{arthur2006slow}.

 \paragraph{The \texttt{$k$-means++} method.}  Arthur and Vassilvitskii \cite{arthur2007k} came up with the elegant \texttt{$k$-means++} method that carefully selects the initial centers (also called $D^2$ seeding) such that the next center is a data point that is chosen with probability that is proportional to its squared distance to its closest center, selected thus far (see \cite{ostrovsky2013effectiveness} for a concurrent work on a similar algorithm as well). This intuitively makes sense as this initialization is more likely to discover new clusters (that are far away) than simply selecting centers uniformly at random. Indeed, they proved that the initial choice of centers already forms a $O(\log k)$ approximation (in expectation) for this problem.
 They complemented this upper bound with a family of instances where the expected cost of the $D^2$ seeding is a factor $\Omega(\log k)$ times the optimum cost, showing that their analysis is tight.

 \paragraph{Limitations on clusterable instances.} While the $D^2$ seeding method provides clear improvements over uniformly at random initialization in an elegant and efficient manner, the family of instances that show  the tightness of the  analysis indicates some of its limitations. The family of instances presented in~\cite{arthur2007k}  is indeed highly clusterable:  $k$ regular simplices of radius one (each of $n/k$ points)  and the pairwise distance between the centers of two simplices are $\Delta$. 
 As $\Delta$ tends to $\infty$ (i.e., the instance becomes more and more clusterable), the expected approximation guarantee of the $D^2$ seeding method tends to $\Theta(\log k)$. 

For such clusterable instances, the issue is  that the $D^2$ seeding method does not put enough probability mass on discovering new clusters. This phenomenon was already observed in the original paper (\cite{arthur2007k}), and they proposed a greedy variant that takes several samples at each iteration (increasing the probability that at least one hits a new yet undiscovered cluster) and makes a greedy choice among them. 
This greedy variant has worse guarantees in the worst case (see \cite{GrunauORT23} for a recent nearly tight analysis). However, the greedy variant shows better experimental performance (after all, we usually look for $k$ clusters when the data is clusterable), and is currently the method implemented in the popular Scikit-learn library \cite{pedregosa2011scikit}. Specifically, at each iteration $2+\log(k)$ points are sampled, and, among them, the point that decreases the objective the most is greedily chosen.

\paragraph{Data-driven approach.} More recently, Balcan et al. \cite{balcan2018data} proposed a data-driven approach in order to address the aforementioned limitation of the $D^2$ seeding method on clusterable data. Instead of always using $D^2$ seeding for the initial centers, they proposed to use $D^\alpha$ seeding where $\alpha$ is now a parameter of the algorithm. In $D^\alpha$ seeding,  a point is selected as the next center with probability proportional to its $\alpha$-powered distance to its closest center, selected thus far\footnote{We remark that $D^\alpha$ seeding was already considered in \cite{arthur2007k} but they studied it on a cost that was proportional to the distances raised to the power $\alpha$ (i.e. the $(k,\alpha)$-means cost) instead of the standard $k$-means objective. They showed that $D^{\alpha}$ seeding was $O(2^{2\alpha} \log k)$-approximate on this $\alpha$-cost. The use of $D^{\alpha}$ on other objectives, including the standard squared distance objective was first introduced in \cite{balcan2018data}, and they considered a data-driven approach.}.  One can observe that a large choice of $\alpha > 2$ increases the probability that a sampled center will discover a new yet undiscovered cluster which is advantageous. At the same time, a large $\alpha$ makes the algorithm more sensitive to outliers (which is also the reason why the greedy variant of \texttt{$k$-means++} is worse in the worst case). Hence the ``optimal'' selection of $\alpha$ depends on the kind of instances one wants to solve, which motivates a data-driven approach. One of the main results in~\cite{balcan2018data} is that this is feasible.  
They showed that the parameter $\alpha$ is learnable in the sense that if we assume the instance is drawn from some unknown distribution $\mathcal D$,  then with only polynomially many (in the instance size and other relevant parameters) samples and a polynomial running time, one can compute a parameter $\tilde \alpha$ for the sampling that is almost optimum for the given distribution $\mathcal D$. 
This is especially interesting as it shows that setting $\alpha$ to a good value on a given distribution is in principle a task that is manageable. 
Additionally, Balcan et al. \cite{balcan2018data} complemented their theoretical results with an experimental analysis that shows that setting $\alpha$ equal to $2$ is not always the best choice. 
For instance, on the MNIST dataset, they find that setting $\alpha$ close to $4$ is a significantly better choice than $\alpha=2$. This is even more striking in the case where $\mathcal D$ is a mixture of Gaussians, in which case setting $\alpha$ close to $20$ seems the best choice. 
This highlights the fact that in practice, one can outperform the popular \texttt{$k$-means++} algorithm of \cite{arthur2007k} by tweaking the parameter $\alpha$. 
Yet, they do not provide any quantitative understanding of this phenomenon nor provide any approximation guarantees on these instances with different $\alpha$. This is the main focus of this paper.

\paragraph{Our contributions.} 
Our main contribution is a rigorous analysis of the advantage of $D^\alpha$ seeding,  proving that it leads to constant-factor approximation guarantees for a large class of instances, including a balanced mixture of $k$ Gaussians, where the standard \texttt{$k$-means++} algorithm is already no better than $\Omega(\log k)$ (see Section~\ref{sec:thmproof}). We remark that a beyond worst-case analysis is essential as it is easy to see that $\alpha = 2$ is an optimal choice in the worst-case (just as the greedy variant of $k$-means++ is worse in the worst-case). 

In our beyond worst-case analysis, we identify natural data-dependent parameters that measure (i) how concentrated points are in clusters (the parameter $g_\alpha$), (ii) the ratio of the maximum and minimum standard deviation of the optimal clusters around their mean ($\frac{\smax}{\smin}$), and (iii) how balanced clusters are in terms of number of points (the parameter $\ell$). Using these parameters, we show that $D^\alpha$ seeding guarantees for any $\alpha>2$ an 
\begin{equation*}
    O_\alpha \left((g_\alpha)^{2/\alpha}\cdot \left(\frac{\smax}{\smin}\right)^{2-4/\alpha}\cdot (\min\{\ell,\log k\})^{2/\alpha}\right)
\end{equation*} approximation with respect to the standard $k$-means objective (the formal statement can be found in Section \ref{sec:result_statement}). We further show that the dependence on the first two parameters is necessary and tight (formal statement in Section~\ref{sec:result_statement}), and this dependence gives a theoretical explanation of the importance of selecting $\alpha$ as a function of the data (Section~\ref{sec:choosing_alpha}).  We leave it as an interesting open problem to understand the necessity of the third parameter $\ell$.  

Finally, a more open-ended direction following our work is to give a beyond-worst-case analysis of the greedy variant of \texttt{$k$-means++}. We take a first step in this direction by proving a negative result: we give a family of instances where the natural parameters (i)-(iii) are all constant (and thus $D^\alpha$ seeding yields a constant-factor approximation guarantee for any constant $\alpha>2$) but greedy \texttt{$k$-means++} as implemented in the Scikit-learn library (with $\Theta(\log k)$ samples per iteration) has a super constant approximation guarantee (see Section~\ref{sec:result_statement}).

\subsection{Further related works}

Several variants have been studied since the original publication of the \texttt{$k$-means++} method in~\cite{arthur2007k, ostrovsky2013effectiveness}. Aggarwal et al. \cite{aggarwal2009adaptive} obtained an $O(1)$-approximation with constant probability by selecting $O(k)$ centers, which was improved later by \cite{wei2016constant, makarychev2020improved}. Bahmani et al. \cite{bahmani2012scalable} provided a scalable version $k$-means++ that is even more practical, and \cite{BachemL017,Cohen-AddadLNSS20} provided faster ways for randomly selecting the centers. Recently, Lattanzi and Sohler \cite{lattanzi2019better} obtained a $O(1)$-approximation with additional $O(k\log\log k)$ steps of local search after using $k$-means++ to choose the initial centers and this was improved by Choo et al. \cite{choo2020k} to obtain a $10^{30}$-approximation with $\epsilon k$ steps of local search. Our constant factor guarantees are arguably smaller without any additional steps of local search but are applicable to the appropriate family of instances, whilst their method is applicable in the worst case across all instances. Moreover, their local search methods can be augmented on top of our guarantees of $D^{\alpha}$-seeding, which may offer significant improvements. Tangentially, one could study algorithms for learning the cluster centers when the data is instantiated from a mixture of Gaussians \cite{dasgupta1999learning,arora2005learning}. Furthermore, specific clustering algorithms are created under specific assumptions on the instances with various clusterability notions \cite{ackerman2009clusterability}. However, these clusterability notions are often (computationally) hard to check and the algorithms are not as efficient and simple as the seeding-based algorithms, which also work well in practice (without any assumptions). Finally, the main inspiration of this paper is from the idea of data-driven clustering by Balcan et al. \cite{balcan2018data}.

\subsection{Preliminaries and notations}
To formally introduce the $k$-means problem and the seeding algorithms, we will need to work with a metric space $(\mathbb{R}^d,\lVert \cdot \rVert_2)$. Since we always work with the Euclidean norm, we will drop the subscript in the notation and write the Euclidean norm of a vector $x$ to be $\lVert x \rVert$. If we are given some data points $\mathcal{X} \subset \mathbb{R}^d$, the cost of our data $\mathcal{X}$ associated with a given set of $t$ centers $Z_t$ can be defined as follows:
\begin{align}
    \cost(\mathcal{X},Z_t):=\sum_{x \in \mathcal{X}}\min_{c \in Z_t}\lVert x-c \rVert^2.
\end{align}
Now note that the centers $Z_t$ define a natural partition of the data, in that: 
$C_j = \{x \in \mathcal{X}: c_j=\arg \min_{c \in Z_t}\lVert x-c\rVert^2 \}$. Then one can write the cost equivalently in the following useful way:
\begin{align}
    \cost(\mathcal{X},Z_t)=\sum_{j=1}^{t}\sum_{x \in C_j}\lVert x-c_j \rVert^2.  
\end{align}
Furthermore, if one has a candidate clustering $\mathcal{C}$, then each corresponding center can be computed as $c_j=\frac{1}{|C_j|}\sum_{x \in C_j}x$, which are the centroids of the corresponding clusters. We will denote the optimal centroids by $\{\mu_1,\mu_2,\ldots,\mu_k\}$. By a slight abuse of notations, we might drop the subscript to identify a cluster $C\in \mathcal C$, and $\mu_C$ will refer to the mean of that cluster. Let $\COPT$ be the optimal clustering whose corresponding centers are $Z_{\OPT}$. By definition,
\begin{align}
    \cost(\mathcal{X},Z_{\OPT})=\min_{Z \subset \mathbb{R}^d: |Z|=k} \cost(\mathcal{X},Z).
\end{align}

\paragraph{The \texttt{$k$-means++} algorithm.} We will now describe the class of parameterized seeding algorithms as in \cite{arthur2007k,balcan2018data} for the $k$-means objective. For any $\alpha \in [0,\infty]$, the general $D^{\alpha}$ seeding procedure chooses $k$ centers as follows:

\begin{enumerate}
    \item The first center $z_1 \in \mathcal{X}$ is chosen uniformly at random from the data points.
    \item Let $Z_t$ be the set of $t$ centers chosen so far, such that $t < k$. The next center, $z_{t+1} \in \mathcal{X}$ is chosen with probability given by:
    \begin{align}\label{eqn:palpha}
        p^{(\alpha)}_{\mathcal{X}}(z):=\mathbb P(z \;\text{is chosen to be the next center}| Z_t ) = \dfrac{\min_{c \in Z_t} \lVert z-c \rVert^{\alpha}}{\sum_{z \in \mathcal{X}}\min_{c \in Z_t} \lVert z-c \rVert^{\alpha}}
    \end{align}
\end{enumerate}

The classic \texttt{$k$-means++} algorithm is the special case of $D^2$ seeding. For any $\alpha \ge  1$, Arthur and Vassilvitskii \cite{arthur2007k} show that  $D^{\alpha}$-seeding procedure is an $O(2^{2\alpha} \log k)$ approximation in expectation if the cost function is given by:
\begin{align}
    \costa(\mathcal{X},Z_k):= \sum_{x \in \mathcal{X}}\min_{c \in Z_k}\lVert x-c \rVert^{\alpha}\ .
\end{align}

Our focus in this paper is to provide guarantees on the $D^{\alpha}$ seeding algorithm with $\alpha > 2$ for the standard $k$-means objective (i.e. $\alpha=2$).

\paragraph{The greedy \texttt{$k$-means++} algorithm.} Although this is not the main focus of the paper, it is helpful to define briefly here the greedy variant of \texttt{$k$-means++}. The greedy variant with $m$ samples works as follows.
\begin{enumerate}
    \item The first center $z_1 \in \mathcal{X}$ is chosen uniformly at random from the data points.
    \item Let $Z_t$ be the set of $t$ centers chosen so far, such that $t < k$. We select a set of $m$ candidate centers $z_1,z_2,\ldots, z_m$ where each candidate is sampled according to the probability distribution 
    \begin{align}
        p_{\mathcal{X}}(z):=\mathbb P(z \;\text{is sampled}| Z_t ) = \dfrac{\min_{c \in Z_t} \lVert z-c \rVert^{2}}{\sum_{z \in \mathcal{X}}\min_{c \in Z_t} \lVert z-c \rVert^{2}}
    \end{align}
    The next added center $z_{t+1}$ is selected to be the one which decreases the cost the most, among all the candidate centers $z_1,z_2,\ldots, z_m$. 
    
    \end{enumerate}

Usually $m$ is selected to mildly increase with the input size. For instance the standard scikit-learn library implements the greedy version of \texttt{$k$-means++} using $m=\Theta (\log k)$ candidates at each step (see \cite{pedregosa2011scikit}).

\section{Our Results}\label{sec:result_statement}
In this section, we define formally the natural parameters that $D^{\alpha}$ seeding depends on and state formally our results. Moreover, we will provide a short discussion on the necessity of the dependence that will clarify our claims. The last part of this section focuses on using our results to provide recommendations on choosing $\alpha$ in different scenarios.

Before we move on to stating our results on the $D^{\alpha}$ seeding, we need to define the following quantities with respect to the optimal\footnote{Although we state our definitions and results with respect to the optimal clusters, our results hold for any reference clustering that satisfies the aforementioned properties.} clustering $\COPT=\{C_1,C_2,\ldots,C_k\}$.

\begin{enumerate}
    \item We define $\sigma_C$ as the standard deviation of the points inside cluster $C\in \COPT$. More precisely,
    \begin{equation}
        \sigma_C := \sqrt{\frac{\sum_{x \in C}\lVert x-\mu_C\rVert^2}{|C|}}\ .
    \end{equation}
    Following this, $\smax$ is defined as the maximum standard deviation of points inside a given cluster, i.e. $\smax:=\max_{C\in \COPT}\sigma_C$, and similarly $\smin:=\min_{C\in \COPT}\sigma_C$. 
    \item We need a parameter $g_\alpha$ that is  a ``moment'' condition that measures the concentration of the distances of the points to the centroid $\mu_C$ in a cluster $C$: 
\begin{equation}
\label{eq:concentration_assumption_new}
    g_\alpha:=\max_{C\in \COPT} \frac{(1/|C|^2)\cdot \sum_{z\in C} \costa(C,z)}{(\cost (C,\mu_C)/|C|)^{\alpha/2}} \ .
\end{equation}
We will give more intuition on the parameter $g_\alpha$ in Section \ref{sec:param}.
\item Finally, we need a parameter $\ell$ to control the number of distinct weights of clusters (where the weight of a cluster $C$ is simply equal to $|C|$). Formally, for any integer $i\ge 0$ we let $k_i$ to be the number of clusters of $\COPT$ whose weights lie in the interval $[2^{i},2^{i+1})$. Then we define the following key parameter.
\begin{equation}
\label{eqn:ell_parameter}
    \ell := |\{i\ge 0 \mid k_i>0\}|\ .
\end{equation}
\end{enumerate}

\begin{remark}\label{rem:OPT}
    Note that we can express $\OPT(C)$ for some optimal cluster $C$, in terms of its standard deviation 
    by $\OPT(C)=|C|\sigma_C^2$, and the total cost of the optimal clustering is $\OPT=\sum_{C\in \COPT}|C|\sigma_C^2$.
\end{remark}

Given the aforementioned definitions, the main result that we show in this paper is the following theorem, whose formal proof appears in Section \ref{sec:thmproof}.
\begin{theorem}
\label{thm:main}
For any clustering $(C_1,C_2,\ldots ,C_k)$ of cost $\OPT$, and any $\alpha>2$, the $D^\alpha$ seeding procedure returns a clustering of expected cost at most 
\begin{equation*}
   O\left(f(\alpha)\cdot (g_\alpha)^{2/\alpha}\cdot \left( \frac{\smax}{\smin}\right)^{2-4/\alpha} \cdot \left(\min\{\ell,\log k\}\right)^{2/\alpha}\right)\cdot \OPT\ ,
\end{equation*} 
where $f(\alpha):=\frac{\alpha^2}{(\alpha/2-1)^{2/\alpha+1}}$. In particular, $f(\alpha)=O(\alpha^2)$ as $\alpha\rightarrow +\infty$. 
\end{theorem}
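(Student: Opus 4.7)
The plan is to follow the inductive skeleton of the Arthur--Vassilvitskii analysis, tracking through the $k$ rounds the partition of $\COPT$ into \emph{covered} clusters (those already containing some chosen center) and \emph{uncovered} clusters. The final cost is then bounded by (i) the contribution of each covered cluster, which depends on how good the sampled center inside it is, plus (ii) a residual cost charged to clusters that remained uncovered, which is controlled by a coupon-collector-style argument tailored to the $D^\alpha$ distribution. The weight-class parameter $\ell$ enters through a partition of $\COPT$ into the classes $\{C : |C|\in[2^i,2^{i+1})\}$, so that within a class all cluster sizes are comparable up to a factor of $2$.

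For step (i), fix an uncovered cluster $C$ and condition on sampling $z\in C$ with conditional probability $\propto d(z)^\alpha$, where $d(z):=\min_{c\in Z_t}\|z-c\|$. The identity $\sum_{x\in C}\|x-z\|^2\le 2\,\OPT(C)+2|C|\,\|z-\mu_C\|^2$ reduces the task to bounding $\mathbb{E}[\|z-\mu_C\|^2]$. By the power-mean inequality with exponents $2<\alpha$,
\begin{equation*}
\mathbb{E}\bigl[\|z-\mu_C\|^2\bigr]\;\le\;\bigl(\mathbb{E}\bigl[\|z-\mu_C\|^\alpha\bigr]\bigr)^{2/\alpha},
\end{equation*}
so it suffices to bound the weighted $\alpha$-th moment. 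Jensen's inequality gives $\|z-\mu_C\|^\alpha\le \frac{1}{|C|}\sum_{x\in C}\|z-x\|^\alpha$, which together with the definition of $g_\alpha$ yields $\frac{1}{|C|}\sum_{z\in C}\|z-\mu_C\|^\alpha\le g_\alpha\sigma_C^\alpha$, i.e.\ the \emph{uniform} $\alpha$-th moment is controlled. The remaining step --- converting the uniform average into the $d(z)^\alpha$-weighted average --- will probably require a case split on whether $d(\mu_C)\gg \sigma_C$ (in which case the weights $d(z)^\alpha$ are nearly constant on $C$ by the triangle inequality $d(z)\le d(\mu_C)+\|z-\mu_C\|$, so little is lost relative to the uniform bound) or $d(\mu_C)\lesssim \sigma_C$ (in which case the cluster is already well-approximated by the existing centers and $\cost(C,Z_t)=O(\OPT(C))$ directly). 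Either way, we obtain a bound of the form $\mathbb{E}\bigl[\sum_{x\in C}\|x-z\|^2\bigr]\le O_\alpha\bigl(g_\alpha^{2/\alpha}\bigr)\cdot\OPT(C)$.

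For step (ii), the goal is to show that after $k$ iterations, the expected cost charged to still-uncovered clusters is small. Working class by class, the sampling weight of an uncovered cluster $C$ equals $\sum_{z\in C}d(z)^\alpha$, which we can bound from below in terms of $|C|$ and $\sigma_C$ using the same triangle-inequality style manipulation as in step (i). The factor $(\smax/\smin)^{2-4/\alpha}$ arises here: the denominator of the $D^\alpha$ distribution is dominated by terms coming from clusters of variance close to $\smax$, while the numerator for an uncovered cluster of variance close to $\smin$ is suppressed, producing a raw factor $(\smax/\smin)^{\alpha-2}$ in the probability of missing such a cluster; the final exponent $2-4/\alpha$ on the cost is then exactly what the power-mean conversion from $\alpha$-th to second moments produces. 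The $\min\{\ell,\log k\}^{2/\alpha}$ factor emerges by either (a) a direct union over the $\ell$ classes when $\ell<\log k$, or (b) the classical harmonic-sum argument across the $k$ rounds when $\ell$ is comparable to $\log k$.

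The main obstacle will be executing step (ii) rigorously, because the $D^\alpha$-probabilities couple all clusters: sampling in one cluster changes $d(\cdot)$ values globally, and the normalising denominator $\sum_x d(x)^\alpha$ depends on the entire history. I would handle this by conditioning on ``typical'' trajectories on which the potential $\phi_t:=\cost(\mathcal{X},Z_t)$ stays inside a controlled window, and treating atypical trajectories via Markov/union bounds; this is the step where the $(\smax/\smin)^{2-4/\alpha}$ dependence must be carried over with the correct exponent. A secondary difficulty is extracting the explicit prefactor $f(\alpha)=\alpha^2/(\alpha/2-1)^{2/\alpha+1}$: the $(\alpha/2-1)^{-(1+2/\alpha)}$ blow-up as $\alpha\to 2^+$ naturally emerges from normalising Hölder-type integrals of $d(z)^{\alpha-2}\|z-\mu_C\|^2$, but tracking constants jointly through the within-cluster and cross-cluster steps will require careful bookkeeping.
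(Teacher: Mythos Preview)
Your high-level decomposition into (i) covered-cluster cost and (ii) uncovered-cluster cost is correct, and your treatment of step (i) is close to the paper's Lemma~\ref{lem:cost_hit_clusters_alpha_sampling}: the $g_\alpha$-bound on the uniform $\alpha$-th moment plus a case split on whether the current distance to $C$ is already small is exactly how the paper gets $O_\alpha(g_\alpha^{2/\alpha})\cdot\OPT(C)$. The weight-class partition is also right.

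The genuine gap is in step (ii). You correctly flag the coupling between clusters as the obstacle, but ``conditioning on typical trajectories on which $\phi_t$ stays in a controlled window'' is not a plan, and the specific difficulty is more concrete than you indicate. If you try the natural Dasgupta-style potential $\psi(t)=\frac{W_t}{|U_t|}\cdot \cost(U_t,Z_t)$ (wasted steps times average squared cost of uncovered clusters), the analysis needs $\psi$ to \emph{decrease in expectation} whenever the new center lands in an uncovered cluster. For $D^2$ seeding this is automatic: clusters with large $\cost$ are sampled with higher probability, so the removed term is at least the average. For $D^\alpha$ seeding this fails, because the sampling weights are proportional to $\costa_{t}(C)$ while the potential tracks $\cost_t(C)$, and these two orderings can be reversed (two clusters $C_1,C_2$ can satisfy $\cost(C_1,Z_t)>\cost(C_2,Z_t)$ yet $\costa(C_1,Z_t)<\costa(C_2,Z_t)$). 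Your proposal does not address this mismatch, and the ``lower bound the sampling weight of an uncovered cluster in terms of $|C|$ and $\sigma_C$'' idea does not help either: the relevant distance is to the current centers, not the intrinsic variance.

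The paper's fix is to build the potential out of quantities that are monotone in the \emph{same} ordering as the sampling weights but still scale quadratically: within each weight class $i$ it uses
\[
\phi_i(t)\;=\;\frac{w_i(t)}{|U_t^{(i)}|}\cdot (2^i)^{1-2/\alpha}\cdot \sum_{C\in U_t^{(i)}} \bigl(\costa_t(C)\bigr)^{2/\alpha},
\]
so that (a) when an uncovered cluster is hit, Chebyshev's sum inequality (applied to the comonotone sequences $\costa_t(C)$ and $(\costa_t(C))^{2/\alpha}$) guarantees the expected drop is at least the average, and (b) at the end, Jensen converts $(2^i)^{1-2/\alpha}(\costa)^{2/\alpha}$ back to $\cost$. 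The $(\smax/\smin)^{2-4/\alpha}$ factor then appears not from a probability-of-missing calculation as you sketch, but from comparing $\bigl(\sum_C|C|\sigma_C^\alpha\bigr)^{2/\alpha}$ to $\sum_C|C|\sigma_C^2$ when bounding the global potential increase. Without this (or an equivalent) modification of the tracked quantity, step (ii) as you describe it will not close.
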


An immediate consequence of Theorem \ref{thm:main} is that $D^\alpha$ seeding with $\alpha > 2$ yields a constant-factor approximation guarantee for instances consisting of $k$ simplices of the same radius; a case that includes the described $\Omega(\log k)$ lower bound instances for the standard $D^2$ seeding from~\cite{arthur2007k}. Indeed, one can check that $g_\alpha = 2^\alpha$ in that case. We remark that the above stated guarantees do not require the clusters to be separated (as they are e.g. in the described lower bound instances of $D^2$ seeding).

We complement our results with the following lower bounds, which are fairly intuitive to prove.
 \begin{theorem}
\label{thm:lower_bounds}
There exists an instance with a clustering of cost $\OPT$ such that the $D^\alpha$ seeding procedure returns a clustering of expected cost at least 
\begin{equation*}
   \Omega \left(g_\alpha\right)^{2/\alpha}\cdot \OPT\ ,
\end{equation*} 
and another instance instance with a clustering of cost $\OPT$ such that the $D^\alpha$ seeding procedure returns a clustering of expected cost at least 
\begin{equation*}
   \Omega \left(\frac{\smax}{\smin}\right)^{2-4\alpha}\cdot \OPT\ .
\end{equation*} 
\end{theorem}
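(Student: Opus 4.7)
The plan is to construct two explicit two-cluster instances, one per lower bound, in which with constant probability $D^\alpha$ seeding places both of its two centers inside a single cluster, leaving the other cluster entirely uncovered, and then to quantify the resulting extra cost.

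For the $\Omega((g_\alpha)^{2/\alpha})\cdot\OPT$ lower bound, I would take cluster $C_1$ to consist of $n-1$ points collocated at the origin together with a single outlier at distance $R$ along a direction $e_1$, and cluster $C_2$ to consist of $n$ collocated points placed at distance $L := R/n^{1/\alpha}$ from the origin along an orthogonal direction $e_2$. A direct computation gives $g_\alpha = \Theta(n^{\alpha/2-1})$ (the maximum is attained at the spiky $C_1$) and $\OPT = \Theta(R^2)$ for the reference clustering $\{C_1,C_2\}$. The value of $L$ is precisely the one that, conditioned on the first center falling on the bulk of $C_1$ (an event of probability $(n-1)/(2n) = \Omega(1)$), makes the $\alpha$-weight of the outlier of $C_1$ and the total $\alpha$-weight of $C_2$ both equal to $R^\alpha$. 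Hence with probability at least $1/2$ the second center is the outlier of $C_1$, leaving $C_2$ entirely uncovered and incurring cost at least $nL^2 = R^2\cdot n^{1-2/\alpha} = \Omega((g_\alpha)^{2/\alpha})\cdot\OPT$ in expectation.

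For the $\Omega((\smax/\smin)^{2-4/\alpha})\cdot\OPT$ lower bound I would use an analogous construction in which a ``variance-mismatch'' mechanism plays the role of the ``single outlier'' mechanism above. Take a tight cluster $C_{\min}$ of $m$ points with standard deviation $\sigma_{\min}$ and a loose cluster $C_{\max}$ of $n$ points with standard deviation $\sigma_{\max}=\rho\sigma_{\min}$, placed at distance $L$ apart along a direction orthogonal to the internal spread of $C_{\max}$. The key balance conditions to satisfy simultaneously are: (i) the $\alpha$-weights inside $C_{\max}$ and across to $C_{\min}$ match after one seed is placed in $C_{\max}$, which forces $L\asymp \sigma_{\max}(n/m)^{1/\alpha}$; (ii) the relative number of points is balanced so that the first center lies in $C_{\max}$ with probability $\Omega(1)$; and (iii) the missed-cluster cost $mL^2$ equals $\Theta(\rho^{2-4/\alpha})\cdot\OPT$ for the reference clustering $\{C_{\min},C_{\max}\}$. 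Solving (i)--(iii) fixes the parameters up to constants.

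The main obstacle is to reconcile items (i)--(iii) of the $\smax/\smin$ construction: the natural balance $m/n\asymp\rho^2$ that equates $mL^2$ with $\rho^{2-4/\alpha}\cdot\OPT$ tends to push the first-center probability inside $C_{\max}$ down to order $1/\rho^2$, which can be handled either by replicating the high-variance cluster so that the aggregate probability of double-seeding one copy is $\Omega(1)$, or by slightly perturbing the reference clustering so that the missed-cluster scenario is triggered from within $C_{\min}$ (where the first center is likely to land). Once a consistent parameter choice is exhibited, the remaining verifications---computing $g_\alpha$ and $\smax/\smin$ for the constructed instance, and checking that the reference clustering attains the declared value of $\OPT$---are routine calculations.
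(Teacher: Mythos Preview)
Your $g_\alpha$ construction is essentially the paper's: a spiky cluster (bulk plus one outlier at distance $R$) together with a second cluster at distance $\Theta(R/n^{1/\alpha})$, so that the outlier and the second cluster carry equal $\alpha$-mass after the first seed lands on the bulk. The paper swaps the roles (the first seed lands in a regular simplex and the second seed is pulled to the spiky cluster's outlier), but the mechanism and the arithmetic are the same. One point to clean up: with $C_2$ consisting of $n$ collocated points you have $\sigma_{C_2}=0$, so $\smax/\smin$ is undefined and $g_\alpha$ on $C_2$ is $0/0$; this undermines the claim that the instance isolates the $g_\alpha$ dependence. The paper handles this by making the non-spiky cluster a unit simplex and choosing the outlier distance so that both clusters have the \emph{same} variance; then $\smax/\smin=1$ and $\ell=1$, and only $g_\alpha$ is large.

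Your $\smax/\smin$ construction has a real gap. You correctly diagnose that a two-cluster instance cannot satisfy (i)--(iii) simultaneously (the balance $m/n\asymp\rho^2$ forces the first seed into the wrong cluster), but neither of your proposed fixes resolves it. Replicating the \emph{high}-variance cluster does not help: once you have $K$ far-apart copies of $C_{\max}$ and one $C_{\min}$ with $m\asymp\rho^2 Kn$ points, the first seed lands in $C_{\min}$ with probability $\approx 1$, and the remaining $K$ seeds then cover the $K$ far-away copies, so nothing is missed. ``Triggering the miss from within $C_{\min}$'' runs the computation in reverse and yields a ratio $\rho^{4/\alpha-2}<1$. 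The paper's construction replicates the \emph{low}-variance cluster instead: one large simplex of radius $R$ placed at infinite distance, and $k-1$ unit simplices at pairwise distance $\delta$ with $R^\alpha\asymp k\delta^\alpha$. After the big cluster is hit, its internal $\alpha$-mass $nR^\alpha$ matches the combined $\alpha$-mass $nk\delta^\alpha$ of all undiscovered small clusters at \emph{every} subsequent step, so in expectation $\Omega(k)$ small clusters are never seeded; choosing $R^2=k$ (hence $\delta=k^{1/2-1/\alpha}$) gives expected cost $\Omega(nk\delta^2)=\Omega(k^{1-2/\alpha})\cdot\OPT$ with $\smax/\smin=R=\sqrt{k}$. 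The essential idea you are missing is that a single high-variance cluster can steal not just one but $\Omega(k)$ seeds, which is what makes the $k$-cluster construction work where the two-cluster one cannot.
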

The formal proof of Theorem \ref{thm:lower_bounds} can be found in Appendix \ref{app:rmaxoverrmin} and Appendix \ref{app:galpha}. Finally, as mentioned in introduction, we also prove a lower bound on the greedy variant of \texttt{k-means++}.

\begin{theorem}
\label{thm:greedylowerbound}
There exists an instance with $k$ clusters for which $D^\alpha$ seeding guarantees a constant factor approximation in expectation for any fixed $\alpha>2$, and such that the greedy \texttt{$k$-means++} algorithm with $f(k)$ samples is not better than an $\Omega(\log \log f(k))^2$ approximation in expectation.
\end{theorem}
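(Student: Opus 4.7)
The plan is to exhibit an explicit instance with $k$ clusters on which the natural parameters $g_\alpha$, $\smax/\smin$ and $\ell$ are all $O(1)$, so that Theorem~\ref{thm:main} immediately yields a constant-factor guarantee for $D^\alpha$ seeding for every fixed $\alpha>2$, and then to argue by a direct probabilistic analysis that greedy \texttt{$k$-means++} with $f(k)$ samples per iteration incurs expected cost $\Omega\bigl((\log\log f(k))^2\bigr)\cdot\OPT$ on this instance. The mechanism for the greedy lower bound is the classical failure mode of the greedy rule: among $f(k)$ $D^2$-samples, the candidate that maximizes the \emph{one-shot} cost decrease tends to lie inside an already-discovered, high-mass cluster (where a second center can roughly halve a large internal cost) rather than inside an isolated, still-undiscovered cluster (where the marginal gain is only the relatively small cost contributed by the missed cluster).

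The construction I have in mind partitions $\mathcal{X}$ into two kinds of equal-weight clusters: ``decoy'' clusters, each consisting of many points arranged uniformly in a ball of some constant radius $r$; and ``hidden'' clusters, each consisting of the same number of points concentrated in a ball of radius $\Theta(r)$ but placed in a region far from all other clusters. Taking all clusters with the same weight forces $\ell=O(1)$; choosing the two radii within a constant factor of one another gives $\smax/\smin=O(1)$; and points being uniform in balls gives $g_\alpha=O(1)$. Theorem~\ref{thm:main} then certifies that $D^\alpha$ seeding is an $O_\alpha(1)$-approximation. To obtain the $(\log\log f(k))^2$ lower bound on greedy, I would further organize the hidden clusters into $\Theta(\log\log f(k))$ geometrically increasing separation scales, so that the one-step argument below can be iterated through all scales.

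The heart of the proof is an iteration-by-iteration analysis of greedy on this instance. Conditioning on the current set $Z_t$ of centers, I would prove the key inequality that, as long as the decoy cost still dominates the total cost, with probability $1-o(1)$ the maximum over the $f(k)$ $D^2$-samples of the one-sample cost decrease is attained at a point inside some partially-covered decoy rather than at a point inside a hidden cluster. Calibrating $N$, $r$ and the scales $D_1\ll D_2\ll \dots$ so that splitting a partially-covered decoy always gains more than discovering a hidden cluster at the currently active scale then implies that greedy wastes $\Omega(k)$ iterations refining decoys, leaving in expectation a positive fraction of hidden clusters at each scale undiscovered. Summing the contribution $\Omega(D_i^2)$ of an undiscovered cluster at scale $i$ across $\Theta(\log\log f(k))$ scales, and normalizing by $\OPT$, yields the claimed $\Omega((\log\log f(k))^2)$ factor.

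The main obstacle is maintaining the ``greedy prefers a decoy'' inequality uniformly across iterations and across scales. As more decoys are split, the cost reduction achievable by splitting yet another decoy shrinks, so one must ensure it still exceeds the gain from discovering a hidden cluster at the currently active scale; this is what forces the hierarchical, rather than single-scale, construction, and tuning parameters so that the scales activate sequentially in the right order is expected to be where most of the delicate technical work lives.
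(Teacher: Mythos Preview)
Your proposal has a genuine gap in the core mechanism. You want greedy to ``refine decoys'' rather than ``discover hidden clusters,'' but with equal-weight clusters of radius $\Theta(r)$ and hidden clusters placed far from everything else (say at distance $D$ from the nearest current center), discovering a hidden cluster drops the cost by roughly $|C|D^2$, whereas splitting an already-hit decoy can drop it by at most $O(|C|r^2)$. Since you need $D\gg r$ for a missed hidden cluster to hurt the approximation ratio, greedy will \emph{prefer} the hidden-cluster candidate whenever one appears among the $f(k)$ samples; and if none appears (because the hidden clusters carry negligible $D^2$-mass), then ordinary $D^2$ seeding would fail on the same instance, contradicting the $O(\log k)$ bound for balanced, well-conditioned data. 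Conversely, if you take $D=O(r)$ so that refining a decoy is competitive with discovery, then missing a hidden cluster costs only $O(|C|r^2)=O(\OPT(C))$ and you get no super-constant lower bound. The multi-scale hierarchy does not escape this dichotomy: at every scale the same trade-off forces either $D_i=O(r)$ (cheap to miss) or $D_i\gg r$ (greedy prefers to discover).

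The paper's construction exploits a different failure mode of greedy, one that has nothing to do with missing clusters. It builds $k/4$ well-separated groups of four clusters each; within a group, each cluster is supported on a segment from a vertex of a square to its center $M$, with a truncated exponential density concentrated near the vertex (so $g_\alpha$, $\smax/\smin$, $\ell$ are all $O(1)$, and Theorem~\ref{thm:main} gives the $D^\alpha$ guarantee). The half-diagonal has length $b=\Theta(\log f(k))$, so the tail mass near $M$ is $\Theta(e^{-b})\approx 1/f(k)$, and among $f(k)$ candidates drawn from an undiscovered group, with constant probability at least one lands within $b/100$ of $M$. Such a ``central'' candidate yields a strictly smaller one-step cost for the whole group than any candidate near a vertex (roughly $5b^2$ versus $11b^2/2$), so greedy selects it. But a center near $M$ sits at distance $\Omega(b)$ from the centroid of \emph{its own} cluster, leaving residual cost $\Theta((\log b)^2)=\Theta((\log\log f(k))^2)$ for that cluster while $\OPT$ per cluster is $\Theta(1)$. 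The lower bound thus comes from greedy being lured to \emph{outlier} points that serve several clusters moderately well in one shot, not from greedy over-refining discovered clusters.
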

This highlights that $D^\alpha$ seeding can be superior in theory to the greedy variant. The proof of this last theorem can be found in Appendix \ref{sec:scikit}.

\subsection{Discussion on the parameters}
\label{sec:param}

 As we see, the guarantee of $D^{\alpha}$ seeding as stated in Theorem \ref{thm:main} has a dependence on $g_{\alpha}$, $\smax/\smin$, and $\min\{\ell,\log k\}$. Here we discuss a little more these dependencies.
 \paragraph{The parameter $g_\alpha$.} The ``moment'' condition can be seen as a characterization of the concentration of a cluster of points. For instance, one way that $g_{\alpha}$ could be non-constant is when the cluster has outliers that are still part of the cluster. On the contrary, if our clusters are generated by a Gaussian mixture, then $g_{\alpha}$ (for any constant $\alpha \ge 2$) is a constant (in fact, it is not difficult to compute and see that $g_\alpha\le \alpha^\alpha$ for Gaussian distributions). If we are in the infinite number of samples limit, where each cluster becomes defined by a density function $f$ on some domain $\mathcal D$, then $g_\alpha$ is equal to
 \begin{equation*}\label{eq:galphaint}
     \frac{\int_{\mathcal D} \int_{\mathcal D} ||x-y||^\alpha f(x)f(y)dxdy}{\left(\int_{\mathcal D} ||x-\mu||^2 f(x)dx\right)^{\alpha/2} }\ .
 \end{equation*}
Using a simple triangle inequality, we can see that 
\begin{align*}
     g_\alpha &= \frac{\int_{\mathcal D} \int_{\mathcal D} ||x-y||^\alpha f(x)f(y)dxdy}{\left(\int_{\mathcal D} ||x-\mu||^2 f(x)dx\right)^{\alpha/2} }\\
     &\le \frac{\int_{\mathcal D} \int_{\mathcal D} (||x-\mu||+||\mu-y||)^\alpha f(x)f(y)dxdy}{\left(\int_{\mathcal D} ||x-\mu||^2 f(x)dx\right)^{\alpha/2} }\\
     &\le 2^{\alpha+1}\cdot \frac{\int_{\mathcal D} ||x-\mu||^\alpha f(x)dx}{\left(\int_{\mathcal D} ||x-\mu||^2 f(x)dx\right)^{\alpha/2} }\ ,
 \end{align*}
    which is the $\alpha$-th standardized moment of the distribution \cite{wikipedia} (times $2^{\alpha+1}$). Of course, we loose a factor of $2^{\alpha+1}$ in this simple upperbound, but it will not matter much since only $(g_\alpha)^{2/\alpha}$ shows up in our guarantee. Let us denote by $\hat g_\alpha$ the above rough upperbound on $g_\alpha$. To obtain a better understanding of our guarantees, we give below the value of $(\hat g_\alpha)^{2/\alpha}$ for a few common distributions. W.l.o.g. we re-normalize to assume unit variance of each distribution (i.e. the denominator is equal to $1$ in the definition of $g_\alpha$).
    \begin{enumerate}
        \item Perhaps the most classic distribution is the Gaussian distribution with unit variance. In this case, the standardized moment is equal to $O((\alpha/2)^{\alpha/2})$ thus $(g_{\alpha})^{2/\alpha}$ is $O(\alpha^2)$. Furthermore, for multivariate Gaussians, $g_\alpha=O(\alpha^\alpha d^{\alpha}/2^{d/2})$. Meaning, in higher dimensions $g_\alpha$ decreases rapidly and this is well-supported by our understanding that a Gaussian distribution tends to become tightly concentrated around its mean in higher dimensions \cite{vershynin2020high}. In Remark \ref{rem:Gauss}, we detail how this $g_\alpha$ can be used to obtain the guarantees for mixture of Gaussians (as claimed in the abstract).
        \item For the exponential distribution, which has slightly fatter tails than Gaussian, $\texttt{Exp}(\lambda)$, the $\alpha^{th}$ moment is $O(\alpha!)$ and thus $(g_{\alpha})^{2/\alpha}$ is again $O(\alpha^2)$. 
        \item Now consider for instance, a univariate student-t distribution with degree of freedom $\nu > 0$, has its density function given by, $\frac{\Gamma((\nu+1)/2)}{\sqrt{\pi\nu}\Gamma(\nu/2)}\left(1+\frac{x^2}{\nu}\right)^{-((\nu+1)/2)}$. It is well known that $\alpha^{th}$ moment exists only if $\alpha < \nu$. Thus lower the degree of freedom, the heavier the tail gets, and $g_{\alpha}$ is bounded only when $\alpha < \nu$, in which case it is roughly $O(\nu^{\alpha})$ and thus $(g_{\alpha})^{2/\alpha} = O(\nu^2)$.
    \end{enumerate}

 We show in the Appendix \ref{app:galpha} that the dependency on $g_\alpha$ in Theorem \ref{thm:main} is tight. As a simple example that highlights the intuition, consider the instance given in Figure \ref{fig:G_alpha_example}. The red cluster is drawn from a standard $2$-dimensional Gaussian law. The blue cluster consists of many points highly concentrated at distance $\delta$ from the mean of the red cluster, and one single point at distance $\delta+\Delta$ from the mean of the red cluster. For this blue cluster, $g_\alpha$ will be unbounded when $\Delta$ tends to $\infty$ for any $\alpha>2$. Note that we can choose the parameters in this instance so that (i) both clusters have the same variance and (ii) both clusters have the same number of points so the other parameters do not play a role here. In this situation, there is still $1/2$ probability that the first center is selected in the red cluster (the first center is always chosen uniformly at random), and conditioned on that fact, the $D^\alpha$ seeding (for $\alpha>2$) will give way too much probability to the isolated point in the blue cluster, which is a serious issue. Our lower bound construction in the Appendix \ref{app:galpha} is a simple formalization of this intuition.

 \begin{figure}[h]
     \centering
     \includegraphics{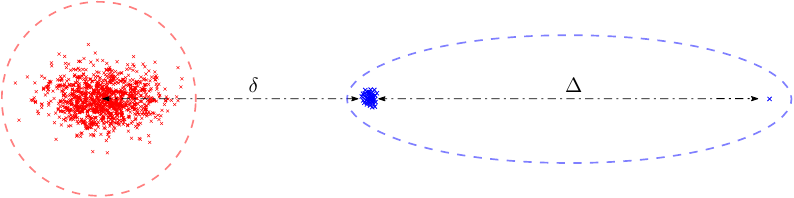}
     \caption{An instance with $k=2$.}
     \label{fig:G_alpha_example}
 \end{figure}
 
 \paragraph{The parameter $\smax/\smin$.} The dependence on $\smax/\smin$ is necessary and in fact tight (see Appendix \ref{app:rmaxoverrmin}). The main issue in sampling with large $\alpha$ is that the algorithm might sample repeatedly from a cluster with large standard deviation more often, and thus it might fail to discover some other clusters.
\paragraph{The parameter $\ell$.} The dependence on $\min\{\ell,\log k\}$ remains an intriguing open problem. Although it seems quite appealing that plugging in $\alpha=2$ we obtain $\min\{\ell,\log k\}^{2/\alpha}=O(\log k)$ as in the worst-case bound for \texttt{$k$-means++}, it is unclear to us if any dependency on $\ell$ or $k$ is needed when $\alpha>2$. Note that for $\alpha$ going to infinity, the parameter $\ell$ should matter less and less since $D^\alpha$ seeding becomes equivalent to picking the furthest point. This behavior is accurately reflected in our bound. 

\begin{remark}\label{rem:Gauss}
    The claim of an $O(\log \log k)^3$ approximation for mixture of Gaussians (as mentioned in the abstract) is now straightforward to see. More formally, suppose the mixture of $k$ Gaussians $\mathcal{X} \sim \sum_{i=1}^{k}w_i\mathcal{N}(\mu_i,\Sigma_i)$, satisfies $\max_{i,j \in [k]}\mbox{tr}(\Sigma_i)/\mbox{tr}(\Sigma_j)=O(1)$, with arbitrary mixing weights $w_i > 0$, for all $i \in [k]$ (w.l.o.g). Now, we can consider our reference clustering to be the points from the mixture (and this is optimal in the infinite sample limit). Then it is easy to see that $\smax/\smin = O(1)$, from the aforementioned trace condition. Furthermore, from the calculation of $g_\alpha$ for Gaussians (refer to \eqref{eq:galphaint}), our approximation guarantee becomes $O(\alpha^2 g_\alpha^{2/\alpha}\log^{2/\alpha} k))=O(\alpha^3\log^{2/\alpha} k)$. Thus, by setting $\alpha=\Theta(\log \log k)$, we get the desired approximation. 
\end{remark}

\subsection{On choosing \texorpdfstring{$\alpha$}{}}
\label{sec:choosing_alpha}
Our theorem states that there is a trade-off in choosing $\alpha$. We already know that $\alpha=2$ may not be the best choice and this is due to the well-clusterable instance of simplices of equal sidelength that are sufficiently far apart. However, Theorem \ref{thm:main} implies that any $\alpha > 2$ is a constant factor approximation in this case, and this is because $D^{\alpha}$ is more aggressive in discovering new clusters. But is it in our best interest to set $\alpha \rightarrow \infty$? Interestingly, Balcan et al. \cite{balcan2018data} show that the best $\alpha$ is learnable, hence selecting the best $\alpha$ is a task that is manageable when there is a training set. Moreover, our theorem predicts a new phenomenon that is not present in the experiments in \cite{balcan2018data}. For a mixture of balanced Gaussians, \cite{balcan2018data} obtain experimental results whose pattern roughly matches the one shown in left-hand side of Figure \ref{fig:exptGaussians}. This experiment corresponds to a Gaussian mixture with the \emph{same} covariance matrix (namely identity). However, our theory indicates that there is a dependence on the variances that is necessary and it appears in the right-hand side of Figure \ref{fig:exptGaussians}, where one of the Gaussians has much larger variance. Note that our approximation factor has dependence $(\smax/\smin)^{2-4/\alpha}$, and to mitigate this effect one can choose some $\alpha$ that is not too large but still greater than 2.

Furthermore, our result suggests an even simpler strategy. Using the training set, one can obtain an estimate of the key parameters $g_\alpha,\smax/\smin,\ell$, and use these estimates to select the best $\alpha$, see for example Figure \ref{fig:MNIST}. Our experiments seem to confirm this strategy. Note that the experiments mentioned here do not use additional steps of Lloyd's algorithm. As it is quite common to use this algorithm after the seeding, we run additional experiments in Section \ref{sec:expts}. Interestingly, we observe that the general pattern does not change much even after running Lloyd's algorithm until convergence.

Even when there is no training set, we might still have an idea of how the clusters should look like (for instance if the data is generated from a mixture of some distributions), and use this information to select the best $\alpha$. In practice, since the $D^{\alpha}$ seeding algorithm is simple and efficient to implement, even if there is only once instance, one can try a few different values of $\alpha$ and the select the best one.

\begin{figure}
    \centering
    \begin{minipage}{0.5\textwidth}
        \centering
        \includegraphics[width=\textwidth]{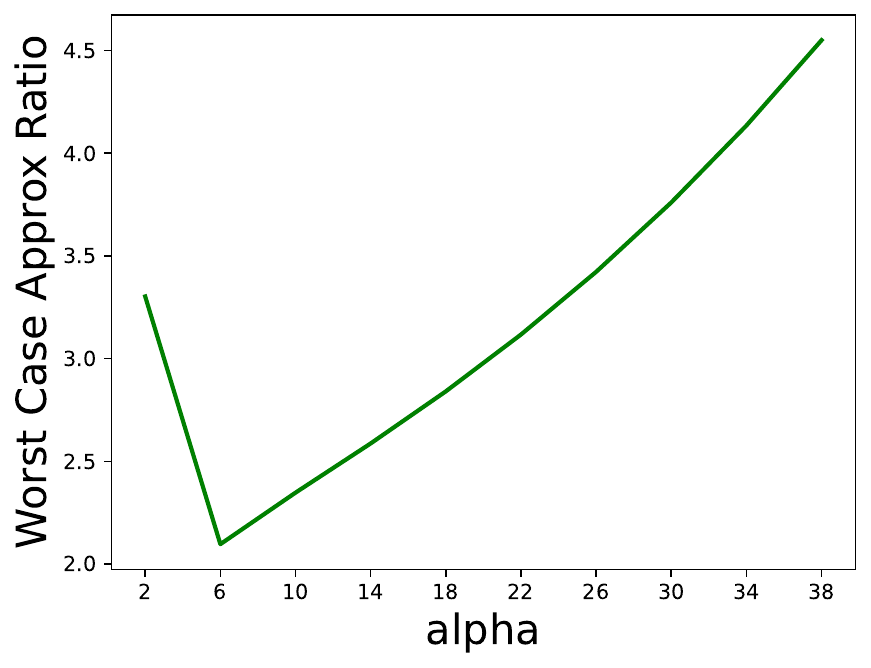} 
    \end{minipage}\hfill
    \begin{minipage}{0.5\textwidth}
        \centering
        \includegraphics[width=\textwidth]{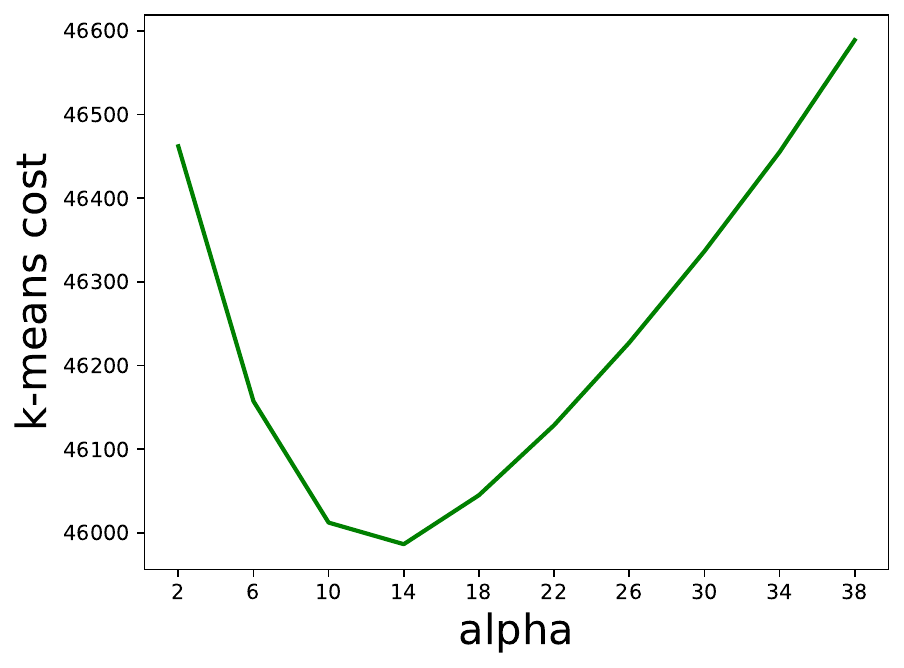} 
    \end{minipage}
    \caption{Theory vs Practice comparison on MNIST data set. The left-hand side figure shows approximation ratio predicted by our theoretical bounds, and the right-hand side figure shows the actual performance of $D^{\alpha}$ seeding.}
    \label{fig:MNIST}
\end{figure}

As a final note, we mention that it is a common wisdom among practitioners that $k$-means is a good objective, except when the clusters might have varying sizes and density, or when there are many outliers \cite{MLCourse}. In this context, ``varying sizes and density'' can clearly be interpreted as the parameters $\ell$ and $\smax/\smin$, while outliers clearly correspond to the parameter $g_\alpha$. If one believes this common wisdom, then our result essentially implies that whenever $k$-means is a good clustering objective, then choosing $\alpha>2$ should be almost always better than $\alpha=2$.

\begin{figure}
    \centering
    \begin{minipage}{0.5\textwidth}
        \centering
        \includegraphics[width=\textwidth]{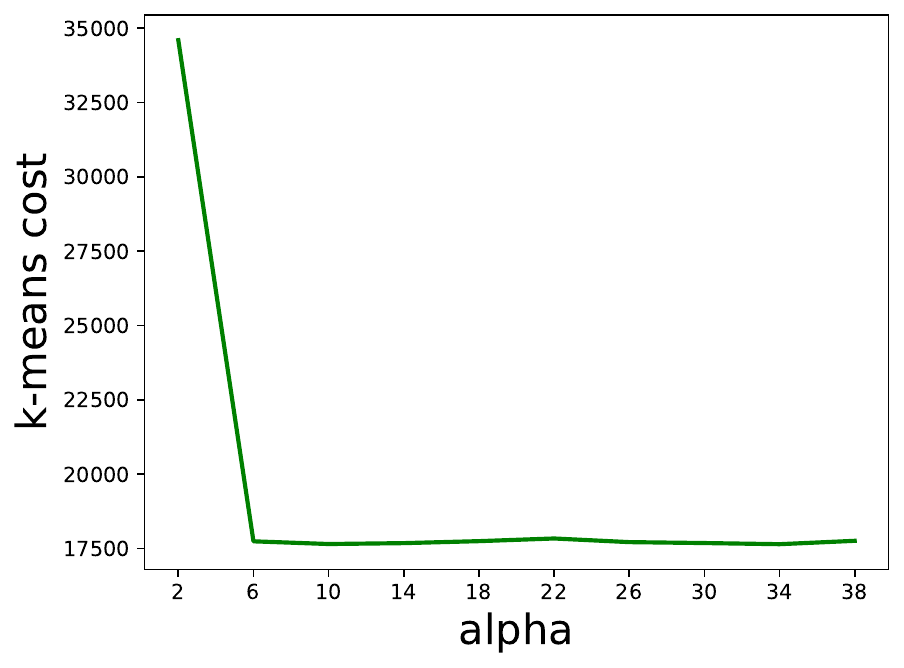} 
    \end{minipage}\hfill
    \begin{minipage}{0.5\textwidth}
        \centering
        \includegraphics[width=\textwidth]{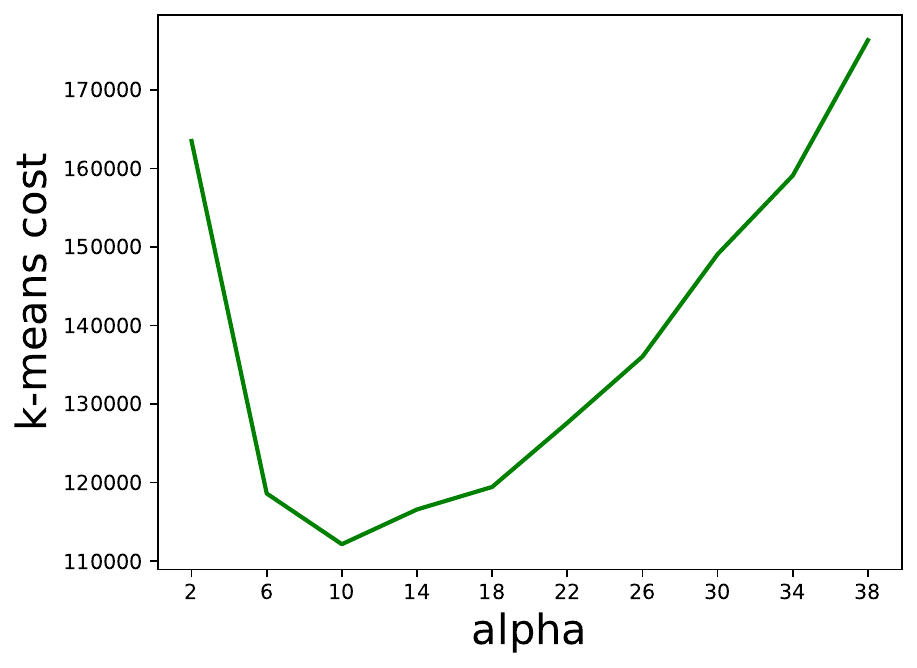} 
    \end{minipage}
    \caption{Performance of $D^{\alpha}$ seeding for two instances $\mathcal{D}_1$ (on the left) and $\mathcal{D}_2$ (on the right), from a balanced mixture of $k$ Gaussians with $k=4$ and $d=2$. The centers/means of the Gaussians for both instances are placed on the vertices of a square of side length $100$. However, the covariance matrices for $\mathcal{D}_1$ is $\{I,I,I,I\}$, whilst the covariance matrices for $\mathcal{D}_2$ is  $\{400I,I,I,I\}$, where $I$ is the identity matrix in two dimensions. }
    \label{fig:exptGaussians}
\end{figure}

\newpage
\section{Proof of Theorem \ref{thm:main}}\label{sec:thmproof}

The proof of Theorem \ref{thm:main} is inspired by a very clean potential function analysis of the $D^2$ seeding algorithm in \cite{sanjoyLN}. In a similar fashion, it is useful to bound the potential increase at each step. However, as we will see later, the potential function that is used for the $D^2$ seeding analysis does not seem to work for $D^{\alpha}$ seeding, and some additional ideas are required. We defer a more detailed discussion of the novelty of our potential function to Appendix \ref{sec:discussion_potential}.

As in \cite{sanjoyLN}, at every iteration $t$, it is useful to keep track of the set of optimal clusters from which a center has already been chosen (i.e. the \textit{hit} clusters) and the complement of this set which is the set of \textit{undiscovered} clusters. Formally, we define $H_t$ to be the set of \textit{hit} clusters after selecting $t$ centers denoted by $Z_t$, i.e. 
\begin{equation*}
    H_t:=\{C\in \COPT: C \cap Z_t \neq \emptyset\}\ .
\end{equation*}
$U_t$ is defined to be the set of remaining \textit{undiscovered} clusters, i.e, $U_t:=\COPT \setminus H_t$. Furthermore, we define $\cost_t (C)$ as a shorthand to denote the cost induced by the points in the cluster $C$, after the set $Z_t$ of $t$ centers are chosen. More formally,
\begin{equation*}
    \cost_t(C) := \cost(C,Z_t)= \sum_{x\in C}\min_{z\in Z_t}\lVert x-z \rVert^2\ .
\end{equation*}
Since we analyze the $D^\alpha$ seeding, we also need to work with the $\alpha$-cost:
\begin{equation*}
    \costa_t(C) :=\costa_t(C,Z_t)= \sum_{x\in C}\min_{z\in Z_t}\lVert x-z \rVert^{\alpha}\ .
\end{equation*}
For any set $S$ of clusters, we define $\cost_t(S):=\sum_{C\in S}\cost_t(C)$, and $\costa_t (S)=\sum_{C\in S}\costa_t(C)$. Moreover we can talk about the cost of a single point at iteration $t$ as $\cost_t(x):=\min_{z \in Z_t}\lVert x-z\rVert^2$.

\paragraph{The potential function.} Now we proceed to define our potential function which will be used to upperbound the cost of undiscovered clusters. For each $i\ge 0$, we define $S_i$ to be the set of clusters in $C\in \COPT$ such that $|C|$ lies in the interval $[2^i,2^{i+1})$ (recall that we also defined $k_i:=|S_i|$). For each $i\ge 0, t\ge 0$, we define an integer $\tau_i(t)\ge 0$ which will be a local counter, relevant only for the clusters in $S_i$ at iteration $t$. $z_t$ is defined to be the center selected at iteration $t$, and $U_t^{(i)}:=U_t\cap S_i$ the set of undiscovered clusters in $S_i$. For each $i\ge 0$ and time $t$, we will define an integer $w_i(t)\ge 0$ corresponding to the number of iterations that are considered \textit{wasted} by the clusters in $S_i$ at time $t$. We use the word \textit{wasted} to follow the intuition given in \cite{sanjoyLN} where an iteration $t$ is wasted when the selected center $z_t$ belongs to an already discovered center (in particular this new center does not discover a new cluster).

\paragraph{Some intuition.} In our potential function, the counter $\tau_i$ will intuitively count how many iterations were relevant to the set of clusters $S_i$. Once $\tau_i$ reaches the value $k_i$, we will consider that the set $S_i$ was given enough tries to cover its clusters. Of course, during the sampling, it might be that some cluster is hit twice, which will increase the wasted counter of \textit{all} the counters $(\tau_j)_{j\ge 0}$. In light of the proof in \cite{sanjoyLN}, it might be counter intuitive that when a cluster in $S_i$ is hit twice then the wasted counter $w_j$ increases even if $j\neq i$. However, it will become apparent later why we proceed like this.\\

Formally, we initialize $t=0$, $w_i(0)=0$ for all $i$, and $\tau_i(0)=0$ for all $i$. Then, we maintain these quantities as follows. At time $t\ge 0$, let $C$ be the cluster the next center $z_t$ is chosen from. Let $i\ge 0$ be the integer such that $C\in S_i$. Then, there are two cases. 

\begin{enumerate}
    \item If $C\in U_t^{(i)}$, then we set (for all $j\ge 0$)
    \begin{equation*}
    \tau_j(t+1)=
    \begin{cases}
        \tau_j(t)+1 \mbox{ if $j=i$ and $\tau_j(t)<k_j$}\\
        \tau_j(t) \mbox{ otherwise.}
    \end{cases}
\end{equation*}
and $w_j(t+1)=w_j(t) $ for all $j\ge 0$.
\item Otherwise if $C\in H_t$, then we set (for all $j\ge 0$)
\begin{equation*}
    \tau_j(t+1)=
    \begin{cases}
        \tau_j(t)+1 \mbox{ if $\tau_j(t)<k_j$}\\
        \tau_j(t) \mbox{ if $\tau_j(t)\ge k_j$ ,}\\
    \end{cases}
\end{equation*}
and 
\begin{equation*}
    w_j(t+1)=
    \begin{cases}
        w_j(t)+1 \mbox{ if $\tau_j(t)<k_j$}\\
        w_j(t) \mbox{ if $\tau_j(t)\ge k_j$ .}\\
    \end{cases}
\end{equation*}
\end{enumerate}
Based on these quantities, we define the potential function as follows. First, we define
\begin{equation}
    \phi_i(t) := \frac{w_i(t)}{|U_t^{(i)}|} \cdot (2^i)^{1-2/\alpha} \cdot \sum_{C\in U_t^{(i)}} (\costa_t (C))^{2/\alpha}\ .
\end{equation}
The final potential function can now be defined as 
\begin{equation}
    \phi(t) := \sum_{i\ge 0} \phi_i(t)\ .
\end{equation}

Using this potential, we are ready to proceed with the main proof. Akin to Dasgupta's analysis \cite{sanjoyLN}, we split the proof in three main parts. In Section \ref{sec:potential_and_final_cost}, we show that $\phi(k)$ is indeed an upper bound on the final cost of undiscovered clusters. In Section \ref{sec:hit_clusters}, we upper bound the cost of hit clusters using $g_\alpha$. In Section \ref{sec:increase_potential} we upper-bound the increase of the potential function. Finally, we complete the puzzle in Section \ref{sec:wrapping_it_up}.

\subsection{Relating the potential and the cost of undiscovered clusters}
\label{sec:potential_and_final_cost}
This part is fairly straightforward, using a few lemmas.
\begin{lemma}
\label{lem:number_of_tries}
    For all $i\ge 0$, we have that $w_i(k)\ge|U_k^{(i)}|$.
\end{lemma}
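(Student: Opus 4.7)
The plan is to prove this in two short steps: first show that every counter $\tau_i$ must reach its cap $k_i$ by the end of the process, and then use this to extract a lower bound on $w_i(k)$.

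For the first step, inspect the update rules: $\tau_i$ fails to increment at iteration $t$ exactly when either (a) $\tau_i(t) \ge k_i$, or (b) the newly selected center lies in an undiscovered cluster $C \in S_j$ for some $j \ne i$. Suppose for contradiction that $\tau_i(k) < k_i$; then situation (a) never occurs, so $\tau_i(k)$ counts precisely the iterations of type ``case 1 with $C \in S_i$'' plus the iterations of type ``case 2''. Writing $h_j := |H_k^{(j)}| = k_j - |U_k^{(j)}|$ for the number of clusters of $S_j$ discovered over the full run, the number of case-1 iterations with $C \in S_i$ is exactly $h_i$, and the number of case-2 iterations is $k - \sum_j h_j$. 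Since $\sum_j k_j = k$ and $h_j \le k_j$ for every $j$,
\[
  \tau_i(k) \;=\; h_i + \Big(k - \sum_{j} h_j\Big) \;=\; k - \sum_{j \ne i} h_j \;\ge\; k - \sum_{j \ne i} k_j \;=\; k_i,
\]
contradicting $\tau_i(k) < k_i$. Hence $\tau_i(k) = k_i$ in every run.

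For the second step, let $T_i$ be the first iteration at which $\tau_i$ attains $k_i$. Decomposing the increments to $\tau_i$ up to time $T_i$ by type gives $k_i = \tau_i(T_i) = a + w_i(T_i)$, where $a$ is the number of fresh discoveries in $S_i$ happening in the first $T_i$ iterations; trivially $a \le h_i$ since $h_i$ is the total number of fresh discoveries in $S_i$ over the whole run. Therefore $w_i(k) \ge w_i(T_i) = k_i - a \ge k_i - h_i = |U_k^{(i)}|$, which is the claimed inequality. Everything except the counting inequality in the first step is immediate from the update rules, so the only genuine obstacle is making sure the bookkeeping of case-1 vs.\ case-2 contributions to $\tau_i$ is done correctly; the double-counting trick $k - \sum_{j\ne i} h_j \ge k_i$ is what ties the lemma together.
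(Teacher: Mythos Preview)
Your proof is correct and follows essentially the same approach as the paper: both hinge on the counting observation that $\tau_i$ can fail to increment only when a fresh discovery is made in some $S_j$ with $j\ne i$, which happens at most $\sum_{j\ne i}k_j$ times, forcing $\tau_i$ to reach $k_i$. The paper packages the second step slightly differently by tracking the auxiliary quantity $\delta_i(t)=|U_t^{(i)}|-w_i(t)$ and observing it drops by one at each $\tau_i$-increment, but this is equivalent to your decomposition $\tau_i(T_i)=a+w_i(T_i)$ together with $a\le h_i$.
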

\begin{proof}
    Consider the quantity $\delta_i(t):=|U_t^{(i)}|-w_i(t)$. Clearly, we have that $\delta_i(0)=k_i$. Next, notice that for every time-step $t<k$ such that $\tau_i(t+1)=\tau_i(t)+1$, the quantity $\delta_i(t)$ decreases by $1$. Indeed, either the algorithm discovers a new cluster in $|U_t^{(i)}|$ (in which case $|U_{t+1}^{(i)}|=|U_t^{(i)}|-1$), or the algorithm wastes an iteration in some cluster, in which case $w_i(t+1)=w_i(t)+1$. 
    
    Finally, we claim that there are at least $k_i$ such iterations. Note that if $\tau_i(t)<k_i$, then the only way we have that $\tau_i$ does not increase is if the algorithm discovers a new cluster in $|U_t^{(j)}|$ for some $j\neq i$. This can happen at most $\sum_{j\neq i} k_j$ times. Hence there must be at least $k-\sum_{j\neq i} k_j=k_i$ iterations where the counter $\tau_i$ increases. If we denote by $t_i$ the first time at which $\tau_i(t_i)=k_i$, this implies $t_i\le k$ and that that $\delta_i(t_i)=0$ hence $w_i(t_i)=|U_{t_i}^{(i)}|$. Finally, note that for $t>t_i$, $w_i(t)$ does not change anymore, and $|U_t^{(i)}|$ can only decrease; which concludes the proof.
\end{proof}

\begin{lemma}
\label{lem:potential_upper_bound}
    We have that $\phi(k) \ge \sum_{i\ge 0}\cost (U_k^{(i)},Z_k)/2=\cost_k (U_k)/2$.
\end{lemma}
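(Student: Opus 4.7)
The plan is to bound $\phi_i(k)$ from below by $\frac{1}{2}\cost_k(U_k^{(i)})$ and then sum over $i$. Two ingredients feed into this. First, Lemma~\ref{lem:number_of_tries} gives $w_i(k)\ge |U_k^{(i)}|$, so the prefactor $w_i(k)/|U_k^{(i)}|$ in the definition of $\phi_i(k)$ is at least $1$ (the case $|U_k^{(i)}|=0$ is trivial since both sides are zero). Second, I need a pointwise comparison on each undiscovered cluster $C\in U_k^{(i)}$ of the form $(2^i)^{1-2/\alpha}\cdot (\costa_k(C))^{2/\alpha}\ge \tfrac12\cost_k(C)$.

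For that comparison, I would apply Hölder's inequality with conjugate exponents $p=\alpha/2$ and $q=\alpha/(\alpha-2)$ to the sum defining $\cost_k(C)$: writing $d_x:=\min_{z\in Z_k}\|x-z\|$,
\begin{equation*}
\cost_k(C)=\sum_{x\in C} d_x^2\cdot 1 \;\le\; \Bigl(\sum_{x\in C} d_x^{\alpha}\Bigr)^{2/\alpha}\cdot |C|^{1-2/\alpha} = (\costa_k(C))^{2/\alpha}\cdot |C|^{1-2/\alpha}.
\end{equation*}
Since $C\in S_i$ means $|C|<2^{i+1}$ and $1-2/\alpha>0$ for $\alpha>2$, this yields $(\costa_k(C))^{2/\alpha}\ge \cost_k(C)\cdot (2^{i+1})^{-(1-2/\alpha)}$. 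Multiplying by $(2^i)^{1-2/\alpha}$ gives $(2^i)^{1-2/\alpha}(\costa_k(C))^{2/\alpha}\ge 2^{2/\alpha-1}\cost_k(C)$, and $2^{2/\alpha-1}\ge 1/2$ for every $\alpha>2$ (with equality in the limit $\alpha\to\infty$). This is exactly the constant factor $1/2$ that appears in the statement, which is reassuring.

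Putting the pieces together, I sum the pointwise inequality over $C\in U_k^{(i)}$, multiply by $w_i(k)/|U_k^{(i)}|\ge 1$, and obtain $\phi_i(k)\ge \tfrac12\cost_k(U_k^{(i)})$. Summing over $i\ge 0$ and using $\cost_k(U_k)=\sum_i \cost_k(U_k^{(i)})$ gives $\phi(k)\ge \tfrac12\cost_k(U_k)$, as required.

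The step I expect to be genuinely delicate, as opposed to routine, is precisely the interplay between the factor $(2^i)^{1-2/\alpha}$ baked into $\phi_i$ and the bucket size $|C|\in[2^i,2^{i+1})$: the whole reason the potential is weighted by this particular power of the bucket size is so that the Hölder step leaves only a universal constant. The rest is straightforward manipulation.
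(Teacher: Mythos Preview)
Your proof is correct and essentially identical to the paper's own argument. The paper also uses Lemma~\ref{lem:number_of_tries} to kill the prefactor $w_i(k)/|U_k^{(i)}|$, then invokes the same power-mean/H\"older bound $|C|^{1-2/\alpha}(\costa_k(C))^{2/\alpha}\ge \cost_k(C)$ together with $|C|\le 2^{i+1}$ to extract the factor $1/2$; the only cosmetic difference is that the paper phrases the inequality as ``Jensen'' (their Lemma~\ref{lem:powermeans}) rather than H\"older.
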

\begin{proof}
    Using Lemma \ref{lem:number_of_tries}, we have that 
    \begin{equation*}
        \phi(k)=\sum_{i\ge 0}\phi_i(k)\ge (1/2)\cdot (2^{i+1})^{1-2/\alpha} \cdot \sum_{C\in U_k^{(i)}} (\costa (C,Z_k))^{2/\alpha}\ .
    \end{equation*}
    Using Jensen's inequality (see Appendix \ref{sec:inequalities} for a reference) and the fact that $|C|\le 2^{i+1}$ for all $C\in U_k^{(i)}$, we have that 
    \begin{equation*}
        (2^{i+1})^{1-2/\alpha}(\costa (C,Z_k))^{2/\alpha} \ge \cost_k(C)
    \end{equation*}
    for all $C\in U_k^{(i)}$, which concludes the proof.
\end{proof}

\subsection{The cost of hit clusters}
\label{sec:hit_clusters}
In this section, we give upper bounds on the expected cost of clusters that were hit during the seeding process. These proofs are similar to the ones found in \cite{sanjoyLN, arthur2007k}. In fact, the first two lemmas are taken directly from \cite{arthur2007k}.
\begin{lemma}[From \cite{arthur2007k}]
\label{lem:arthur_cost_hit_alpha_sampling}
Assume some arbitrary set $T$ of centers have already been selected, and $z\in C$ is added next using $D^\alpha$-sampling. Then,
    \begin{equation*}
        \mathbb  E\left[ \costa(C,T\cup \{z\})\mid z\in C\right]\le 2^{2\alpha}\cdot   \costa(C,\mu_C) \ .
    \end{equation*}
\end{lemma}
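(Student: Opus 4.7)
The lemma is the classical Arthur–Vassilvitskii bound generalized from $\alpha=2$ to arbitrary $\alpha\ge 1$. I mirror the original argument: unfold the conditional expectation, invoke the triangle inequality to route distances through the centroid $\mu_C$, and absorb the loss from the power-mean inequality $(a+b)^\alpha\le 2^{\alpha-1}(a^\alpha+b^\alpha)$ into the final constant. Let $D(x):=\min_{t\in T}\|x-t\|$, so that $\costa(C,T)=\sum_{x\in C}D(x)^\alpha$. Conditional on $z\in C$, $z$ is sampled with probability $D(z)^\alpha/\costa(C,T)$, and once $z$ is added each $x\in C$ contributes at most $\|x-z\|^\alpha$ to the new cost, giving
\begin{equation*}
    \mathbb{E}\!\left[\costa(C,T\cup\{z\})\mid z\in C\right] \;\le\; \frac{1}{\costa(C,T)}\sum_{z\in C}D(z)^\alpha\sum_{x\in C}\|x-z\|^\alpha.
\end{equation*}

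\textbf{Two triangle-plus-power-mean steps.} First, applying $\|x-z\|\le \|x-\mu_C\|+\|\mu_C-z\|$ with the power-mean inequality and summing over $x\in C$ yields
\begin{equation*}
    \sum_{x\in C}\|x-z\|^\alpha \;\le\; 2^{\alpha-1}\costa(C,\mu_C) + 2^{\alpha-1}|C|\,\|\mu_C-z\|^\alpha.
\end{equation*}
Plugging this in splits the expectation into a clean term $2^{\alpha-1}\costa(C,\mu_C)$ and a residual proportional to $\sum_z D(z)^\alpha\|\mu_C-z\|^\alpha/\costa(C,T)$. Second, to control this residual I bound the bias of the $D^\alpha$ distribution: triangle-plus-power-mean on $D(z)\le D(y)+\|y-z\|$, averaged over the anchor $y\in C$, gives
\begin{equation*}
    D(z)^\alpha \;\le\; \frac{2^{\alpha-1}}{|C|}\bigl(\costa(C,T)+\costa(C,z)\bigr).
\end{equation*}
Substituting this into the residual, using the previous centroid bound on $\costa(C,z)$ once more, and simplifying with the identity $\sum_z\|\mu_C-z\|^\alpha=\costa(C,\mu_C)$, the denominator $\costa(C,T)$ cancels and the residual collapses into a bounded multiple of $\costa(C,\mu_C)$. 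Collecting the accumulated factors of $2^{\alpha-1}$ from the two power-mean applications together with the constant factor from the ``cost-vs-probability'' swap (which already costs a factor of $2$ in the $\alpha=2$ case that gives A–V's tight constant $8$), the final bound lands at $2^{2\alpha}\costa(C,\mu_C)$.

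\textbf{Main obstacle.} The key technical difficulty is that the exact Pythagorean identity $\sum_{x\in C}\|x-z\|^2=\cost(C,\mu_C)+|C|\,\|\mu_C-z\|^2$ that drives the $\alpha=2$ analysis of Arthur–Vassilvitskii is no longer an equality when $\alpha\ne 2$; only the two-sided inequality with a prefactor of $2^{\alpha-1}$ is available. The delicate part is therefore the bookkeeping: keeping track of which $2^{\alpha-1}$ factors arise from which triangle step, and verifying that after the substitutions all terms proportional to $\costa(C,T)$ cancel cleanly against the denominator, leaving only a bounded multiple of $\costa(C,\mu_C)$. Since we are satisfied with the (non-tight) constant $2^{2\alpha}$ rather than the sharper $8$ that A–V obtain at $\alpha=2$, we do not need any further refinement beyond these two invocations of triangle-plus-power-mean.
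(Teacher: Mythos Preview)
Your argument has a genuine gap at the very first step. By discarding the minimum and bounding $\costa(C,T\cup\{z\})\le\sum_{x\in C}\|x-z\|^\alpha$, you lose too much to ever recover the stated constant. Concretely, take $C$ to consist of $n-1$ points located at $\mu_C$ together with a single point $z^*$ at distance $R$ from $\mu_C$, and let $T=\{\mu_C\}$. Then $\costa(C,T)=R^\alpha=\costa(C,\mu_C)$, $D(z)=0$ for $z\neq z^*$, $D(z^*)=R$, and your first displayed upper bound evaluates to
\[
\frac{1}{\costa(C,T)}\sum_{z\in C}D(z)^\alpha\sum_{x\in C}\|x-z\|^\alpha \;=\; \frac{1}{R^\alpha}\cdot R^\alpha\cdot (n-1)R^\alpha \;=\; (n-1)\,\costa(C,\mu_C),
\]
which exceeds $2^{2\alpha}\costa(C,\mu_C)$ for large $n$. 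So no subsequent manipulation of this quantity can yield the lemma; in particular your claim that ``the denominator $\costa(C,T)$ cancels'' in the residual is incorrect, since after your substitution the leftover term $\tfrac{1}{\costa(C,T)}\sum_z\costa(C,z)\,\|\mu_C-z\|^\alpha$ contains no factor of $\costa(C,T)$.

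The fix, and this is exactly how Arthur--Vassilvitskii argue, is to retain the minimum:
\[
\mathbb{E}\bigl[\costa(C,T\cup\{z\})\mid z\in C\bigr] \;=\; \sum_{z\in C}\frac{D(z)^\alpha}{\costa(C,T)}\sum_{x\in C}\min\bigl(D(x),\|x-z\|\bigr)^\alpha,
\]
and then apply your \emph{second} triangle step $D(z)^\alpha\le\tfrac{2^{\alpha-1}}{|C|}\bigl(\costa(C,T)+\costa(C,z)\bigr)$ first. This splits the expectation into two sums. In the one weighted by $\costa(C,T)$, bound the minimum by $\|x-z\|^\alpha$; in the one weighted by $\costa(C,z)$, bound the minimum by $D(x)^\alpha$, producing a factor $\costa(C,T)$ in the numerator that cancels the denominator. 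Both pieces are then exactly $\tfrac{2^{\alpha-1}}{|C|}\sum_{z\in C}\costa(C,z)$, and Lemma~\ref{lem:arthur_cost_hit_uniform_sampling} finishes with the constant $2^{2\alpha}$. (The paper itself supplies no proof here; it simply cites \cite{arthur2007k}.)
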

\begin{lemma}[From \cite{arthur2007k}]
\label{lem:arthur_cost_hit_uniform_sampling}
Assume some point $z$ is selected uniformly at random among the points belonging to some cluster $C$. Then, for any $\alpha\ge 2$,
    \begin{equation*}
        \mathbb  E\left[ \costa(C,z)\right]\le 2^{\alpha}\cdot   \costa(C,\mu_C) \ .
    \end{equation*}
\end{lemma}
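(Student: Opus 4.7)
The plan is to reduce each pairwise distance inside $C$ to distances to the centroid $\mu_C$ via the triangle inequality, and then exploit the fact that uniformly averaging $\lVert z-\mu_C\rVert^\alpha$ over $z \in C$ gives back $\costa(C,\mu_C)/|C|$ exactly. This way the two terms produced by the triangle inequality end up being the same quantity, which is what makes the constant work out cleanly.

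Concretely, I would start from the triangle inequality $\lVert x-z\rVert \le \lVert x-\mu_C\rVert + \lVert z-\mu_C\rVert$ and raise both sides to the power $\alpha$. Since $t \mapsto t^\alpha$ is convex for $\alpha \ge 1$, the power-mean inequality gives $(a+b)^\alpha \le 2^{\alpha-1}(a^\alpha+b^\alpha)$, so for all $x, z \in C$,
\begin{equation*}
\lVert x-z \rVert^\alpha \;\le\; 2^{\alpha-1}\bigl(\lVert x-\mu_C\rVert^\alpha + \lVert z-\mu_C\rVert^\alpha\bigr).
\end{equation*}
Summing over $x \in C$ yields the pointwise bound
\begin{equation*}
\costa(C,z) \;\le\; 2^{\alpha-1}\bigl(\costa(C,\mu_C) + |C|\cdot \lVert z-\mu_C\rVert^\alpha\bigr).
\end{equation*}

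Next I would take expectation over $z$ drawn uniformly from $C$. The key identity is
\begin{equation*}
\mathbb{E}_z\bigl[\lVert z-\mu_C\rVert^\alpha\bigr] \;=\; \frac{1}{|C|}\sum_{z\in C}\lVert z-\mu_C\rVert^\alpha \;=\; \frac{\costa(C,\mu_C)}{|C|},
\end{equation*}
so the $|C|\cdot\lVert z-\mu_C\rVert^\alpha$ term integrates to exactly $\costa(C,\mu_C)$. The two contributions inside the bracket therefore sum to $2\cdot\costa(C,\mu_C)$, and the prefactor $2^{\alpha-1}$ outside produces the claimed bound $2^\alpha\cdot\costa(C,\mu_C)$.

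There is no real obstacle: the argument is one triangle inequality, one convexity bound, and one line of expectation. The only small choice to flag is the use of the convex-combination bound $(a+b)^\alpha \le 2^{\alpha-1}(a^\alpha+b^\alpha)$, which is what delivers precisely the constant $2^\alpha$ stated in the lemma and remains valid for every $\alpha \ge 2$. (For the special case $\alpha=2$ one could instead invoke the parallelogram identity $\sum_{x\in C}\lVert x-z\rVert^2 = \costa(C,\mu_C) + |C|\lVert z-\mu_C\rVert^2$ to save a factor of $2$, but since the lemma targets general $\alpha$ this refinement is unnecessary.)
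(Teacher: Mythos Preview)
Your proof is correct and is exactly the standard argument: the paper itself does not prove this lemma but cites it from Arthur and Vassilvitskii~\cite{arthur2007k}, whose proof is precisely the triangle inequality plus the power-mean bound $(a+b)^\alpha \le 2^{\alpha-1}(a^\alpha+b^\alpha)$ followed by averaging over $z$, just as you wrote.
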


The next lemma deals with the expected squared cost of the hit clusters during the seeding process.
\begin{lemma}
\label{lem:cost_hit_clusters_alpha_sampling}
    Assume some arbitrary set $T$ of centers have already been selected, and $z\in C$ is added next using $D^\alpha$ seeding. Then,
    \begin{equation*}
        \mathbb  E\left[ \cost(C,T\cup \{z\})\mid z\in C\right]\le (4e+(\alpha+1)^2\cdot (g_\alpha)^{2/\alpha} )\cdot  \cost(C,\mu_C) \ .
    \end{equation*}
\end{lemma}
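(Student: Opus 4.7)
The plan is first to reduce the cost bound to a control on the $\alpha$-th moment $\mathbb{E}[\|z-\mu_C\|^\alpha\mid z\in C]$, and then to prove a two-case bound on this moment that mirrors the additive structure of the claim. Since adding $z$ to $T$ can only decrease the cost, $\cost(C,T\cup\{z\})\le \cost(C,z)$, and the parallel-axis identity yields $\cost(C,z)=\cost(C,\mu_C)+|C|\,\|z-\mu_C\|^2$. After taking expectations, the lemma reduces to bounding $|C|\cdot\mathbb{E}[\|z-\mu_C\|^2\mid z\in C]$ by $O(4e+(\alpha+1)^2 g_\alpha^{2/\alpha})\cdot\cost(C,\mu_C)$. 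Since $x\mapsto x^{\alpha/2}$ is convex for $\alpha\ge 2$, Jensen's inequality gives $\mathbb{E}[\|z-\mu_C\|^2\mid z\in C]\le \mathbb{E}[\|z-\mu_C\|^\alpha\mid z\in C]^{2/\alpha}$, so the task becomes upper-bounding $M:=\costa(C,T)^{-1}\sum_{z\in C}r_T(z)^\alpha\|z-\mu_C\|^\alpha$, where $r_T(z):=\min_{c\in T}\|z-c\|$ (so $r_T(z)^\alpha$ is the unnormalized $D^\alpha$ sampling weight).

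To bound $M$, I would split cases based on whether $C$ is already ``well covered'' by $T$, using the natural scale $|C|\sigma_C^\alpha$. If $\costa(C,T)\le K\cdot|C|\sigma_C^\alpha$ for a constant $K$ tuned so that $K^{2/\alpha}=4e$, then the power-mean inequality gives $\cost(C,T)\le |C|^{1-2/\alpha}\costa(C,T)^{2/\alpha}\le K^{2/\alpha}\cost(C,\mu_C)=4e\cdot\cost(C,\mu_C)$, and the trivial monotonicity bound $\cost(C,T\cup\{z\})\le \cost(C,T)$ yields the $4e$ term directly. In the complementary regime $\costa(C,T)>K\cdot|C|\sigma_C^\alpha$, the function $r_T$ is ``typically large'' across $C$, so the conditional $D^\alpha$ distribution is close to uniform (up to a distortion controlled by the $1$-Lipschitz property of $r_T$ and the inequality $r_T(z)\le r_T(\mu_C)+\|z-\mu_C\|$, with $r_T(\mu_C):=\min_{c\in T}\|c-\mu_C\|$). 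Combining this near-uniformity with the moment bound $\sum_{z\in C}\|z-\mu_C\|^\alpha\le g_\alpha|C|\sigma_C^\alpha$ --- which follows from the definition of $g_\alpha$ via Jensen applied to $\|\mu_C-z\|^\alpha\le \tfrac{1}{|C|}\sum_{x\in C}\|x-z\|^\alpha$ --- would give $M\le O\bigl((\alpha+1)^\alpha g_\alpha\sigma_C^\alpha\bigr)$, and taking the $2/\alpha$-th power produces the target $(\alpha+1)^2 g_\alpha^{2/\alpha}$ contribution.

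The main obstacle is the second regime: the naive bound $(a+b)^\alpha\le 2^{\alpha-1}(a^\alpha+b^\alpha)$ would produce a factor exponential in $\alpha$, which is incompatible with the polynomial $(\alpha+1)^2$ dependence in the target. To avoid this blow-up I would rely on Minkowski's inequality in $L^\alpha(C)$ (so that the triangle inequality on norms is used without any multiplicative loss) combined with the Lipschitz control on $r_T$: when $r_T(\mu_C)$ dominates $\sigma_C$, the ratio $r_T(z)/r_T(\mu_C)$ lies in $1\pm \|z-\mu_C\|/r_T(\mu_C)$ pointwise, and a binomial-style expansion (with roughly $\alpha+1$ dominant cross terms, each controllable by $g_\alpha$) bounds the $D^\alpha$-biased $\alpha$-moment by the unweighted one at only an $(\alpha+1)^\alpha$ loss. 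Finally, balancing the threshold $K$ across the two regimes is routine bookkeeping, as the two bounds match up to a constant factor at the chosen threshold.
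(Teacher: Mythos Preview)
Your Case~1 is fine, but Case~2 has a genuine gap: the target bound $M\le O\bigl((\alpha+1)^\alpha g_\alpha\sigma_C^\alpha\bigr)$ is simply false in general, and the ``near-uniformity'' heuristic cannot be made rigorous with only an $\alpha$-th moment assumption. Concretely, take $C$ with $n$ points: one outlier $z^*$ at distance $\approx n$ from $\mu_C$ and $n-1$ points at distance $1$, and let $T=\{\mu_C\}$. Then $\costa(C,T)\approx n^\alpha\gg K|C|\sigma_C^\alpha\approx Kn^{1+\alpha/2}$, so you are in Case~2; the $D^\alpha$ mass concentrates entirely on $z^*$, giving $M\approx n^\alpha$; but $g_\alpha\sigma_C^\alpha\approx n^{\alpha-1}$, so $M/(g_\alpha\sigma_C^\alpha)\approx n$ is unbounded. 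More structurally, once you pass through Jensen to $M=\mathbb E_{D^\alpha}[\|z-\mu_C\|^\alpha]$, the numerator is $\sum_z r_T(z)^\alpha\|z-\mu_C\|^\alpha$; any expansion of $r_T(z)^\alpha$ via $r_T(z)\le r_T(\mu_C)+\|z-\mu_C\|$ (binomial, Minkowski, or the case-split $(a+b)^\alpha\le(\alpha+1)^\alpha a^\alpha+e\,b^\alpha$) produces terms like $\sum_z\|z-\mu_C\|^{\alpha+j}$ with $j\ge 1$, i.e.\ moments of order strictly larger than $\alpha$, which $g_\alpha$ does not control. The Jensen step from $\mathbb E[\|z-\mu_C\|^2]$ to $M^{2/\alpha}$ is therefore the wrong reduction.

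The paper avoids this by \emph{not} using the parallel-axis bound $\cost(C,T\cup\{z\})\le\cost(C,z)$. It keeps the exact expression $\sum_{x\in C}\min\{\cost(x,T),\|x-z\|^2\}$, bounds $r_T(z)^\alpha$ by averaging the pointwise inequality $r_T(z)^\alpha\le(\alpha+1)^\alpha\|z-x\|^\alpha+e\,r_T(x)^\alpha$ over \emph{all} $x\in C$ (not just $x=\mu_C$), obtaining $r_T(z)^\alpha\le\tfrac{(\alpha+1)^\alpha}{|C|}\costa(C,z)+\tfrac{e}{|C|}\costa(C,T)$, and then relaxes the $\min$ \emph{differently} for the two resulting terms: to $\|x-z\|^2$ for the $\costa(C,T)$ term (yielding $\tfrac{e}{|C|}\sum_z\cost(C,z)\le 4e\cdot\cost(C,\mu_C)$), and to $\cost(x,T)$ for the $\costa(C,z)$ term (yielding $(\alpha+1)^\alpha\tfrac{\cost(C,T)}{|C|\,\costa(C,T)}\sum_z\costa(C,z)$). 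The last sum is \emph{exactly} what $g_\alpha$ bounds, and the final case split is on $\cost(C,T)$ rather than $\costa(C,T)$. Both the averaging over $x\in C$ and retaining the $\min$ are essential; your reduction discards both.
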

\begin{proof}
    We can write
    \begin{equation*}
        \mathbb E\left[ \cost(C,T\cup \{z\})\mid z\in C\right] = \sum_{z\in C} \frac{\costa (z,T)}{\costa (C,T)}\cdot \sum_{x\in C} \min \{\cost (x,T), ||x-z||^2\}\ .
    \end{equation*}
    Let us upperbound the quantity $\costa (z,T)$. For this, let us fix any $x\in C$. Then, if we denote by $t_x$ the point in $T$ which is closest to $x\in C$, we have that 
    \begin{equation*}
        \costa (z,T) \le  ||z-t_x||^\alpha \le (||z-x||+||x-t_x||)^\alpha \le (\alpha+1)^\alpha \cdot ||z-x||^\alpha + (1+1/\alpha)^\alpha\cdot  ||x-t_x||^\alpha\ ,
    \end{equation*}
    using the triangle inequality and a case distinction whether $||z-x||>(1/\alpha) ||x-t_x||$ or not. Averaging this upper bound over all $x\in C$, we obtain that 
    \begin{align*}
        \costa (z,T) &\le \frac{(\alpha+1)^\alpha \cdot \costa(C,z)}{|C|}+\frac{(1+1/\alpha)^\alpha\cdot \costa(C,T)}{|C|}\\
        &\le \frac{(\alpha+1)^\alpha\cdot  \costa(C,z)}{|C|}+\frac{e\cdot \costa(C,T)}{|C|}\ .
    \end{align*}
    Hence we can rewrite 
    \begin{align*}
        \mathbb E&\left[ \cost(C,T\cup \{z\})\mid z\in C\right] \\
        &\le \sum_{z\in C} \frac{\frac{(\alpha+1)^\alpha\cdot  \costa(C,z)}{|C|}+\frac{e\cdot \costa(C,T)}{|C|}}{\costa (C,T)}\cdot \sum_{x\in C} \min \{\cost (x,T), ||x-z||^2\}\\
        &\le \frac{e}{|C|}\sum_{z\in C}  \cost(C,z) + \frac{(\alpha+1)^\alpha}{|C|}\sum_{z\in C} \frac{\costa(C,z)}{\costa (C,T)}\cdot \cost (C,T)\ .
    \end{align*}
    To finish the argument, note that the first term in the last line corresponds to the expected cost of $C$ if we pick once center $z\in C$, uniformly at random. By Lemma \ref{lem:arthur_cost_hit_uniform_sampling}, this is at most $(4e)\cdot \cost(C,\mu_C)$. For the second term, we use Equation \eqref{eq:concentration_assumption_new} to write 
    \begin{equation*}
        \frac{(\alpha+1)^\alpha}{|C|}\sum_{z\in C} \frac{\costa(C,z)}{\costa (C,T)}\cdot \cost (C,T) \le ((\alpha+1)^\alpha g_\alpha) \cdot  (\cost(C,\mu_C))^{\alpha/2}\cdot  \frac{\cost(C,T)}{|C|^{\alpha/2-1}\costa(C,T)}\ .
    \end{equation*}
    Using Jensen's inequality, we obtain that $|C|^{\alpha/2-1}\costa(C,T) \ge (\cost(C,T))^{\alpha/2}$. Hence we finally get that 

    \begin{equation*}
        \frac{(\alpha+1)^\alpha}{|C|}\sum_{z\in C} \frac{\costa(C,z)}{\costa (C,T)}\cdot \cost (C,T) \le ((\alpha+1)^\alpha g_\alpha) \cdot  \frac{(\cost(C,\mu_C))^{\alpha/2}}{(\cost(C,T))^{\alpha/2-1}}\ .
    \end{equation*}
    From there, if $\cost(C,T)\le (\alpha+1)^2\cdot (g_\alpha)^{2/\alpha}\cost(C,\mu_C)$ then the lemma already holds since adding an additional center can only decrease the cost of $C$. Otherwise, we clearly get that 
    \begin{equation*}
        \frac{(\alpha+1)^\alpha}{|C|}\sum_{z\in C} \frac{\costa(C,z)}{\costa (C,T)}\cdot \cost (C,T) \le ((\alpha+1)^2\cdot (g_\alpha)^{2/\alpha}) \cdot  \cost(C,\mu_C)\ ,
    \end{equation*}
    which finishes the proof.
\end{proof}

The last lemma relates the squared cost and the $\alpha$-powered cost of any cluster.
\begin{lemma}
\label{lem:cost_alpha_clusters}
    For any cluster $C$, we have that 
    \begin{equation*}
        \costa(C,\mu_C)\le g_\alpha\cdot |C|\cdot (\sigma_C)^\alpha\ . 
    \end{equation*}
\end{lemma}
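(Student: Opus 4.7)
The plan is to combine the definition of $g_\alpha$ with a convexity (Jensen) step that replaces the centroid by an average over the cluster points. More concretely, observe that since $\mu_C = \frac{1}{|C|}\sum_{z\in C} z$, for every fixed $x\in C$ we can write $x-\mu_C = \frac{1}{|C|}\sum_{z\in C}(x-z)$, so by convexity of $t\mapsto \|t\|^\alpha$ for $\alpha\ge 1$ (i.e.\ Jensen's inequality):
\begin{equation*}
\|x-\mu_C\|^\alpha \;\le\; \frac{1}{|C|}\sum_{z\in C}\|x-z\|^\alpha.
\end{equation*}
Summing over $x\in C$ gives the first key inequality
\begin{equation*}
\costa(C,\mu_C) \;\le\; \frac{1}{|C|}\sum_{z\in C}\costa(C,z).
\end{equation*}

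Next I would plug in the definition of $g_\alpha$ from \eqref{eq:concentration_assumption_new}. Using Remark~\ref{rem:OPT} we have $\cost(C,\mu_C)/|C| = \sigma_C^2$, so the definition of $g_\alpha$ directly yields
\begin{equation*}
\frac{1}{|C|^2}\sum_{z\in C}\costa(C,z) \;\le\; g_\alpha \cdot (\sigma_C^2)^{\alpha/2} \;=\; g_\alpha\cdot \sigma_C^\alpha,
\end{equation*}
i.e.\ $\sum_{z\in C}\costa(C,z)\le g_\alpha\cdot |C|^2\cdot \sigma_C^\alpha$.

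Combining the two bullets gives exactly
\begin{equation*}
\costa(C,\mu_C)\;\le\; \frac{1}{|C|}\cdot g_\alpha\cdot |C|^2\cdot \sigma_C^\alpha \;=\; g_\alpha\cdot |C|\cdot \sigma_C^\alpha,
\end{equation*}
which is the desired statement. There is essentially no obstacle here: the only nontrivial observation is that the numerator of $g_\alpha$ involves $\sum_{z\in C}\costa(C,z)$ (i.e.\ the pairwise $\alpha$-cost to cluster points), not $\costa(C,\mu_C)$, and the Jensen step is precisely what bridges these two quantities at the cost of an extra factor $1/|C|$, which is exactly what is needed to convert the $|C|^2$ in the definition of $g_\alpha$ into the $|C|$ appearing in the claim.
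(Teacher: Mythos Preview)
Your proof is correct and follows essentially the same approach as the paper: first use Jensen (convexity of $t\mapsto\|t\|^\alpha$, equivalently convexity of $y\mapsto\sum_{x\in C}\|x-y\|^\alpha$) to bound $\costa(C,\mu_C)\le \frac{1}{|C|}\sum_{z\in C}\costa(C,z)$, then apply the definition of $g_\alpha$ together with $\cost(C,\mu_C)/|C|=\sigma_C^2$ to conclude. The only cosmetic difference is that the paper phrases the Jensen step as convexity in the center variable $y$, while you write $x-\mu_C$ as an average of $x-z$ and apply convexity pointwise; these are the same argument.
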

\begin{proof}
    We note that $\costa(C,\mu_C)\le \frac{1}{|C|}\sum_{z\in C} \costa(C,z)$ (using Jensen's inequality, and the convexity of the function $y\mapsto \sum_{z\in C}||z-y||^\alpha$). Hence, we obtain
    \begin{align*}
        \costa(C,\mu_C) &\le \left(\frac{1}{|C|}\sum_{z\in C} \costa(C,z)\right)\\
        &\le |C|^{1-\alpha/2} \cdot g_\alpha \cdot (\cost(C,\mu_C))^{\alpha/2}= g_\alpha\cdot |C|\cdot (\sigma_C)^\alpha\ ,
    \end{align*}
    where the second inequality uses our assumption.
\end{proof}

\subsection{The increase of potential}
\label{sec:increase_potential}
In this section, we bound the final potential $\phi(k)$. First, we analyze the increase of local potential $\phi_i$ individually, then we use these results to bound the final expected potential $\mathbb E [\phi(k)]$.

\subsubsection{The increase in a weight class}

\begin{lemma}
\label{lem:potential_hit_old}
    Let $B_t$ be the event that the $t$-th center is selected from $H_t$. Then, for any $t> 0$, $i\ge 0$, and any past choice of centers $Z_{t-1}$, we have that 
    \begin{equation*}
    \mathbb E\bigg[\phi_i(t)-\phi_i(t-1) \mid \{Z_{t-1}, B_t\}\bigg] \le (\tau_i(t)-\tau_i(t-1))\cdot \frac{(2^i)^{1-2/\alpha}}{|U_{t-1}^{(i)}|} \cdot \sum_{C\in U_{t-1}^{(i)}} (\costa_{t-1} (C))^{2/\alpha}\ .
\end{equation*}
\end{lemma}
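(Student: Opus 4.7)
My plan is to observe that conditioning on $B_t$ makes everything on the right-hand side essentially deterministic given $Z_{t-1}$, so the expectation is almost a formality, and the real work is a direct manipulation of the definitions.

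First, I would note two monotonicity facts that hold under $B_t$. Since the $t$-th center is chosen from a cluster already in $H_{t-1}$, the set of undiscovered clusters is unchanged in every weight class: $U_t^{(i)} = U_{t-1}^{(i)}$, so in particular $|U_t^{(i)}| = |U_{t-1}^{(i)}|$. Moreover, adding a center can only decrease the $\alpha$-cost of every point, so $\costa_t(C) \le \costa_{t-1}(C)$ for each $C \in U_{t-1}^{(i)}$, and therefore $\sum_{C \in U_t^{(i)}} (\costa_t(C))^{2/\alpha} \le \sum_{C \in U_{t-1}^{(i)}} (\costa_{t-1}(C))^{2/\alpha}$.

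Second, I would check the key combinatorial identity $w_i(t) - w_i(t-1) = \tau_i(t) - \tau_i(t-1)$ under $B_t$. This is immediate from the update rule in case 2 (chosen cluster lies in $H_{t-1}$): if $\tau_i(t-1) < k_i$ then both counters increment by exactly $1$, and if $\tau_i(t-1) \ge k_i$ then neither counter changes. In particular this common increment depends only on $Z_{t-1}$ (through the value of $\tau_i(t-1)$) and not on which hit cluster the new center landed in, so conditionally on $\{Z_{t-1}, B_t\}$ it is deterministic.

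Combining the two facts, I can bound
\begin{align*}
\phi_i(t) - \phi_i(t-1)
&= \frac{(2^i)^{1-2/\alpha}}{|U_{t-1}^{(i)}|}\left[ w_i(t)\sum_{C\in U_{t-1}^{(i)}}(\costa_t(C))^{2/\alpha} - w_i(t-1)\sum_{C\in U_{t-1}^{(i)}}(\costa_{t-1}(C))^{2/\alpha}\right] \\
&\le \frac{(2^i)^{1-2/\alpha}}{|U_{t-1}^{(i)}|}\bigl(w_i(t)-w_i(t-1)\bigr)\sum_{C\in U_{t-1}^{(i)}}(\costa_{t-1}(C))^{2/\alpha} \\
&= \bigl(\tau_i(t)-\tau_i(t-1)\bigr)\cdot \frac{(2^i)^{1-2/\alpha}}{|U_{t-1}^{(i)}|}\sum_{C\in U_{t-1}^{(i)}}(\costa_{t-1}(C))^{2/\alpha},
\end{align*}
which holds pointwise on the event $B_t$ for every realization of $z_t$; taking conditional expectation over the choice of $z_t$ inside $H_{t-1}$ therefore preserves the bound. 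There is no real obstacle here; the only thing to be careful about is that the increment of $w_i$ is always matched by an increment of $\tau_i$ in case 2, which is exactly how the update rule was designed, and the monotonicity of the $\alpha$-cost is what lets me replace $\costa_t$ by $\costa_{t-1}$ in the upper bound.
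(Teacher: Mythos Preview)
Your proof is correct and follows essentially the same approach as the paper's. The paper's proof is terser—it just case-splits on whether $\tau_i$ increments, notes that under $B_t$ the wasted counter $w_i$ increments exactly when $\tau_i$ does, and invokes the monotonicity $(\costa_t(C))^{2/\alpha}\le(\costa_{t-1}(C))^{2/\alpha}$—but your more explicit unpacking of $\phi_i(t)-\phi_i(t-1)$ and the observation that $U_t^{(i)}=U_{t-1}^{(i)}$ under $B_t$ amount to the same argument.
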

\begin{proof}
    Fix some $i\ge 0$. If $\tau_i(t-1)=\tau_i(t)$, we have by definition $w_i(t)=w_i(t-1)$ and clearly the potential $\phi_i$ cannot increase. Otherwise, we simply note that $w_i(t)=w_i(t-1)+1$, and the result clearly follows. Note that in both cases, we use the fact that $(\costa_t (C))^{2/\alpha}\le (\costa_{t-1} (C))^{2/\alpha}$.
\end{proof}

\begin{lemma}
\label{lem:potential_hit_new}
Let $A_t^{(i)}$ be the event that the $t$-th center is selected from and undiscovered cluster belonging to the weight class $i$. Then, for any $t> 0$, $i,j\ge 0$, any past choice of centers $Z_{t-1}$, we have that 
\begin{equation*}
    \mathbb E\bigg[\phi_j(t)-\phi_j(t-1) \mid \{Z_{t-1}, A_t^{(i)}\}\bigg] \le 0.
\end{equation*}
\end{lemma}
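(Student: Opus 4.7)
\textbf{Proof plan for Lemma \ref{lem:potential_hit_new}.} The plan is to split the analysis into two cases, according to whether $j\ne i$ or $j=i$.

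\emph{Case $j\ne i$ (deterministic).} On the event $A_t^{(i)}$, the center $z_t$ lies in a cluster of weight class $i$, so no cluster of weight class $j$ is discovered. By the update rule for the ``new cluster discovered'' branch, $w_j(t)=w_j(t-1)$ and $U_t^{(j)}=U_{t-1}^{(j)}$. Furthermore $\costa_t(C)\le \costa_{t-1}(C)$ for every $C$, since adding a center can only decrease the $\alpha$-cost. Therefore $\phi_j(t)\le \phi_j(t-1)$ deterministically on $A_t^{(i)}$, which is even stronger than the required inequality in expectation.

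\emph{Case $j=i$ (the interesting case).} The edge cases are easy: if $w_i(t-1)=0$ or $|U_{t-1}^{(i)}|=1$, then either $\phi_i(t-1)=0=\phi_i(t)$ or $U_t^{(i)}=\emptyset$ and the claim holds deterministically. Otherwise let $C_{\mathrm{hit}}$ be the random cluster containing $z_t$. Conditioned on $\{Z_{t-1},A_t^{(i)}\}$, the $D^\alpha$ sampling distribution restricted to $U_{t-1}^{(i)}$ gives
\begin{equation*}
\Pr[C_{\mathrm{hit}}=C\mid Z_{t-1},A_t^{(i)}] \;=\; \frac{\costa_{t-1}(C)}{\sum_{C'\in U_{t-1}^{(i)}}\costa_{t-1}(C')}.
\end{equation*}
Using that $w_i(t)=w_i(t-1)$, $|U_t^{(i)}|=|U_{t-1}^{(i)}|-1$, and $\costa_t(C)\le \costa_{t-1}(C)$, the expected potential after the step is upper bounded by
\begin{equation*}
\frac{w_i(t-1)\,(2^i)^{1-2/\alpha}}{|U_{t-1}^{(i)}|-1}\cdot \mathbb E\!\left[\sum_{C\in U_{t-1}^{(i)}\setminus\{C_{\mathrm{hit}}\}}(\costa_{t-1}(C))^{2/\alpha}\;\Big|\;Z_{t-1},A_t^{(i)}\right].
\end{equation*}
Writing $a_C:=\costa_{t-1}(C)$ for $C\in U_{t-1}^{(i)}$ and $n:=|U_{t-1}^{(i)}|$, a direct computation of this expectation reduces the desired inequality $\mathbb E[\phi_i(t)\mid Z_{t-1},A_t^{(i)}]\le \phi_i(t-1)$ to
\begin{equation*}
n\cdot \sum_{C} a_C^{1+2/\alpha} \;\ge\; \Big(\sum_C a_C\Big)\cdot \Big(\sum_C a_C^{2/\alpha}\Big).
\end{equation*}

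\emph{Main step.} This is exactly Chebyshev's sum inequality, since $(a_C)$ and $(a_C^{2/\alpha})$ are similarly sorted (both are monotone in $a_C\ge 0$). Recognising the reduction to Chebyshev is the crux of the argument — without it, one might try to bound the expectation term-by-term, which loses too much. Once Chebyshev is applied, the bound on $\mathbb E[\phi_i(t)-\phi_i(t-1)\mid Z_{t-1},A_t^{(i)}]$ follows by rearrangement, completing the proof for $j=i$ and hence the lemma.
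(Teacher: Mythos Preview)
Your proof is correct and follows essentially the same approach as the paper: the $j\ne i$ case is handled deterministically via monotonicity of the $\alpha$-cost, and the $j=i$ case reduces to Chebyshev's sum inequality applied to the similarly ordered sequences $(a_C)$ and $(a_C^{2/\alpha})$. The paper phrases the key step as the intermediate claim $\mathbb E[(\costa_{t-1}(C_z))^{2/\alpha}]\ge \frac{1}{|U_{t-1}^{(i)}|}\sum_C (\costa_{t-1}(C))^{2/\alpha}$ and then plugs it into the potential, whereas you go directly to the algebraic inequality $n\sum a_C^{1+2/\alpha}\ge(\sum a_C)(\sum a_C^{2/\alpha})$, but these are the same computation rearranged.
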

\begin{proof}
    We first note that since $z_t\notin H_t$, it is clear that $\tau_j(t)=\tau_j(t-1)$ for all $j\neq i$, hence $\phi_j(t)\le \phi_j(t-1)$ for all $j\neq i$. To bound the change of $\phi_i$ we notice that $w_i(t)=w_i(t-1)$, and that $|U_t^{(i)}|=|U_{t-1}^{(i)}|-1$. Let $C_z$ be the cluster which is selected in $U_{t-1}^{(i)}$. We claim that 
    \begin{equation}
        \mathbb E\bigg[(\costa_{t-1}(C_z))^{2/\alpha} \mid  \{Z_{t-1}, A_t^{(i)}\}\bigg] \ge \frac{1}{|U_{t-1}^{(i)}|}\cdot \sum_{C\in U_{t-1}^{(i)}}(\costa_{t-1}(C))^{2/\alpha}\ .
    \end{equation}
    Indeed, note that 
    \begin{align*}
       \E\left[(\costa_{t-1}(C_z))^{2/\alpha}\mid \{Z_{t-1}, A_t^{(i)}\} \right]&=\sum_{C \in U_{t-1}^{(i)}}\frac{\costa_{t-1}(C)}{\sum_{C \in U_{t-1}^{(i)}}\costa_{t-1}(C)}\cdot\left(\costa_{t-1}(C)\right)^{2/\alpha}\\
       &\hspace{-20mm}=\frac{|U_{t-1}^{(i)}|}{\sum_{C \in U_{t-1}^{(i)}}\costa_{t-1}(C)}\cdot \sum_{C \in U_{t-1}^{(i)}}\frac{\costa_{t-1}(C)\cdot\left(\costa_{t-1}(C)\right)^{2/\alpha}}{|U_{t-1}^{(i)}|}\\
       &\hspace{-20mm}\stackrel{(1)}{\geq} \frac{|U_{t-1}^{(i)}|}{\sum_{C \in U_{t-1}^{(i)}}\costa_{t-1}(C)}\cdot \sum_{C \in U_{t-1}^{(i)}}\frac{\costa_{t-1} (C)}{|U_{t-1}^{(i)}|}\cdot \sum_{C \in U_{t-1}^{(i)}}\frac{(\costa_{t-1}(C))^{2/\alpha}}{|U_{t-1}^{(i)}|}\\
       &\hspace{-20mm}=\sum_{C \in U_{t-1}^{(i)}}\frac{(\costa_{t-1}(C))^{2/\alpha}}{|U_{t-1}^{(i)}|}\ .
   \end{align*}
   The inequality (1) can be obtained using Lemma \ref{lem:chebyshevsum} (Chebyshev's sum inequality) by considering the ordered sequence $\left(\costa_{t-1}(C)\right)_{C\in U_{t-1}^{(i)}}$ and $\left((\costa_{t-1}(C))^{2/\alpha}\right)_{C \in U_{t-1}^{(i)}}$. Thus we have:
   \begin{align*}
   &\E\left[\phi_i(t)\mid \{Z_{t-1}, A_{t}^{(i)}\}\right]\\
   &\leq \frac{w_i(t)}{|U_{t-1}^{(i)}|-1} \cdot (2^i)^{1-2/\alpha} \cdot \left( \sum_{C\in U_{t-1}^{(i)}} (\costa_t (C))^{2/\alpha} -  \E\left[(\costa_{t-1}(C_z))^{2/\alpha}\mid \{Z_{t-1}, A_t^{(i)}\} \right] \right)\\
   &\le \frac{w_i(t)}{|U_{t-1}^{(i)}|-1} \cdot (2^i)^{1-2/\alpha} \cdot \left( \frac{|U_{t-1}^{(i)}|-1}{|U_{t-1}^{(i)}|}\cdot \sum_{C\in U_{t-1}^{(i)}} (\costa_{t-1} (C))^{2/\alpha}\right)\\
   &=\phi_i(t-1).
   \end{align*} 
\end{proof}

\begin{lemma}
\label{lem:potential_increase}
For every $i\ge 0, \alpha>2, t>0$, we have that 
\begin{equation*}
    \mathbb E\left[\phi_i(t)-\phi_i(t-1)  \mid \{\tau_i(t)=\tau_i(t-1)+1\}\right] \le h(\alpha)\cdot  \left(2^i\right)^{1-2/\alpha} \cdot (k_i-\tau_i(t-1))^{2/\alpha} \cdot \left((2^{2\alpha} g_\alpha) \sum_{C\in \mathcal{C}_{\OPT}}  |C|(\sigma_C)^\alpha\right)^{2/\alpha}\ ,
\end{equation*}
where $h(\alpha)=\frac{(\alpha/2-1)^{1-2/\alpha}}{\alpha/2}$, and 
\begin{equation*}
    \mathbb E\left[\phi_i(t)-\phi_i(t-1)  \mid \{\tau_i(t)=\tau_i(t-1)\}\right] \le 0\ .
\end{equation*}
\end{lemma}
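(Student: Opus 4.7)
The case $\tau_i(t)=\tau_i(t-1)$ is immediate from the update rules: $w_i$ is then frozen and $|U_t^{(i)}|$ can only shrink by one (and only in the degenerate situation $\tau_i(t-1)=k_i$ where a fresh cluster of $S_i$ is hit, which is already handled by Lemma~\ref{lem:potential_hit_new} applied with $j=i$), while every quantity $(\costa_t(C))^{2/\alpha}$ is monotone non-increasing in $t$; hence $\phi_i(t)\le \phi_i(t-1)$ in every remaining sub-case, giving the second inequality deterministically.

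For the increment case I plan to condition on the full history $Z_{t-1}$ and split the randomness of $z_t$ into the only two events that can drive $\tau_i$ upward: $z_t$ lands in an already-discovered cluster ($B_t$) or in an undiscovered cluster of $S_i$ ($A_t^{(i)}$). Setting $c:=\costa_{t-1}(H_{t-1})$, $a:=\costa_{t-1}(U_{t-1}^{(i)})$ and $|U|:=|U_{t-1}^{(i)}|$, Lemma~\ref{lem:potential_hit_new} contributes a non-positive term from $A_t^{(i)}$, while Lemma~\ref{lem:potential_hit_old} combined with the concavity of $x\mapsto x^{2/\alpha}$ (the power-mean inequality $\sum_{C\in U_{t-1}^{(i)}}(\costa_{t-1}(C))^{2/\alpha}\le |U|^{1-2/\alpha}a^{2/\alpha}$, valid since $2/\alpha<1$) yields
\[
\mathbb{E}\bigl[\phi_i(t)-\phi_i(t-1)\,\big|\,Z_{t-1},\tau_i(t)=\tau_i(t-1)+1\bigr]\ \le\ \frac{c}{a+c}\,(2^i)^{1-2/\alpha}\Bigl(\frac{a}{|U|}\Bigr)^{2/\alpha},
\]
where the mixing weight $c/(a+c)=\mathbb{P}[B_t\mid Z_{t-1},\tau_i\text{ inc}]$ comes from a direct computation of conditional probabilities on the two branches.

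The heart of the argument is a one-variable optimisation that produces the constant $h(\alpha)$: holding $c$ fixed and treating $a\ge 0$ as a free parameter, the function $g(a)=ca^{2/\alpha}/(a+c)$ satisfies $g'(a)=0$ at $a^{\star}=2c/(\alpha-2)$, with maximum value $\tfrac{(\alpha-2)^{1-2/\alpha}\,2^{2/\alpha}}{\alpha}\,c^{2/\alpha}$, which a direct algebraic manipulation identifies with $h(\alpha)\,c^{2/\alpha}$. Since $|U_{t-1}^{(i)}|\ge 1$ on the event $\tau_i(t-1)<k_i$ (the only regime where the conditioning event has positive probability), one has $|U|^{-2/\alpha}\le 1\le (k_i-\tau_i(t-1))^{2/\alpha}$, and the displayed inequality becomes $\mathbb{E}[\Delta\phi_i\mid Z_{t-1},\tau_i\text{ inc}]\le h(\alpha)(2^i)^{1-2/\alpha}(k_i-\tau_i(t-1))^{2/\alpha}\,c^{2/\alpha}$.

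It then only remains to average the factor $c^{2/\alpha}$ over $Z_{t-1}$. Applied at the step at which each optimal cluster is first hit, Lemma~\ref{lem:arthur_cost_hit_alpha_sampling} gives an expected $\alpha$-cost of at most $2^{2\alpha}\costa(C,\mu_C)$ per hit cluster, so by monotonicity of $\costa_t$ in $t$ together with Lemma~\ref{lem:cost_alpha_clusters}, $\mathbb{E}[c]\le 2^{2\alpha}\sum_{C\in\COPT}\costa(C,\mu_C)\le 2^{2\alpha}g_\alpha\sum_{C\in\COPT}|C|\sigma_C^\alpha$; Jensen on the concave map $x\mapsto x^{2/\alpha}$ then upgrades this to $\mathbb{E}[c^{2/\alpha}]\le \bigl(2^{2\alpha}g_\alpha\sum_C|C|\sigma_C^\alpha\bigr)^{2/\alpha}$, exactly the final factor in the statement. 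The main obstacle I anticipate is precisely the calculus step: the naive relaxations $\mathbb{P}[B_t\mid Z_{t-1}]\le 1$ or $c/(a+c)\le 1$ discard the trade-off between ``wasting a draw on an already-hit cluster'' and ``undiscovered $\alpha$-mass inside $S_i$'' which is the whole mechanism by which $D^\alpha$ seeding with $\alpha>2$ improves over $D^2$ seeding, and only the balanced maximisation of $g(a)$ recovers the stated $h(\alpha)$.
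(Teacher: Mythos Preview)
Your proof is correct and takes essentially the same route as the paper: the case split, the invocation of Lemmas~\ref{lem:potential_hit_old} and~\ref{lem:potential_hit_new}, the computation $\mathbb{P}[B_t\mid Z_{t-1},\tau_i\text{ inc}]=c/(c+a)$, the power-mean inequality, the one-variable maximisation producing $h(\alpha)$, and the final averaging of $c^{2/\alpha}$ via Jensen together with Lemmas~\ref{lem:arthur_cost_hit_alpha_sampling} and~\ref{lem:cost_alpha_clusters} all match. The only cosmetic difference is that you upper-bound $\sum_{C}(\costa_{t-1}(C))^{2/\alpha}\le |U|^{1-2/\alpha}a^{2/\alpha}$ and optimise over $a$, whereas the paper lower-bounds $a\ge |U|\,X^{\alpha/2}$ (with $X$ the average of the summands) and optimises over $X$; after the substitution $a=|U|\,X^{\alpha/2}$ these are the same optimisation with the same maximum $h(\alpha)\,c^{2/\alpha}/|U|^{2/\alpha}$.

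One point worth flagging: the paper's own proof actually lands on $(k_i-\tau_i(t-1))^{-2/\alpha}$, obtained from the inequality $|U_{t-1}^{(i)}|\ge k_i-\tau_i(t-1)$, and it is precisely this negative-exponent form that is summed in Lemma~\ref{lem:potential_global} to produce $\sum_{j=1}^{k_i} j^{-2/\alpha}\le k_i^{1-2/\alpha}/(1-2/\alpha)$. Your crude step $|U|^{-2/\alpha}\le 1\le (k_i-\tau_i(t-1))^{2/\alpha}$ suffices for the lemma exactly as displayed (the positive exponent in the statement appears to be a typo), but discards the bound $|U_{t-1}^{(i)}|\ge k_i-\tau_i(t-1)$ and would cost you an extra factor of roughly $k_i$ per weight class in the downstream argument.
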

\begin{proof}
    If $\tau_i(t)=\tau_i(t-1)$, then either we can apply Lemma \ref{lem:potential_hit_old} or Lemma \ref{lem:potential_hit_new}, and in both cases the expected potential $\phi_i$ can only decrease. 
    
    If $\tau_i(t)=\tau_i(t-1)+1$ then either $z_t$ belongs to $U_{t-1}^{(i)}$, in which case the potential $\phi_i$ can only decrease in expectation (by Lemma \ref{lem:potential_hit_new} again). Therefore there remains only the case that $z_t\in H_{t-1}$ (we denote this event by $B_t$). In this case, let us denote by $Z_{t-1}$ the current set of selected centers. Using Lemma \ref{lem:potential_hit_old} and the previous cases we have that

    \begin{align*}
        \mathbb E&\left[\phi_i(t)-\phi_i(t-1) \mid \{\tau_i(t)=\tau_i(t-1)+1,Z_{t-1}\}\right]\\
        &\le (2^i)^{1-2/\alpha} \cdot \frac{\sum_{C\in U_{t-1}^{(i)}} (\costa_{t-1} (C))^{2/\alpha}}{|U_{t-1}^{(i)}|}\cdot \mathbb P[B_t \mid \{\tau_i(t)=\tau_i(t-1)+1,Z_{t-1}\}]\\
        &\le (2^i)^{1-2/\alpha} \cdot \frac{\sum_{C\in U_{t-1}^{(i)}} (\costa_{t-1} (C))^{2/\alpha}}{|U_{t-1}^{(i)}|}\cdot \frac{\costa_{t-1}(H_{t-1})}{\costa_{t-1}(H_{t-1})+ \costa_{t-1}(U_{t-1}^{(i)})}\\
        &\le (2^i)^{1-2/\alpha} \cdot \frac{\sum_{C\in U_{t-1}^{(i)}} (\costa_{t-1} (C))^{2/\alpha}}{|U_{t-1}^{(i)}|}\cdot \frac{\costa_{t-1}(H_{t-1})}{\costa_{t-1}(H_{t-1})+ |U_{t-1}^{(i)}| \cdot \left(\frac{\sum_{C\in U_{t-1}^{(i)}} (\costa_{t-1} (C))^{2/\alpha}}{|U_{t-1}^{(i)}|}\right)^{\alpha/2}}\ ,
    \end{align*}
    where the last inequality uses Jensen's inequality. If we consider the last expression as a function of $X:=\frac{\sum_{C\in U_{t-1}^{(i)}} (\costa_{t-1} (C))^{2/\alpha}}{|U_{t-1}^{(i)}|}$ (the other quantities being fixed), one can see that this expression attains a maximum value for
    \begin{equation*}
        X=\left(\frac{\costa_{t-1}(H_{t-1})}{|U_{t-1}^{(i)}|\cdot (\alpha/2-1)}\right)^{2/\alpha}\ .
    \end{equation*}
    Plugging in this value, we obtain that 
    \begin{align*}
        \mathbb E&\left[\phi_i(t)-\phi_i(t-1)  \mid \{\tau_i(t)=\tau_i(t-1)+1,Z_{t-1}\}\right] \\
        &\le \frac{(\alpha/2-1)^{1-2/\alpha}}{\alpha/2} \cdot (2^{i})^{1-2/\alpha} \cdot \frac{(\costa_{t-1}(H_{t-1}))^{2/\alpha}}{|U_{t-1}^{(i)}|^{2/\alpha}}\\
        &\le  \frac{(\alpha/2-1)^{1-2/\alpha}}{\alpha/2} \cdot (2^{i})^{1-2/\alpha} \cdot (k_i-\tau_i(t-1))^{-2/\alpha} \cdot \mathbb (\costa_{t-1}(H_t))^{2/\alpha}\ ,
        \end{align*}
    where the second inequality uses the fact that $|U_{t-1}^{(i)}|\ge k_i-\tau_i(t-1)$. By the law of total expectation, we have that
    \begin{align*}
        \mathbb E&\left[\phi_i(t)-\phi_i(t-1)  \mid \{\tau_i(t)=\tau_i(t-1)+1\} \right] = \mathbb E_{Z_{t-1}}\left[\mathbb E\left[\phi_i(t)-\phi_i(t-1)  \mid \{\tau_i(t)=\tau_i(t-1)+1,Z_{t-1}\}\right]\right]\\
        &\le \frac{(\alpha/2-1)^{1-2/\alpha}}{\alpha/2} \cdot (2^{i})^{1-2/\alpha} \cdot (k_i-\tau_i(t-1))^{-2/\alpha} \cdot \mathbb E_{Z_{t-1}}\left[\left( \costa_{t-1}(H_{t-1})\right)^{2/\alpha}\right]\\
        &\le \frac{(\alpha/2-1)^{1-2/\alpha}}{\alpha/2} \cdot (2^{i})^{1-2/\alpha} \cdot (k_i-\tau_i(t-1))^{-2/\alpha} \cdot  \left( \mathbb E_{Z_{t-1}}\left[\costa_{t-1}(H_{t-1})\right]\right)^{2/\alpha}\\
        &\le \frac{(\alpha/2-1)^{1-2/\alpha}}{\alpha/2} \cdot (2^{i})^{1-2/\alpha} \cdot (k_i-\tau_i(t-1))^{-2/\alpha} \cdot \left((2^{2\alpha} g_\alpha) \sum_{C\in \mathcal C_\OPT}  |C|(\sigma_C)^\alpha \right)^{2/\alpha}\ ,
    \end{align*}
    where the second inequality uses Jensen's inequality, and the last inequality uses Lemmas \ref{lem:cost_alpha_clusters}, \ref{lem:arthur_cost_hit_uniform_sampling}, and Lemma \ref{lem:arthur_cost_hit_alpha_sampling}. Indeed, we clearly have the expected $\alpha$-powered cost of a hit cluster is at most its expected cost after the first time it is hit, which, using Lemmas \ref{lem:arthur_cost_hit_alpha_sampling} and \ref{lem:arthur_cost_hit_uniform_sampling} is at most $2^{2\alpha}$ times $\costa(C,\mu_C)$.
\end{proof}
\subsubsection{The global increase}
In this part, we are ready to bound the final expected value of $\phi(k)$ with the following lemma.
\begin{lemma}
\label{lem:potential_global}
For any $\alpha>2$, we have that 
\begin{equation*}
    \frac{\mathbb E\left[\phi(k)\right]}{\OPT} \le f(\alpha) \cdot (g_\alpha)^{2/\alpha} \cdot \left(\frac{\smax}{\smin}\right)^{2-4/\alpha}\cdot  \min \{\ell,\log(k)\}^{2/\alpha}\ ,
\end{equation*}
where $f(\alpha)=\frac{16(\alpha/2-1)^{1-2/\alpha}}{\alpha/2-1}\cdot \frac{2-2^{2/\alpha-1}}{1-2^{2/\alpha-1}}$.
\end{lemma}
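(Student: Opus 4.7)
The plan is to bound $\mathbb{E}[\phi(k)]$ by summing the per-step expected increase from Lemma \ref{lem:potential_increase} across all time steps, then to convert the resulting quantity into a bound in terms of $\OPT$ via the standard-deviation parameters $\smax$ and $\smin$. The argument factorizes into a telescoping step (fixing $i$ and summing over $t$), a cost-passage step (invoking $\smax,\smin$ to compare against $\OPT$), and a combinatorial step that bounds the remaining weight-class sum.

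For the telescoping step, I would fix a weight class $i$ and sum across time. Since $\tau_i$ can increase from $s-1$ to $s$ at most once for each $s \in \{1,\ldots,k_i\}$, and since any step in which $\tau_i$ is constant contributes non-positively by Lemma \ref{lem:potential_increase}, the law of total expectation yields
\[
\mathbb{E}[\phi_i(k)] \;\le\; h(\alpha)\,(2^i)^{1-2/\alpha}\,M^{2/\alpha}\,\sum_{j=1}^{k_i} j^{-2/\alpha}, \qquad M := 2^{2\alpha}\,g_\alpha \sum_{C \in \COPT} |C|\sigma_C^\alpha.
\]
For $\alpha > 2$ an integral comparison bounds the $p$-series by $\tfrac{\alpha}{\alpha-2}\,k_i^{1-2/\alpha}$.

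For the cost-passage step, Remark \ref{rem:OPT} combined with $\sigma_C \le \smax$ yields $\sum_C |C|\sigma_C^\alpha \le \smax^{\alpha-2}\OPT$, hence $M^{2/\alpha} \le 16\,g_\alpha^{2/\alpha}\,\smax^{2-4/\alpha}\,\OPT^{2/\alpha}$; dually $\sigma_C \ge \smin$ together with $|C| \ge 2^i$ for $C \in S_i$ yields $A := \sum_i 2^i k_i \le \OPT/\smin^2$. Summing over $i$ and simplifying the prefactor reduces the claim to showing
\[
\Sigma \;:=\; \sum_i (2^i k_i)^{1-2/\alpha} \;\le\; \frac{2-2^{2/\alpha-1}}{1-2^{2/\alpha-1}}\cdot \min\{\ell,\log k\}^{2/\alpha}\cdot A^{1-2/\alpha},
\]
after which the identity $\OPT^{2/\alpha}A^{1-2/\alpha} = \OPT \cdot \smin^{-(2-4/\alpha)}$ absorbs the $\smax^{2-4/\alpha}$ factor into $(\smax/\smin)^{2-4/\alpha}$, and the remaining constants combine into $f(\alpha)$.

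The $\ell^{2/\alpha}$ bound on $\Sigma$ is immediate from Jensen applied to the concave map $x \mapsto x^{1-2/\alpha}$ on the at most $\ell$ nonzero terms summing to at most $A$. The $(\log k)^{2/\alpha}$ bound is the main obstacle. My plan is to partition the nonzero weight classes dyadically into buckets $B_j := \{i : 2^i k_i \in [A\,2^{-j-1}, A\,2^{-j})\}$ for $j \ge 0$. The constraint $\sum 2^i k_i \le A$ gives $|B_j| \le 2^{j+1}$; crucially, the admissible $i$ within a bucket form a window of width at most $\log k + O(1)$, because $k_i \ge 1$ imposes $i \le \log(A/2^j)$ while $k_i \le k$ imposes $i \ge \log(A/(k\,2^{j+1}))$. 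Hence $|B_j| \le \min\{2^{j+1},\,\log k + O(1)\}$. Summing $|B_j|\cdot(A 2^{-j})^{1-2/\alpha}$ over $j$ splits naturally into a small-$j$ piece ($j \le \log\log k$, bounded by a geometric sum with ratio $2^{1-2/\alpha}>1$) and a large-$j$ geometric tail (ratio $2^{-(1-2/\alpha)}<1$); careful bookkeeping of both pieces yields the target coefficient $(2-2^{2/\alpha-1})/(1-2^{2/\alpha-1})$ times $(\log k)^{2/\alpha}A^{1-2/\alpha}$. The key difficulty is that both structural constraints $\sum 2^i k_i \le A$ and $\sum k_i \le k$ must be used in concert---dropping either weakens the bound to $\log A$ or $\log n$, potentially much larger than $\log k$.
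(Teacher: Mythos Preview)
Your telescoping and cost-passage steps coincide with the paper's proof: sum the per-step bound of Lemma~\ref{lem:potential_increase} over the $k_i$ increments of $\tau_i$, bound the resulting $p$-series by $\frac{k_i^{1-2/\alpha}}{1-2/\alpha}$, then use $\sigma_C\le\smax$ on the numerator and $\sigma_C\ge\smin$ on the denominator to reduce to the ratio $\Sigma/A^{1-2/\alpha}$ with $\Sigma=\sum_i(2^ik_i)^{1-2/\alpha}$ and $A=\sum_i 2^ik_i$. The $\ell^{2/\alpha}$ bound via Jensen is also identical.

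The genuine difference is in the $(\log k)^{2/\alpha}$ bound on $\Sigma/A^{1-2/\alpha}$. The paper does \emph{not} bucket by the value of $2^ik_i$; instead it greedily selects a set $\mathcal I$ of indices by starting from the largest active index $L$ and repeatedly jumping to the next smaller index whose $k_i$ at least doubles. This forces $|\mathcal I|\le\log k$ directly from $\sum_i k_i=k$. The indices in $\mathcal J=\mathbb N\setminus\mathcal I$ are then controlled by observing that between two consecutive selected indices $i_{x+1}<i_x$ one has $k_j<2k_{i_x}$ for all $j\in(i_{x+1},i_x)$, so $(2^jk_j)^{1-2/\alpha}\le(2\cdot2^{j-i_x}\cdot 2^{i_x}k_{i_x})^{1-2/\alpha}$, and summing the geometric series in $j-i_x$ yields $\sum_{\mathcal J}(2^ik_i)^{1-2/\alpha}\le\frac{1}{1-2^{2/\alpha-1}}\sum_{\mathcal I}(2^ik_i)^{1-2/\alpha}$. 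Jensen on the $\le\log k$ terms in $\mathcal I$ then gives the stated coefficient $\frac{2-2^{2/\alpha-1}}{1-2^{2/\alpha-1}}$.

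Your bucketing argument is also correct and arguably more systematic: the window bound $|B_j|\le\log_2 k+O(1)$ is exactly where the constraint $k_i\le k$ enters, and combining it with $|B_j|\le 2^{j+1}$ and splitting at $j^*\approx\log_2\log_2 k$ does give $\Sigma\le C(\alpha)\,(\log k)^{2/\alpha}A^{1-2/\alpha}$. However, a back-of-the-envelope computation of your two geometric sums yields $C(\alpha)\approx\frac{2}{2^{1-2/\alpha}-1}+\frac{1}{1-2^{2/\alpha-1}}$, which is strictly larger than the paper's $\frac{2-2^{2/\alpha-1}}{1-2^{2/\alpha-1}}$, so ``careful bookkeeping yields the target coefficient'' is likely too optimistic for the \emph{exact} $f(\alpha)$ as stated. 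This is harmless for Theorem~\ref{thm:main}, which only claims $O(f(\alpha)\cdot\ldots)$, but if you want to hit the precise constant of the lemma you should adopt the paper's index-selection trick instead.
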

\begin{proof}
    Using Lemma \ref{lem:potential_increase}, we obtain that 
    \begin{align*}
        \mathbb E\left[ \phi_i(k) \right] &\le h(\alpha)\cdot  \left(2^i\right)^{1-2/\alpha} \cdot \left((2^{2\alpha} g_\alpha) \sum_{C\in \mathcal C_\OPT}  |C|(\sigma_C)^\alpha\right)^{2/\alpha} \cdot \sum_{t=0}^{k_i-1} (k_i-t)^{-2/\alpha}  \\
        &\le h(\alpha)\cdot  \left(2^i\right)^{1-2/\alpha} \cdot \left((2^{2\alpha} g_\alpha) \sum_{C\in \mathcal C_\OPT}  |C|(\sigma_C)^\alpha\right)^{2/\alpha} \cdot \int_{0}^{k_i} (k_i-u)^{-2/\alpha} du \\
        &\le h(\alpha)\cdot  \left(2^i\right)^{1-2/\alpha} \cdot \left((2^{2\alpha} g_\alpha) \sum_{C\in \mathcal C_\OPT}  |C|(\sigma_C)^\alpha\right)^{2/\alpha} \cdot \frac{(k_i)^{1-2/\alpha}}{1-2/\alpha}\ .
    \end{align*}
    Therefore, we obtain  
    \begin{align*}
        \mathbb E[\phi(k)] = \sum_{i\ge 0} \mathbb E[\phi_i(k)] &\le \frac{h(\alpha)}{1-2/\alpha}\cdot \left((2^{2\alpha} g_\alpha) \sum_{C\in \mathcal C_\OPT}  |C|(\sigma_C)^\alpha\right)^{2/\alpha} \cdot \sum_{i\ge 0} (2^i k_i)^{1-2/\alpha}\\
        &= \frac{8h(\alpha) (g_\alpha)^{2/\alpha}}{1-2/\alpha}\cdot  \left(\sum_{C\in \mathcal C_\OPT}  |C|(\sigma_C)^\alpha\right)^{2/\alpha} \cdot \sum_{i\ge 0} (2^i k_i)^{1-2/\alpha}\ .
    \end{align*}
    Comparing this with optimum clustering, we get using basic algebraic manipulations
    \begin{align*}
        \frac{\mathbb E[\phi(k)]}{\OPT} &= \frac{8h(\alpha) (g_\alpha)^{2/\alpha}}{1-2/\alpha} \cdot \frac{\left(\sum_{C}  |C|(\sigma_C)^\alpha\right)^{2/\alpha}}{\sum_{C}  |C|(\sigma_C)^2} \cdot  \sum_{i\ge 0} (2^i k_i)^{1-2/\alpha}\\
        &= \frac{8h(\alpha) (g_\alpha)^{2/\alpha}}{1-2/\alpha} \cdot \frac{\left(\sum_{C}  |C|\left(\frac{\sigma_C}{\smax}\right)^\alpha \smax^\alpha \right)^{2/\alpha}}{\sum_{C}  |C|\left(\frac{\sigma_C}{\smin}\right)^2\cdot \smin^2} \cdot  \sum_{i\ge 0} (2^i k_i)^{1-2/\alpha}\\
        &\le \frac{8h(\alpha) (g_\alpha)^{2/\alpha}}{1-2/\alpha} \cdot \frac{\left(\sum_{C}  |C|\left(\frac{\sigma_C}{\smax}\right)^2 \right)^{2/\alpha}\cdot \smax^2}{\sum_{C}  |C|\left(\frac{\sigma_C}{\smin}\right)^2\cdot \smin^2} \cdot  \sum_{i\ge 0} (2^i k_i)^{1-2/\alpha}\\
        &\le \frac{8h(\alpha) (g_\alpha)^{2/\alpha}}{1-2/\alpha} \cdot \frac{\left(\sum_{C}  |C|\left(\sigma_C\right)^2 \right)^{2/\alpha}\cdot \smax^{2-4/\alpha}}{\sum_{C}  |C|\left(\frac{\sigma_C}{\smin}\right)^2\cdot \smin^2} \cdot  \sum_{i\ge 0} (2^i k_i)^{1-2/\alpha}\\
        &\le \frac{8h(\alpha) (g_\alpha)^{2/\alpha}}{1-2/\alpha} \cdot \frac{\left(\sum_{C}  |C|\left(\frac{\sigma_C}{\smin}\right)^2 \right)^{2/\alpha}\cdot (\smin)^{4/\alpha}\cdot \smax^{2-4/\alpha}}{\sum_{C}  |C|\left(\frac{\sigma_C}{\smin}\right)^2\cdot \smin^2} \cdot  \sum_{i\ge 0} (2^i k_i)^{1-2/\alpha}\\
        &\le \frac{8h(\alpha) (g_\alpha)^{2/\alpha}}{1-2/\alpha} \cdot \left(\frac{\smax}{\smin}\right)^{2-4/\alpha}\cdot \frac{\sum_{i\ge 0} (2^i k_i)^{1-2/\alpha}}{\left(\sum_{C}  |C|\right)^{1-2/\alpha}} \\
        &\le \frac{16h(\alpha) (g_\alpha)^{2/\alpha}}{1-2/\alpha}\cdot \left(\frac{\smax}{\smin}\right)^{2-4/\alpha}\cdot \frac{\sum_{i\ge 0} (2^i k_i)^{1-2/\alpha}}{\left(\sum_{i\ge 0}  2^i k_i\right)^{1-2/\alpha}}\ ,
    \end{align*}
    where the fourth inequality is obtained using
    \begin{equation*}
        \left(\sum_{C}  |C|\left(\frac{\sigma_C}{\smin}\right)^2\right)^{2/\alpha-1} \le \left(\sum_{C}  |C|\right)^{2/\alpha-1}\ .
    \end{equation*}
    To conclude the proof, we need to upper bound the ratio 
    \begin{equation*}
        \frac{\sum_{i\ge 0} (2^i k_i)^{1-2/\alpha}}{\left(\sum_{i\ge 0}  2^i k_i\right)^{1-2/\alpha}}
    \end{equation*}
    for any integer sequence $(k_i)_{i\ge 0}$ satisfying the constraint $\sum_{i\ge 0}k_i=k$. Using Jensen's inequality (note that $x\mapsto x^{1-2/\alpha}$ is concave), we immediately obtain 
    \begin{equation}
    \label{eq:ratio_weights}
        \frac{\sum_{i\ge 0} (2^i k_i)^{1-2/\alpha}}{\left(\sum_{i\ge 0}  2^i k_i\right)^{1-2/\alpha}} \le \ell^{2/\alpha}\ .
    \end{equation}
    For the second upperbound, note that to have equality in Jensen's inequality, we must have that $2^ik_i=2^jk_j$ for all $i,j\ge 0$ such that $k_i \neq 0, k_j\neq 0$. Intuitively, this can only happen when $\ell =O(\log k)$. Formally, let us denote by $L$ the maximum $i\ge 0$ such that $k_L\neq 0$. We then build a decreasing sequence of indices as follows. We define $i_1$ to be equal to $L$. Then, assuming we defined the indices $i_1,i_2,\ldots i_x$, we define $i_{x+1}$ to be the highest index $0\le i<i_x$ such that $k_{i_{x+1}}\ge 2k_{i_x}$. We stop until it is not possible to find an index $i_{x+1}$ fitting those conditions anymore. We denote by $\mathcal I$ the set of indices that were selected, and $\mathcal J:=\mathbb N\setminus \mathcal I$ its complement.

    We notice that, by construction, $|\mathcal I|\le \log k$ since $\sum_{i\in \mathcal I} k_i\le \sum_{i\ge 0} k_i=k$. Then clearly we have that 
    \begin{equation*}
        \frac{\sum_{i\ge 0} (2^ik_i)^{1-2/\alpha}}{\left(\sum_{i\ge 0} 2^ik_i\right)^{1-2/\alpha}} \le \frac{\sum_{i\in  \mathcal I} (2^ik_i)^{1-2/\alpha}}{\left(\sum_{i\in  \mathcal I} 2^ik_i\right)^{1-2/\alpha}} + \frac{\sum_{i\in  \mathcal J} (2^ik_i)^{1-2/\alpha}}{\left(\sum_{i\in  \mathcal I} 2^ik_i\right)^{1-2/\alpha}}\ .
    \end{equation*}
    The first term on the right-hand side can be upperbounded by $|\mathcal I|^{2/\alpha}\le (\log k)^{2/\alpha}$, using Jensen's inequality again. As for the second term, let us denote by $i_1,i_2,\ldots,i_{|\mathcal I|}$ the set of indices selected to be in $\mathcal I$. Then, we write
    \begin{align*}
        \sum_{i\in  \mathcal J} (2^ik_i)^{1-2/\alpha} &= \sum_{x=1}^{|\mathcal I|}\sum_{j\in (i_{x+1},i_{x})} (2^jk_j)^{1-2/\alpha}\\
        &\le \sum_{x=1}^{|\mathcal I|}\sum_{j\in (i_{x+1},i_x)} (2^j 2k_{i_x})^{1-2/\alpha}\\
        &\le \sum_{x=1}^{|\mathcal I|}\sum_{j\in (i_{x+1},i_x)} (2^{i_x}2^{j-i_x} 2k_{i_x})^{1-2/\alpha}\\
        &\le \sum_{i\in \mathcal I} (2^{i}k_{i})^{1-2/\alpha} \cdot \left(\sum_{j=0}^{+\infty} \left(2^{-i}\right)^{1-2/\alpha} \right)\\
        &= \frac{\sum_{i\in \mathcal I} (2^{i}k_{i})^{1-2/\alpha}}{1-2^{2/\alpha-1}}\ .
    \end{align*}
    Therefore the second right-hand side term in Equation \eqref{eq:ratio_weights} can be upper bounded by
    \begin{equation*}
        \frac{(\log k)^{2/\alpha}}{1-2^{2/\alpha-1}}\ ,
    \end{equation*}
    which concludes the proof.
    \end{proof}

\subsection{Wrapping it up}
\label{sec:wrapping_it_up}
Using Lemma \ref{lem:potential_global} in combination with Lemma \ref{lem:potential_upper_bound}, we obtain that the final expected cost of hit clusters is at most 
\begin{equation*}
    h(\alpha) \cdot (g_\alpha)^{2/\alpha} \cdot \left(\frac{\smax}{\smin}\right)^{2-4/\alpha}\cdot  \min \{\ell,\log(k)\}^{2/\alpha}\cdot \OPT\ ,
\end{equation*}
where $h(\alpha):=\frac{16(\alpha/2-1)^{1-2/\alpha}}{\alpha/2-1}\cdot \frac{2-2^{2/\alpha-1}}{1-2^{2/\alpha-1}}$. Regarding the cost of hit clusters, we use Lemma \ref{lem:cost_hit_clusters_alpha_sampling} and Lemma \ref{lem:arthur_cost_hit_uniform_sampling} to see that the expected cost of hit clusters is at most 
\begin{equation*}
    (4e+(\alpha+1)^2\cdot (g_\alpha)^{2/\alpha} )\cdot  \OPT\ .
\end{equation*}

Therefore, if we define $f(\alpha):=(4e+(\alpha+1)^2)\cdot \frac{16(\alpha/2-1)^{1-2/\alpha}}{\alpha/2-1}\cdot \frac{2-2^{2/\alpha-1}}{1-2^{2/\alpha-1}}=O\left(\frac{\alpha^2}{(\alpha/2-1)^{2/\alpha+1}}\right)$, we obtain precisely the result of Theorem \ref{thm:main}.

\section{Additional Experiments}\label{sec:expts}

In this section, we provide additional experiments to further validate our claims. Particularly, we run the $D^{\alpha}$-seeding on the following instances:

\begin{enumerate}
\item 1) $\mathcal{D}_1$: A mixture of $4$ Gaussians with the centers/means of the Gaussians placed on the corners of a square with edge length of $2\Delta=100$ (default). All the covariances are identity matrices (in $d=2$).\
\item 2) $\mathcal{D}_2$: A mixture of $4$ Gaussians with the centers/means of the Gaussians placed on the corners of a square with edge length of $2\Delta=100$ (default). All the covariances are identity matrices (in $d=2$), except one that has a covariance $\sigma^2I$, with $\sigma^2=400$.
\item 3) $\mathcal{D}_3$: A mixture of $8$ Gaussians with the centers/means of the Gaussians placed on the corners of a cube with edge length of $2\Delta=100$ (default). All the covariances are identity matrices (in $d=3$).\
\item 4) $\mathcal{D}_4$: A mixture of $8$ Gaussians with the centers/means of the Gaussians placed on the corners of a cube with edge length of $2\Delta=100$ (default). All the covariances are identity matrices (in $d=3$), except one that has a covariance $\sigma^2I$, with $\sigma^2=800$.\

\item 5) $\mathcal{D}_5$: A mixture of $4$ bivariate student-t distributions, with different degrees of freedom, $\nu = \{1.6,2,5,10\}$, where the location parameter is centered on the corners of a square with edge length of $2\Delta=100$ (default).
\end{enumerate}

We generate these instance with $n=10000$ samples from the appropriate mixture distribution. Then we consider $\alpha=\{2,6,10,\ldots,38\}$, i.e, starting from 2 and increments of 4 for the Gaussian instances, i.e, $\mathcal{D}_1$ through $\mathcal{D}_4$. For $\mathcal{D}_5$ we consider a more fine-grained version of this where $\alpha=\{2,4,6,8,\ldots,16\}$, because the distribution generating $\mathcal{D}_5$ has a heavier tail. So in practice, large $\alpha$ values do not make sense.

In all experiments, we show the average performance over $5000$ trial runs for each value of $\alpha$. In some cases we also run Lloyd's algorithm until convergence after the seeding. 

\begin{figure}
    \centering
    \begin{minipage}{0.5\textwidth}
        \centering
        \includegraphics[width=\textwidth]{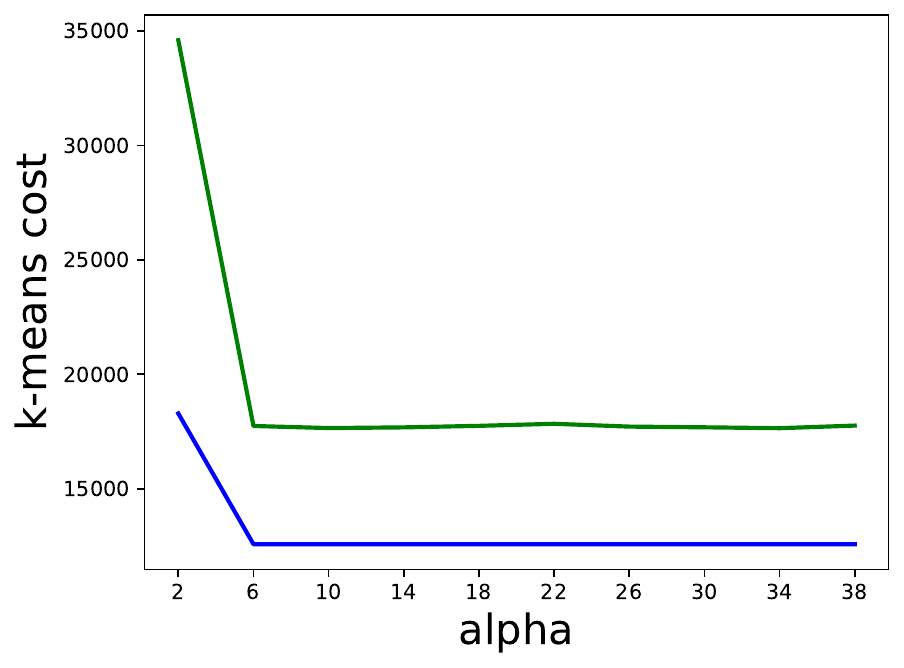} 
    \end{minipage}\hfill
    \begin{minipage}{0.5\textwidth}
        \centering
        \includegraphics[width=\textwidth]{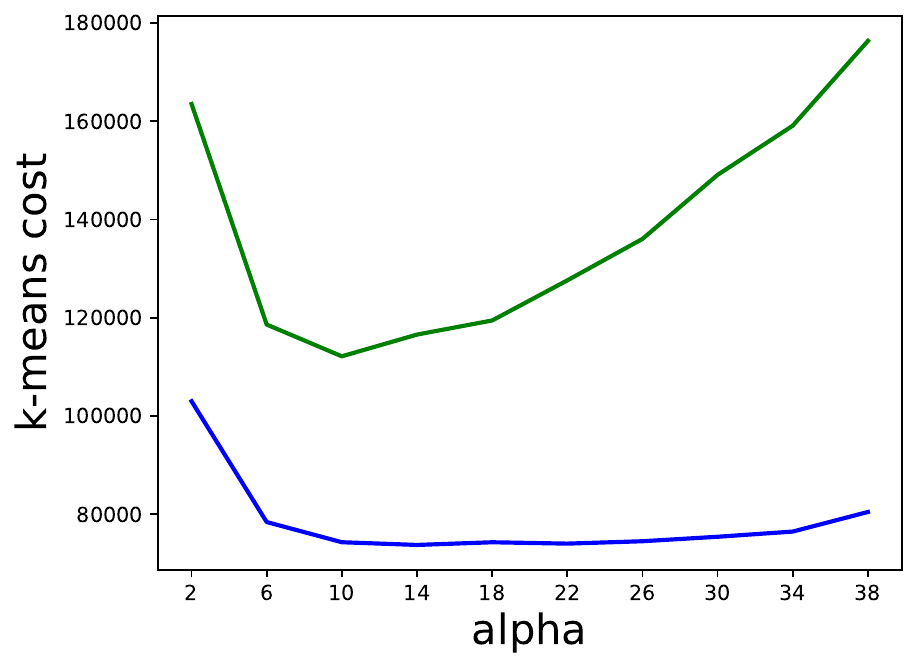} 
    \end{minipage}
    \caption{Performance of $D^{\alpha}$ seeding for two instances $\mathcal{{D}}_1$ (on the left) and $\mathcal{D}_2$ (on the right). The green curve indicates the cost right after the seeding step, and the blue curve indicates the cost if we run Lloyd's algorithm until convergence after the seeding.}
    \label{fig:exptI3vsI4Gaussians}
\end{figure}

\begin{figure}
    \centering
    \begin{minipage}{0.5\textwidth}
        \centering
        \includegraphics[width=\textwidth]{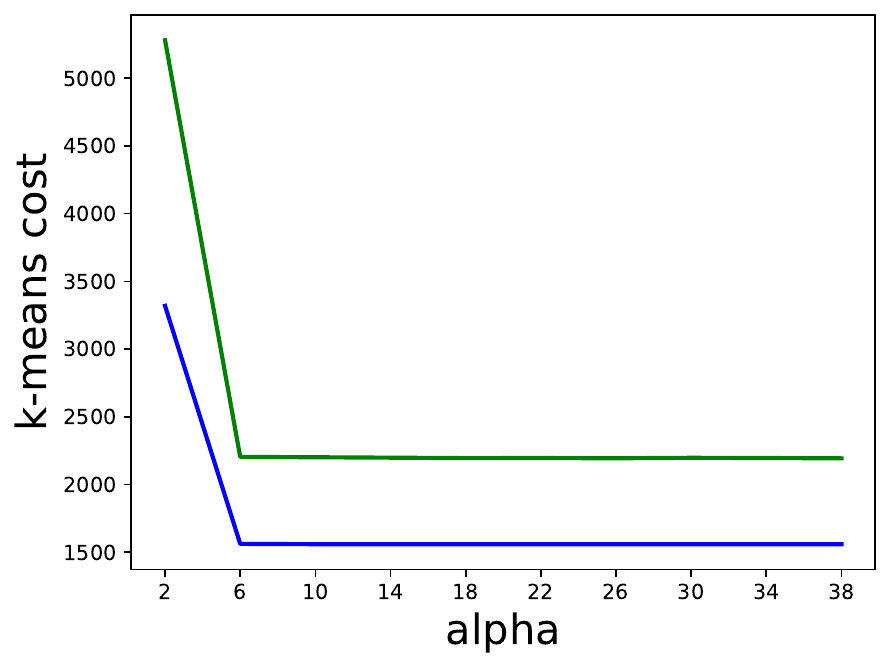} 
    \end{minipage}\hfill
    \begin{minipage}{0.5\textwidth}
        \centering
        \includegraphics[width=\textwidth]{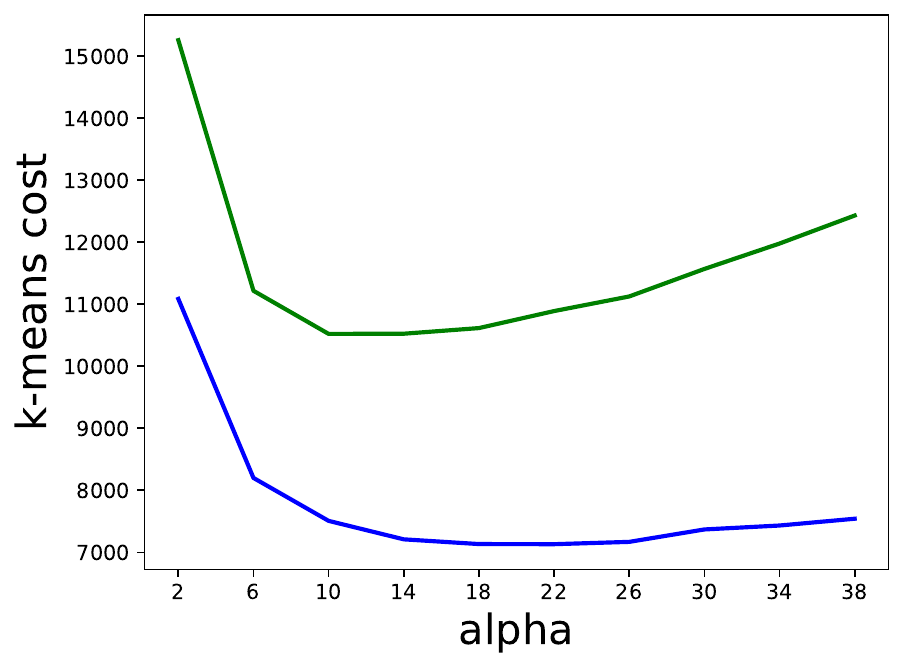} 
    \end{minipage}
    \caption{Performance of $D^{\alpha}$ seeding for two instances $\mathcal{D}_3$ (on the left) and $\mathcal{D}_4$ (on the right). The green curve indicates the cost right after the seeding step and the blue curve indicates the cost if we additionally run Lloyd's steps until convergence. }
    \label{fig:exptI1vsI2Gaussians}
\end{figure}

\begin{figure}
    \centering
    \begin{minipage}{0.5\textwidth}
        \centering
        \includegraphics[width=\textwidth]{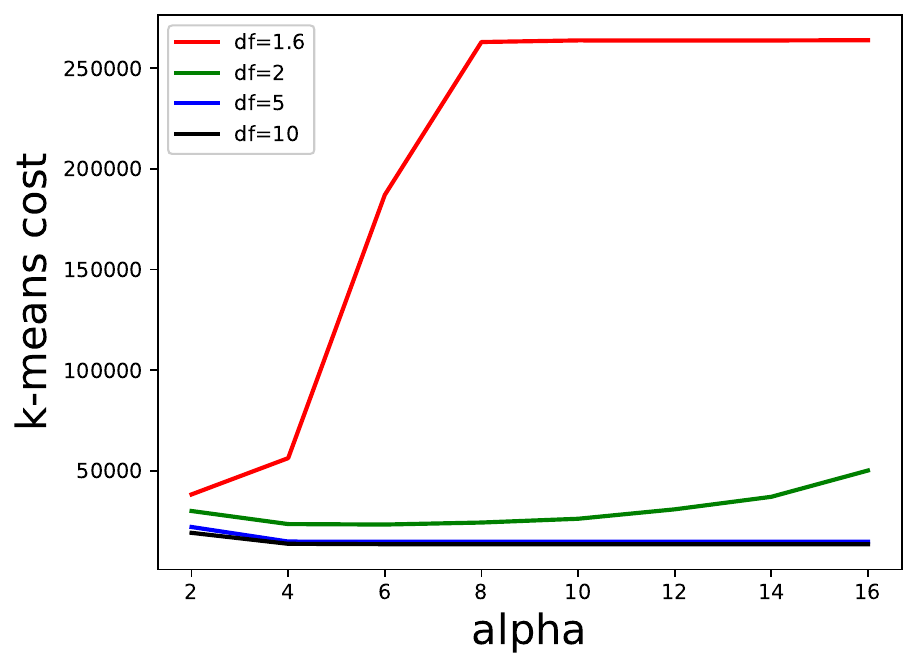} 
    \end{minipage}\hfill
    \begin{minipage}{0.5\textwidth}
        \centering
        \includegraphics[width=\textwidth]{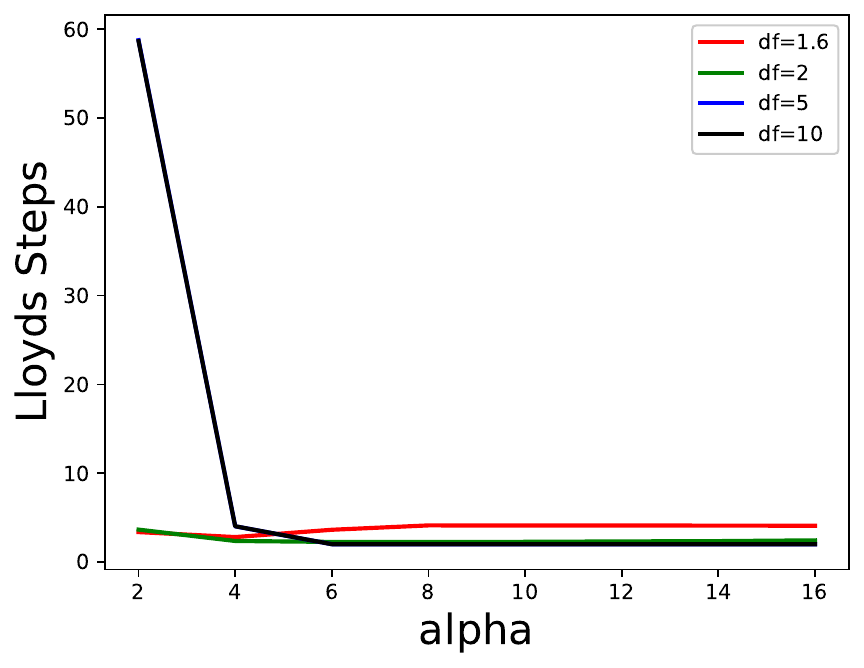} 
    \end{minipage}
    \caption{Performance of $D^{\alpha}$ seeding for the instance $\mathcal{D}_5$. The figure on the left shows the performance, and on the right is the number of Lloyd's steps needed until convergence.}
    \label{fig:studentt}
\end{figure}


 Figures \ref{fig:exptI1vsI2Gaussians} and \ref{fig:exptI3vsI4Gaussians}, reiterate our main claims that for Gaussians of the same/similar covariances, it is always best to choose $\alpha >> 2$. This can be seen in both the $\mathcal{D}_1$  and $\mathcal{D}_3$. However, the story is different when there is one Gaussian with a sufficiently different covariance, then one can see from the figures of $\mathcal{D}_2$ and $\mathcal{D}_4$ that there is a trade-off. Moreover, we can see that even if we run Lloyd's algorithm until convergence the general pattern does not change much. This re-iterates the importance of a \emph{good} initial seeds. 
 
 Finally, with the experiments on student-t distribution, we intend to show the effect of $g_{\alpha}$ on the performance of $D^{\alpha}$ seeding. Here when the degrees of freedom ($df$) are $1< df \leq 2$, the variance is $\infty$ (see \cite{wiki-t}). As seen in Figure \ref{fig:studentt}, when $df=1.6$, $\alpha > 2$ performs poorly in this case and $\alpha = 2$ seems to be relatively better. But when $df=2$, already there are some $\alpha$ values > 2 that are better than $\alpha=2$. In general, one might argue that if such a pattern is observed where $\alpha = 2$ performs better than $\alpha > 2$, it may be that the distributions could have many outliers. In this case, it might be that the $k$-means objective is probably not appropriate and one should consider more robust objectives such as $k$-medians.
 Now with $df=5,10$, where the distributions become more ``Gaussian'' like, we see a similar pattern as the one observed in Gaussian mixtures, where a large $\alpha$ is preferred. Crucially, we also notice that in this case, the number of Lloyd's steps required for convergence it much lower for $\alpha> 2$ in comparison to the number needed when $\alpha=2$.

\section*{Acknowledgements}{This work was partially supported by the Swiss National Science Foundation project 200021-184656 “Randomness in Problem Instances and Randomized Algorithms". The first author is also supported by a Post-Doctoral fellowship of the ETH AI Center, Zurich.}
\bibliographystyle{plain}
\bibliography{biblio}

\begin{thebibliography}{10}

\bibitem{ackerman2009clusterability}
Margareta Ackerman and Shai Ben-David.
\newblock Clusterability: A theoretical study.
\newblock In {\em Artificial intelligence and statistics}, pages 1--8. PMLR, 2009.

\bibitem{aggarwal2009adaptive}
Ankit Aggarwal, Amit Deshpande, and Ravi Kannan.
\newblock Adaptive sampling for k-means clustering.
\newblock In {\em Approximation, Randomization, and Combinatorial Optimization. Algorithms and Techniques: 12th International Workshop, APPROX 2009, and 13th International Workshop, RANDOM 2009, Berkeley, CA, USA, August 21-23, 2009. Proceedings}, pages 15--28. Springer, 2009.

\bibitem{alashwal2019application}
Hany Alashwal, Mohamed El~Halaby, Jacob~J Crouse, Areeg Abdalla, and Ahmed~A Moustafa.
\newblock The application of unsupervised clustering methods to alzheimer’s disease.
\newblock {\em Frontiers in computational neuroscience}, 13:31, 2019.

\bibitem{arora2005learning}
Sanjeev Arora and Ravi Kannan.
\newblock Learning mixtures of separated nonspherical gaussians.
\newblock {\em The Annals of Applied Probability}, 15(1A):69--92, 2005.

\bibitem{arthur2006slow}
David Arthur and Sergei Vassilvitskii.
\newblock How slow is the k-means method?
\newblock In {\em Proceedings of the twenty-second annual symposium on Computational geometry}, pages 144--153, 2006.

\bibitem{arthur2007k}
David Arthur and Sergei Vassilvitskii.
\newblock k-means++ the advantages of careful seeding.
\newblock In {\em Proceedings of the eighteenth annual ACM-SIAM symposium on Discrete algorithms}, pages 1027--1035, 2007.

\bibitem{artin2015gamma}
Emil Artin.
\newblock {\em The gamma function}.
\newblock Courier Dover Publications, 2015.

\bibitem{BachemL017}
Olivier Bachem, Mario Lucic, and Andreas Krause.
\newblock Distributed and provably good seedings for k-means in constant rounds.
\newblock In Doina Precup and Yee~Whye Teh, editors, {\em Proceedings of the 34th International Conference on Machine Learning, {ICML} 2017, Sydney, NSW, Australia, 6-11 August 2017}, volume~70 of {\em Proceedings of Machine Learning Research}, pages 292--300. {PMLR}, 2017.

\bibitem{bahmani2012scalable}
Bahman Bahmani, Benjamin Moseley, Andrea Vattani, Ravi Kumar, and Sergei Vassilvitskii.
\newblock Scalable k-means++.
\newblock {\em arXiv preprint arXiv:1203.6402}, 2012.

\bibitem{balcan2018data}
Maria-Florina~F Balcan, Travis Dick, and Colin White.
\newblock Data-driven clustering via parameterized lloyd's families.
\newblock {\em Advances in Neural Information Processing Systems}, 31, 2018.

\bibitem{burney2014k}
SM~Aqil Burney and Humera Tariq.
\newblock K-means cluster analysis for image segmentation.
\newblock {\em International Journal of Computer Applications}, 96(4), 2014.

\bibitem{chiu2009intelligent}
Chui-Yu Chiu, Yi-Feng Chen, I-Ting Kuo, and He~Chun Ku.
\newblock An intelligent market segmentation system using k-means and particle swarm optimization.
\newblock {\em Expert systems with applications}, 36(3):4558--4565, 2009.

\bibitem{choo2020k}
Davin Choo, Christoph Grunau, Julian Portmann, and V{\'a}clav Rozhon.
\newblock k-means++: few more steps yield constant approximation.
\newblock In {\em International Conference on Machine Learning}, pages 1909--1917. PMLR, 2020.

\bibitem{Cohen-AddadLNSS20}
Vincent Cohen{-}Addad, Silvio Lattanzi, Ashkan Norouzi{-}Fard, Christian Sohler, and Ola Svensson.
\newblock Fast and accurate {\textdollar}k{\textdollar}-means++ via rejection sampling.
\newblock In Hugo Larochelle, Marc'Aurelio Ranzato, Raia Hadsell, Maria{-}Florina Balcan, and Hsuan{-}Tien Lin, editors, {\em Advances in Neural Information Processing Systems 33: Annual Conference on Neural Information Processing Systems 2020, NeurIPS 2020, December 6-12, 2020, virtual}, 2020.

\bibitem{dasgupta1999learning}
Sanjoy Dasgupta.
\newblock Learning mixtures of gaussians.
\newblock In {\em 40th Annual Symposium on Foundations of Computer Science (Cat. No. 99CB37039)}, pages 634--644. IEEE, 1999.

\bibitem{sanjoyLN}
Sanjoy Dasgupta.
\newblock Algorithms for k-means clustering, 2013.

\bibitem{MLCourse}
k-means advantages and disadvantages.
\newblock https://developers.google.com/machine-learning/clustering/algorithm/advantages-disadvantages?hl=en.
\newblock Accessed: 2023-10-06.

\bibitem{GrunauORT23}
Christoph Grunau, Ahmet~Alper {\"{O}}z{\"{u}}dogru, V{\'{a}}clav Rozhon, and Jakub Tetek.
\newblock A nearly tight analysis of greedy k-means++.
\newblock In {\em Proceedings of the 2023 {ACM-SIAM} Symposium on Discrete Algorithms, {SODA} 2023, Florence, Italy, January 22-25, 2023}, pages 1012--1070. {SIAM}, 2023.

\bibitem{hardy1952inequalities}
Godfrey~Harold Hardy, John~Edensor Littlewood, and George P{\'o}lya.
\newblock {\em Inequalities}.
\newblock Cambridge university press, 1952.

\bibitem{lattanzi2019better}
Silvio Lattanzi and Christian Sohler.
\newblock A better k-means++ algorithm via local search.
\newblock In {\em International Conference on Machine Learning}, pages 3662--3671. PMLR, 2019.

\bibitem{mahajan2009planar}
Meena Mahajan, Prajakta Nimbhorkar, and Kasturi Varadarajan.
\newblock The planar k-means problem is np-hard.
\newblock In {\em International workshop on algorithms and computation}, pages 274--285. Springer, 2009.

\bibitem{makarychev2020improved}
Konstantin Makarychev, Aravind Reddy, and Liren Shan.
\newblock Improved guarantees for k-means++ and k-means++ parallel.
\newblock {\em Advances in Neural Information Processing Systems}, 33:16142--16152, 2020.

\bibitem{munz2007traffic}
Gerhard M{\"u}nz, Sa~Li, and Georg Carle.
\newblock Traffic anomaly detection using k-means clustering.
\newblock In {\em Gi/itg workshop mmbnet}, volume~7, 2007.

\bibitem{ostrovsky2013effectiveness}
Rafail Ostrovsky, Yuval Rabani, Leonard~J Schulman, and Chaitanya Swamy.
\newblock The effectiveness of lloyd-type methods for the k-means problem.
\newblock {\em Journal of the ACM (JACM)}, 59(6):1--22, 2013.

\bibitem{pedregosa2011scikit}
Fabian Pedregosa, Ga{\"e}l Varoquaux, Alexandre Gramfort, Vincent Michel, Bertrand Thirion, Olivier Grisel, Mathieu Blondel, Peter Prettenhofer, Ron Weiss, Vincent Dubourg, et~al.
\newblock Scikit-learn: Machine learning in python.
\newblock {\em the Journal of machine Learning research}, 12:2825--2830, 2011.

\bibitem{shi2000normalized}
Jianbo Shi and Jitendra Malik.
\newblock Normalized cuts and image segmentation.
\newblock {\em IEEE Transactions on pattern analysis and machine intelligence}, 22(8):888--905, 2000.

\bibitem{vershynin2020high}
Roman Vershynin.
\newblock High-dimensional probability.
\newblock {\em University of California, Irvine}, 2020.

\bibitem{wei2016constant}
Dennis Wei.
\newblock A constant-factor bi-criteria approximation guarantee for k-means++.
\newblock {\em Advances in neural information processing systems}, 29, 2016.

\bibitem{wikipedia}
Standardized moment.
\newblock https://en.wikipedia.org/wiki/Standardized\_moment.
\newblock Accessed: 2023-10-07.

\bibitem{wiki-t}
Student-t distribtuion.
\newblock https://en.wikipedia.org/wiki/Student\%27s\_t-distribution.
\newblock Accessed: 2023-10-07.

\end{thebibliography}

\appendix
\section{Missing proofs and discussion from Section \ref{sec:thmproof}}
\subsection{On potential functions of previous works}
\label{sec:discussion_potential}
Here, we discuss more in details why a new potential function is needed when dealing with $D^\alpha$ seeding. Many of the previous works on \texttt{k-means++} rely on a more natural potential function. For instance in \cite{sanjoyLN}, the potential function is defined as 
\begin{equation*}
    \psi(t)=\frac{W_t}{|U_t|}\cdot \cost(U_t,Z_t)\ ,
\end{equation*}
where $U_t$ is the number of undiscovered clusters, and $W_t$ is the number of iterations where a center was selected in an already discovered cluster. We notice two main differences with our potential function. The first one is that we partitioned the clusters by weight classes, and the second is that  replaced the expression
\begin{equation*}
    \cost(U^{(i)}_t,Z_t)
\end{equation*} by a more involved 
\begin{equation*}
    (2^i)^{1-2/\alpha}\cdot \sum_{C\in U_t^{(i)}} (\costa(C,Z_t))^{2/\alpha}\ .
\end{equation*}

Let us discuss the second difference first, and for the purpose of simplicity let us assume there is a unique weight class. Then it is very tempting to use the same potential function as \cite{sanjoyLN}. Unfortunately, we run into issues when the next center is selected in an undiscovered cluster. 

Indeed, in that case it is quite crucial in the proof that the potential should decrease in expectation, and an increase in potential here seems like a quite fatal flaw. To show that the potential decreases in the case of $D^2$ seeding, one simply notices that if we had picked one of the undiscovered clusters uniformly at random and removed it from $\cost(U_t,Z_t)$, then the potential would already decrease. Now, $D^2$ seeding can only do better than that, since heavy clusters have even more chance of being sampled. But what about $D^\alpha$ seeding? Unfortunately, this is not true anymore, as it can be that there are two clusters $C_1, C_2$ such that $\cost(C_1,Z_t)>\cost (C_2,Z_t)$, but $\costa(C_1,Z_t)<\costa (C_2,Z_t)$, i.e. the orders are reversed. And this can happen even if $g_\alpha$ is fairly small.

Therefore, one needs a potential where the undiscovered clusters are ranked in the same order of importance as $D^\alpha$ seeding weights them. Of course, this potential should also scale as a square of the euclidean distance. This is where our potential comes in.

Now about the first difference, note that because of scaling issues we had to add an extra $n^{1-2/\alpha}$ if all the clusters are of weight $n$. If the clusters have different weight, the idea of partitioning them using geometric grouping comes naturally.
\subsection{Useful inequalities}
\label{sec:inequalities}
In our proof, we use several times the following inequalities, which we give here for completeness.
\begin{lemma}[Jensen's inequality \cite{hardy1952inequalities}]\label{lem:powermeans}
    For any set of non-negative real numbers $\{y_i\}_{i=1}^{n}$ it holds that:
    \begin{align}
        \sum_{i=1}^{n}y^{\gamma}_i \geq  n^{1-\gamma}\left(\sum_{i=1}^{n}y_i\right)^{\gamma}\ ,
    \end{align}
    if $\gamma\ge 1$. If $0\le \gamma < 1$, the reversed inequality holds.
\end{lemma}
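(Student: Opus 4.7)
The plan is to recognize the stated inequality as the finite-form power mean inequality, which is an immediate consequence of Jensen's inequality applied to the power function $f(x) = x^\gamma$ on $[0,\infty)$. First I would fix notation by setting $\bar{y} := \frac{1}{n}\sum_{i=1}^n y_i$, so that the claim becomes the assertion that $\frac{1}{n}\sum_i y_i^\gamma \geq \bar{y}^\gamma$ when $\gamma \geq 1$, with the reverse for $0 \leq \gamma < 1$.

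Next I would verify the convexity/concavity of $f(x) = x^\gamma$ on $[0,\infty)$. For $\gamma \geq 1$ the second derivative $f''(x) = \gamma(\gamma-1) x^{\gamma-2} \geq 0$ on $(0,\infty)$, so $f$ is convex; at the boundary $\gamma = 1$ the function is affine and the inequality holds with equality. For $0 \leq \gamma < 1$ we have $\gamma(\gamma-1) \leq 0$, so $f$ is concave on $(0,\infty)$ (and continuous up to $0$). The case $\gamma=0$ is trivial (both sides equal $n$ under the convention $0^0=1$, else we restrict to strictly positive $y_i$).

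Then I would apply the finite form of Jensen's inequality with uniform weights $1/n$ to obtain, in the convex case,
\begin{equation*}
    f\!\left(\frac{1}{n}\sum_{i=1}^n y_i\right) \;\leq\; \frac{1}{n}\sum_{i=1}^n f(y_i),
\end{equation*}
i.e.\ $\bar{y}^\gamma \leq \frac{1}{n}\sum_i y_i^\gamma$. Multiplying through by $n$ and using $\bar{y}^\gamma = n^{-\gamma}\bigl(\sum_i y_i\bigr)^\gamma$ gives $\sum_i y_i^\gamma \geq n^{1-\gamma} \bigl(\sum_i y_i\bigr)^\gamma$, which is the stated inequality. For $0 \leq \gamma < 1$, concavity flips the Jensen bound and the same algebraic rearrangement yields the reversed inequality.

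There is essentially no obstacle here: the result is a one-line corollary of Jensen once convexity of $x^\gamma$ is noted. The only mild care needed is at the boundary points $\gamma \in \{0,1\}$ and when some $y_i = 0$; both are handled by continuity (for $\gamma>0$, $0^\gamma = 0$), and $\gamma=1$ gives equality. If preferred, one can also give a self-contained proof via the weighted AM--GM inequality or by induction on $n$ using the two-variable case, but the Jensen route is the cleanest and matches the lemma's attribution.
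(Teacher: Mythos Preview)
Your proof is correct and complete: the inequality is exactly the power-mean form of Jensen's inequality, and your convexity/concavity check plus the uniform-weight Jensen application is the standard argument. The paper does not actually prove this lemma---it simply states it as a classical fact with a citation to Hardy--Littlewood--P\'olya---so your write-up goes beyond what the paper provides and there is nothing to compare against.
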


\begin{lemma}[Chebyshev's sum inequality \cite{hardy1952inequalities}]\label{lem:chebyshevsum}
    For any sequence of real numbers $\{y_i\}_{i=1}^{n}$ and $\{w_i\}_{i=1}^{n}$, such that $y_1 \geq y_2 \geq \ldots \geq y_n$ and $w_1 \geq w_2 \geq \ldots \geq w_n$, it holds that:
    \begin{align}
        \frac{1}{n}\sum_{i=1}^{n}y_iw_i\geq  \left(\frac{1}{n}\sum_{i=1}^{n}y_i\right)\cdot\left(\frac{1}{n}\sum_{i=1}^{n}w_i\right).
    \end{align}
\end{lemma}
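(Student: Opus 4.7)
The plan is to give the classical one-line identity proof of Chebyshev's sum inequality, which rewrites the difference between the two sides as a manifestly nonnegative double sum.

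First, I would rearrange the desired inequality by multiplying both sides by $n^2$, so that it becomes equivalent to showing
\begin{equation*}
    n\sum_{i=1}^n y_i w_i \;\geq\; \left(\sum_{i=1}^n y_i\right)\left(\sum_{i=1}^n w_i\right).
\end{equation*}
Then the key step is to establish the algebraic identity
\begin{equation*}
    n\sum_{i=1}^n y_i w_i - \left(\sum_{i=1}^n y_i\right)\left(\sum_{j=1}^n w_j\right) \;=\; \tfrac{1}{2}\sum_{i=1}^n \sum_{j=1}^n (y_i - y_j)(w_i - w_j).
\end{equation*}
To derive this, I would write $\left(\sum_i y_i\right)\left(\sum_j w_j\right) = \sum_{i,j} y_i w_j$, and note that $n\sum_i y_i w_i = \sum_{i,j} y_i w_i$, so the left-hand side equals $\sum_{i,j} y_i(w_i - w_j)$. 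By swapping the dummy indices $i\leftrightarrow j$, the same quantity also equals $\sum_{i,j} y_j(w_j - w_i) = -\sum_{i,j} y_j(w_i - w_j)$. Averaging these two representations yields the factor $\tfrac{1}{2}\sum_{i,j}(y_i - y_j)(w_i - w_j)$.

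The final step is to observe that since both sequences are sorted in the same (non-increasing) order, for every pair $(i,j)$ the factors $(y_i - y_j)$ and $(w_i - w_j)$ have the same sign (or one of them is zero). Hence each term $(y_i - y_j)(w_i - w_j)$ is nonnegative, so the entire double sum is nonnegative, which gives the inequality after dividing by $n^2$. There is no real obstacle here; the only thing to be careful about is verifying the combinatorial identity in the averaging step, which is the only nontrivial algebraic manipulation.
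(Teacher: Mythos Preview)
Your proof is correct and is in fact the standard argument for Chebyshev's sum inequality. The paper does not give its own proof of this lemma; it merely states the inequality and cites Hardy--Littlewood--P\'olya \cite{hardy1952inequalities}, so there is nothing further to compare.
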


\section{Lower bounds}
\subsection{The dependence on \texorpdfstring{$\smax/\smin$}{} is tight}\label{app:rmaxoverrmin}
We consider an instance with one cluster of $n$ points arranged on a regular simplex of side-length $R$, while the other $k-1$ clusters contain $n$ points arranged on a regular simplex of side-length $1$. Note that for a cluster made of a regular simplex, we have that $g_\alpha\le 2^\alpha$, hence $(g_\alpha)^{2/\alpha}=O(1)$, and the concentration of clusters will not play any role in this construction.

The clusters of radius $1$ are pairwise separated by a distance $\delta$, while the cluster of radius $R$ is at infinite distance from all the other clusters. We choose $R$ and $\delta$ so that $R^\alpha=10\cdot k\delta^\alpha$. Then, at each iteration $t\ge 3$, the probability of sampling from the cluster of radius $R$ is at least
\begin{equation*}
\frac{nR^\alpha}{nk\delta^\alpha+nR^\alpha}\ge 1/2.
\end{equation*}
Hence, the expected number of missed clusters is at least $k/3$, which implies that the expected cost of the returned solution is at least $(nk/3)\cdot \delta^2$ while $\mbox{OPT}$ costs at most $nR^2+nk$. We select $\delta$ so that $R^2=k$ which is satisfied for $\delta = k^{1/2-1/\alpha}$. In this case, the expected competitive ratio is at least $\Omega (k^{1-2/\alpha})$, and one can check that $\smax = R = k^{1/2}$, and $\smin=1$.

\subsection{The dependence on \texorpdfstring{$g_{\alpha}$}{} is tight}\label{app:galpha}

Consider the following instance. There are two clusters of $n$ points each. The first clusters $C_1$ consists of $n$ points on a regular simplex. All points in $C_1$ are at distance $1$ from each other, and at distance $1$ from the centroid of $C_1$. The second cluster has $2$ groups of points. There are $n-1$ points at the origin, and $1$ point at coordinates $(\Delta,0,\ldots, 0)$. We place the cluster $C_1$ so that its centroid lies at coordinates $(-\delta,0,\ldots ,0)$, with $\delta=\Delta/n^{1/\alpha}$.

Second, we select $\Delta$ so that $\smax/\smin = 1$. For this, we would simply need to set $\Delta$ so that 
\begin{equation*}
    (\Delta/n)^2 \cdot (n-1) + \Delta^2(1-1/n)^2 = n 
\end{equation*}
which gives $\Delta =\Theta(\sqrt{n})$. The cost of the given clustering is equal to $2n$.

However, we can see that $g_{\alpha}$ for the second cluster is not a constant. Indeed we can compute that 
\begin{align}
    g_\alpha&= \frac{(1/n)\cdot \sum_{z\in C_2} \costa(C_1,z)}{n^{1-\alpha/2} \cdot (\cost (C_2,\mu_{C_2}))^{\alpha/2}}\\ 
    &= \frac{(1/n)\cdot (2(n-1)\Delta^\alpha)}{n}\\
    &= \Theta\left(n^{\alpha/2-1}\right)\ .
\end{align}

  To analyze the $D^\alpha$ seeding procedure on this instance, let us denote by $\B_1$ the event that the first sampled center belongs to $C_1$, and by $\B_2$ the event that the second sample belongs to $C_2$ and lies at coordinate $(\Delta,0,\ldots, 0)$.

  Clearly we have that 
  \begin{equation*}
      \mathbb P[\B_1]=1/2\ ,
  \end{equation*}
  and 
  \begin{equation*}
      \mathbb P[\B_2\mid \B_1]\ge \frac{\Delta^\alpha}{n+(n-1)\cdot \delta^\alpha+\Delta^\alpha}\ge 1/3\ ,
  \end{equation*}
  by our choice of $\delta$. This implies that $\mathbb P[\B_1\cap \B_2]\ge 1/6$ and the expected cost of the output clustering is at least 
  \begin{equation*}
      \frac{1}{6}\cdot (n-1)\cdot \delta^2 = \Theta(n^{1+1-2/\alpha}) = \Omega(n^{1-2/\alpha}\cdot \OPT) = \Omega((g_\alpha)^{2/\alpha}) \cdot \OPT\ .
  \end{equation*}




\subsection{A lower bound for greedy \texorpdfstring{$k$}-means++}
\label{sec:scikit}
In this section, we prove Theorem \ref{thm:greedylowerbound}. In particular, it implies a super constant lower bound for the greedy variant which uses $f(k)$ samples, as long as $f(k)$ is super constant. We emphasize that here we did not try to optimize the lower bound. The main purpose of this section is to show that the greedy algorithm with a superconstant number of samples cannot guarantee a constant factor in expectation. Consider the following construction against greedy $k$-means++ with $f(k)$ samples. We will place our points in the euclidean space. For ease of construction, we will take the distributional point of view in this instance, i.e. we assume that each cluster $C_i$ contains infinitely many points placed in $\mathbb R^d$ according to some distribution $f_i$. Moreover, we will assume that these clusters are balanced (i.e. they have the same weight) which means here that the global distribution of all points in all the clusters has probability density $f(x)=\frac{1}{k}\cdot \sum_{i\in [k]} f_i(x)$ for all $x\in \mathbb R^2$.

In our instance, we will create $k/4$ groups of $4$ clusters as follows. First, we describe how to construct one group. We create an square $ABCD$ of side length $a=\log(f(k))$ and we denote by $M$ the centroid of this square, and by $b=a/\sqrt{2}$ the length of half a diagonal in the square. In the cluster $C_1$, the probability density will be a one-dimensional density on the segment $[AM]$ defined as follows
\begin{equation*}
    f_1(x)=
    \begin{cases}
    \frac{e^{-||x-A||_2}}{1-e^{-b}} \mbox{ if $x\in [AM]$ and}\\
    0 \mbox{ otherwise.}
    \end{cases}
\end{equation*}
We recall here that $AM=b$ so the above function is indeed a probability density. The centroid of the first cluster will be equal to the point $\mu_1\in [AM]$ such that
\begin{equation*}
    A\mu_1=\frac{\int_{0}^{b}xe^{-x}dx}{1-e^{-b}} = 1-\frac{b}{e^{b}-1}\ .
\end{equation*}

We repeat the construction similarly for the clusters $C_2$, $C_3$, and $C_4$, replacing $A$ by $B$, $C$, and $D$ respectively. Note that we needed only $2$ dimensions to construct our first group. We refer the reader to Figure \ref{fig:lowerbound_greedy} for an illustation of the construction.

\begin{figure}
    \centering
    \includegraphics{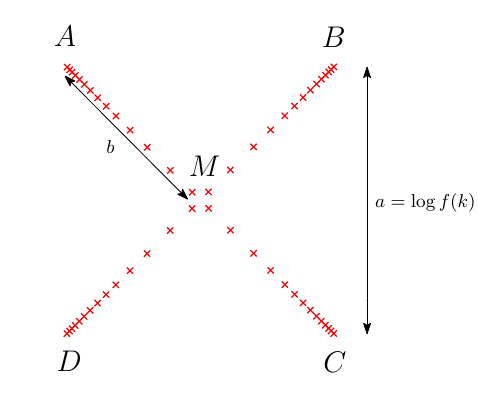}
    \caption{A schematic view of a group of $4$ clusters. The densest parts are at the vertices $A,B,C,D$. All the points in a cluster lie on the segment from a vertex to the center of mass of the square.}
    \label{fig:lowerbound_greedy}
\end{figure}

Then, we add $k/4-1$ extra dimensions and we fix the point $M$ to be the origin of $\mathbb R^{k/4+1}$. Finally, we make $k/4-1$ additional copies $T_2,T_3,\ldots, T_{k/4}$ of our first square and we arrange their centroids $M_1=(0)_{k/4+1},M_2,\ldots ,M_{k/4}$ in a regular simplex of side-length $\Delta=100\cdot (f(k))^3\cdot k$. One should think of $\Delta$ as a much bigger distance than $a$ (essentially, we have $k/4$ squares that are infinitely far from each other). Each group of $4$ cluster will be denoted $G_j$ for $j\in [k/4]$.

We prove the following simple statements.
\begin{claim}
\label{cla:Laplace_cost}
For any cluster $C_i$, we have that 
\begin{align*}
   \lim_{a\to \infty} \cost (C_i)=1. 
  \end{align*}
\end{claim}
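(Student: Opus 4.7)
\bigskip

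\noindent\textbf{Proof proposal.} The plan is a direct moment calculation. Since cluster $C_i$ is supported on a one-dimensional segment of length $b = a/\sqrt{2}$ (say, from the vertex $A$ to the centroid $M$ of the square) with probability density $e^{-x}/(1-e^{-b})$ where $x$ parametrizes the distance from $A$, the value $\cost(C_i)$ equals the variance of a truncated exponential random variable on $[0,b]$. The claim then reduces to showing that this variance tends to the variance of the (untruncated) standard exponential distribution, which is $1$.

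First, I would fix notation by parametrizing. Identify $C_1$ with the interval $[0,b] \subset \mathbb{R}$ via $x = \|y - A\|_2$ for $y$ on the segment $[AM]$. Under this identification the centroid $\mu_1$ corresponds to the value $A\mu_1 = 1 - b/(e^b - 1)$ given in the construction, and in the distributional (unit-mass) convention of this section,
\begin{equation*}
    \cost(C_1) \;=\; \int_0^b (x - A\mu_1)^2 \cdot \frac{e^{-x}}{1-e^{-b}}\, dx \;=\; \mathrm{Var}(X),
\end{equation*}
where $X$ is a random variable with density $f_1$ on $[0,b]$. By symmetry the same expression governs $C_2, C_3, C_4$, so it suffices to treat $C_1$.

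Next I would compute the first two moments by integration by parts:
\begin{equation*}
    \int_0^b x\, e^{-x}\, dx \;=\; 1 - (1+b)e^{-b}, \qquad \int_0^b x^2\, e^{-x}\, dx \;=\; 2 - (2 + 2b + b^2)e^{-b}.
\end{equation*}
Dividing by $1 - e^{-b}$ yields
\begin{equation*}
    \mathbb{E}[X] \;=\; \frac{1 - (1+b)e^{-b}}{1 - e^{-b}}, \qquad \mathbb{E}[X^2] \;=\; \frac{2 - (2+2b+b^2)e^{-b}}{1 - e^{-b}}.
\end{equation*}
One can verify in passing that $\mathbb{E}[X] = 1 - b/(e^b - 1)$, which recovers the value $A\mu_1$ given in the construction and confirms consistency.

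Finally, I would pass to the limit $a \to \infty$, equivalently $b \to \infty$. Since $b^k e^{-b} \to 0$ for every fixed $k \ge 0$, we obtain $\mathbb{E}[X] \to 1$ and $\mathbb{E}[X^2] \to 2$, hence
\begin{equation*}
    \lim_{a \to \infty} \cost(C_1) \;=\; \lim_{b \to \infty}\bigl(\mathbb{E}[X^2] - (\mathbb{E}[X])^2\bigr) \;=\; 2 - 1 \;=\; 1,
\end{equation*}
which is the desired conclusion. There is no real obstacle here; the step that requires the most care is simply making sure that the distributional/unit-mass convention is used consistently so that $\cost(C_i)$ is the variance of the truncated exponential rather than an unnormalized integral.
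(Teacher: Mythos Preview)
Your proposal is correct and follows essentially the same approach as the paper: both recognize $\cost(C_i)$ as the variance of a truncated exponential on $[0,b]$ and compute it directly. The only cosmetic difference is that the paper collapses everything into a single closed-form expression $\frac{e^b+e^{-b}-b^2-2}{(e^b-1)(1-e^{-b})}$ and takes the limit, whereas you compute $\mathbb{E}[X]$ and $\mathbb{E}[X^2]$ separately and combine; these are the same calculation.
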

\begin{proof}
    We can write
    \begin{align*}
        \cost (C_i) &= \frac{\int_{0}^b (x-A\mu_i)^2e^{-x}dx}{1-e^{-b}}\\
        &=\frac{e^b+e^{-b}-b^2-2}{(e^b-1)\cdot (1-e^{-b})}\ ,
    \end{align*}
    which clearly tends to $1$ as $b$ goes to infinity.
\end{proof}
\begin{claim}
\label{cla:LaplaceMoments}
For any cluster $C_i$ and any fixed $\alpha\ge 2$ and $b>10$, we have that 
\begin{align}
     g_\alpha= \frac{\int_{0}^b\int_0^b |x-y|^{\alpha}f_i(x)f_i(y)dxdy}{\left(\int_0^b (x-A\mu_i)^{2}f_i(x)dx\right)^{\alpha/2}}\le 4\cdot \Gamma(\alpha+1)\ ,
  \end{align}
  where $\Gamma$ is the gamma function \cite{artin2015gamma}.
\end{claim}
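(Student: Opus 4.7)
The plan is to bound the numerator and denominator of $g_\alpha$ separately. The numerator equals $\mathbb{E}[|X-Y|^\alpha]$ for two iid draws $X, Y$ from $f_i$, and can be upper bounded via a clean explicit computation exploiting the exponential structure; the denominator is handled using the preceding Claim \ref{cla:Laplace_cost}.

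First, I would extend the domain of integration from $[0,b]^2$ to $[0,\infty)^2$, paying a multiplicative factor of $(1-e^{-b})^{-2}$ to absorb the normalization constant:
\begin{equation*}
    \int_0^b \int_0^b |x-y|^\alpha f_i(x) f_i(y) dx dy \le \frac{1}{(1-e^{-b})^2} \int_0^\infty \int_0^\infty |x-y|^\alpha e^{-x} e^{-y} dx dy.
\end{equation*}
The right-hand integral equals $\mathbb{E}[|X-Y|^\alpha]$ where $X, Y$ are iid standard exponential random variables. A short computation (splitting on $X \ge Y$ and substituting $u = x-y$, or directly checking that $\mathbb{P}(|X-Y| > t) = e^{-t}$) shows that $|X-Y|$ itself follows $\mathrm{Exp}(1)$, so $\mathbb{E}[|X-Y|^\alpha] = \Gamma(\alpha+1)$. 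This yields the numerator bound $\Gamma(\alpha+1)/(1-e^{-b})^2$.

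Second, for the denominator I would invoke Claim \ref{cla:Laplace_cost}, which gives $\lim_{b \to \infty} \cost(C_i) = 1$. Concretely the closed form derived in that claim can be rewritten as $\cost(C_i) = 1 - b^2/(e^b + e^{-b} - 2)$, which is within $O(b^2 e^{-b})$ of $1$ for large $b$. Combining the two bounds gives
\begin{equation*}
    g_\alpha \le \frac{\Gamma(\alpha+1)}{(1-e^{-b})^2 \cdot \cost(C_i)^{\alpha/2}},
\end{equation*}
and the threshold $b > 10$ is chosen so that both $(1-e^{-b})^{-2}$ and $\cost(C_i)^{-\alpha/2}$ are at most a modest constant, yielding the claimed bound $4\Gamma(\alpha+1)$.

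The key quantitative insight is the identification of $|X-Y|$ as an $\mathrm{Exp}(1)$ random variable when $X, Y$ are iid $\mathrm{Exp}(1)$, which is what produces the clean factorial rate $\Gamma(\alpha+1)$. The only technical subtlety is the $\cost(C_i)^{\alpha/2}$ factor, which for fixed $\alpha$ and $b > 10$ is comfortably close to $1$; the slack factor of $4$ in the target bound easily absorbs this discrepancy together with the $(1-e^{-b})^{-2}$ correction, so the extension-then-exact-integration argument is the natural route.
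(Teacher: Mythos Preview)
Your proof is correct and follows the same overall structure as the paper: bound the numerator and denominator separately, handling the denominator via Claim~\ref{cla:Laplace_cost} and the numerator by reducing to $\int_0^\infty v^\alpha e^{-v}\,dv=\Gamma(\alpha+1)$. The only difference is cosmetic: the paper computes the double integral directly via the change of variables $(u,v)=(x+y,x-y)$, whereas you extend to $[0,\infty)^2$ and invoke the probabilistic fact that $|X-Y|\sim\mathrm{Exp}(1)$ for iid standard exponentials---both routes arrive at the same $\Gamma(\alpha+1)$, and your version is arguably the cleaner of the two.
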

\begin{proof}
From Claim \ref{cla:Laplace_cost}, we already have that the denominator tends to $1$ as $a$ goes to infinity. Hence, we only need to upper bound the numerator. We proceed with the simple change of variable $u=(x+y)$ and $v=(x-y)$. Then we obtain
\begin{align*}
    \int_{0}^b\int_0^b |x-y|^{\alpha}f_i(x)f_i(y)dxdy &= \frac{1}{2(1-e^{-b})^2}\cdot \int_{-b}^{b}\int_{|v|}^{2b-|v|} |v|^{\alpha}e^{-u} dudv \\
    &= \frac{1}{(1-e^{-b})^2}\cdot \int_{0}^{b}\int_{v}^{2b-v} v^{\alpha}e^{-u} dudv \\
    &=\frac{1}{(1-e^{-b})^2}\cdot \int_{0}^{b} v^{\alpha}(e^{-v}-e^{v-2b}) dudv \\
    &\le 2\int_{0}^{\infty}v^{\alpha}e^{-v} dv\\
    &= 2\Gamma(\alpha+1)\ .
\end{align*}
Using the fact that we consider $b\ge 10$, we obtain the desired upperbound.
\end{proof}

In light of Claim \ref{cla:LaplaceMoments}, it is is clear that for any fixed $\alpha>2$, the $D^\alpha$ seeding algorithm will guarantee a constant factor approximation in expectation.

However, we use the instances we build to prove Theorem \ref{thm:greedylowerbound}. To this end, we will need to look at the cost of a group of $4$ clusters after the greedy $k$-means++ algorithm selected exactly one center inside a group $j$. We distinguish two key cases: (1) If the first center $x$ in the cluster $C_i\in G_j$ such that the distance between $x$ and the centroid $\mu_i$ of the cluster is less than $\log (b)$, and (2) if the first center is at distance at most $b/100$ from the centroid $M_j$ of the corresponding square. Let us denote by $K_1$ and $K_2$ the total cost of the $4$ clusters in $G_j$ after case (1) or (2) happened respectively.

\begin{claim}
    \label{cla:Laplace_cost_cases}
    For $k$ a big enough constant, we have that 
    \begin{equation*}
        K_1\ge 11b^2/2\ ,
    \end{equation*}
    and 
    \begin{equation*}
        K_2\le 5b^2\ .
    \end{equation*}
\end{claim}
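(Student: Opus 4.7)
Fix a group $G_j$ and place its square center $M_j$ at the origin with vertices $A=(b,0)$, $B=(0,b)$, $C=(-b,0)$, $D=(0,-b)$, so cluster $C_i$ lies on the segment from its vertex $V_i$ to the origin. Parametrize $Y \in C_i$ as $Y = V_i - t\hat{e}_i$ where $\hat{e}_i$ is the unit vector from $V_i$ toward $M_j$ and $t$ has density $e^{-t}/(1-e^{-b})$ on $[0,b]$. A direct integration gives $\E[t] = (1 - (b+1)e^{-b})/(1-e^{-b}) \to 1$ and $\E[t^2] = (2 - (b^2+2b+2)e^{-b})/(1-e^{-b}) \to 2$ as $b \to \infty$, recovering $A\mu_i \to 1$ (already noted) and $\cost(C_i) \to 1$ (Claim~\ref{cla:Laplace_cost}). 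Since each cluster has mass $1$ and the remaining $k/4-1$ groups are $\Delta$-far from $G_j$, the unique center $x \in G_j$ selected so far is the nearest center to every point in $G_j$, hence the contribution of $G_j$ to the total cost equals $\sum_{i=1}^4 \E_{Y \sim f_i}[\|Y-x\|^2]$.

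For the lower bound on $K_1$, assume w.l.o.g.\ $x \in C_1$ with $\|x - \mu_1\| \le \log b$. Discarding the nonnegative $C_1$ term, the reverse triangle inequality $\|Y - x\|^2 \ge \|Y - \mu_1\|^2 - 2(\log b)\|Y - \mu_1\|$ combined with Jensen yields $\E[\|Y - x\|^2] \ge \E[\|Y - \mu_1\|^2] - 2(\log b)\sqrt{\E[\|Y - \mu_1\|^2]}$ for $i\in\{2,3,4\}$. A direct coordinate computation (with $\mu_1 \approx (b-1,0)$, $C_2$-points of the form $(0, b-t)$, etc.) gives, as $b \to \infty$,
\[
\E[\|Y - \mu_1\|^2] \longrightarrow \begin{cases} 2b^2 - 4b + 3, & Y \in C_2 \text{ or } C_4, \\ 4b^2 - 8b + 5, & Y \in C_3, \end{cases}
\]
so each of the three contributions is at least its leading value minus $O(b \log b)$. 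Summing yields $K_1 \ge 8b^2 - O(b\log b) \ge 11b^2/2$ for $b$ (equivalently $k$) sufficiently large.

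For the upper bound on $K_2$, the hypothesis $\|x - M_j\| \le b/100$ together with the triangle inequality $\|Y - x\|^2 \le \|x - M_j\|^2 + 2\|x - M_j\|\cdot\|Y - M_j\| + \|Y - M_j\|^2$ and the identity $\|Y - M_j\| = b-t$ (giving $\E[\|Y - M_j\|] \to b-1$ and $\E[\|Y - M_j\|^2] \to b^2 - 2b + 2$) yields, for each of the four clusters,
\[
\cost(C_i, x) \;\le\; \frac{b^2}{10000} + \frac{2b(b-1)}{100} + (b^2 - 2b + 2) + o(1) \;\le\; 1.03\, b^2
\]
for $b$ large. Summing over the four clusters gives $K_2 \le 4.12\, b^2 \le 5b^2$. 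The only delicate point is controlling the error terms ($O(b \log b)$ in case~1 and $O(b)$ in case~2); since both are dominated by $b^2$, the constants $11/2$ and $5$ hold with ample slack for $k$ sufficiently large.
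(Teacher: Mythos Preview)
Your proof is correct and follows essentially the same strategy as the paper: drop the $C_1$ term for $K_1$ and bound the remaining three clusters via the (reverse) triangle inequality, and for $K_2$ use the forward triangle inequality through $M_j$. The only notable difference is that for $K_1$ you compute $\E\|Y-\mu_1\|^2$ exactly for each of $C_2,C_3,C_4$ (obtaining $2b^2,4b^2,2b^2$ and hence $8b^2-O(b\log b)$), whereas the paper more crudely lower-bounds the distance from $x$ to points of $C_2,C_4$ by the horizontal component $\approx b-\log b$, getting only $b^2$ per perpendicular cluster and a total of $6b^2+O(b\log b)$; both comfortably exceed $11b^2/2$.
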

\begin{proof}
    Assume w.l.o.g. that the first center belongs to center $C_1$. In the first case then the total cost is lower bounded by the cost of the other three clusters $C_2,C_3,C_4$. Assume $C_3$ is the center whose centroid $\mu_3$ is the furthest away from $\mu_1$. Then the cost of this center is lower bounded by
    \begin{equation*}
        \int_{0}^b (2b-\log(b)-x)^2e^{-x}dx \ge 4b^2 + O(b\log b)\ .
    \end{equation*}
    For the other two clusters $C_2$ and $C_4$, we use Pythagore theorem. Since they lie on a perpendicular line to $C_1$ and we are in case (1), the distance of any point in $C_2$ or $C_4$ is at distance at least $b-\log b$ from the sampled center. Hence we can lower bound the cost of each of the other two clusters with the following quantity
    \begin{equation*}
        \int_{0}^b (b-\log(b))^2f(x)dx \ge b^2+O(b\log b)\ .
    \end{equation*}
    Hence the total cost is at least 
    \begin{equation*}
        6b^2+O(b\log b)
    \end{equation*}
    which is more than $11b^2/2$ for $b$ big enough.

    In the second case, the cost of each cluster is upperbounded by 
    \begin{equation*}
        \int_{0}^b (1.01b-x)^2f(x)dx \le 1.03b^2+O(b)\ .
    \end{equation*}
    Hence the total cost is upperbounded by 
    \begin{equation*}
        4.12b^2+O(b)\ ,
    \end{equation*}
    which is less than $5b^2$ for $b$ big enough.
\end{proof}

Next, we define as $\mathcal L$ the event that the greedy $k$-means++ algorithm samples one of its $f(k)$ samples at distance at most $b/100$ from the centroid $M_j$ of some still undiscovered group $G_j$.
\begin{claim}
\label{cla:proba_bad_event}
    Assume we are at iteration $t<k/4$. Then 
    \begin{equation*}
        \mathbb P[\mathcal L]\ge 1/2.
    \end{equation*}
\end{claim}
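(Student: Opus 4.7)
The plan is to show that each individual $D^2$-sample drawn by the greedy procedure hits the ``bad zone'' $S_j := B(M_j, b/100)\cap G_j$ for some still undiscovered group $G_j$ with probability at least $\Omega(e^{-99b/100})$, and then boost this over the $f(k)$ independent samples to obtain a failure probability at most $1/2$.

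First I would bound the $D^2$-probability of landing in any undiscovered group. Since $t < k/4$, at least $U := k/4 - t \ge 1$ groups are undiscovered. Using that the centroids $M_1,\ldots,M_{k/4}$ form a regular simplex of side $\Delta = 100\,(f(k))^3 k$ while each group has diameter at most $2b$, any point $x \in G_j$ with $G_j$ undiscovered satisfies $\min_{z\in Z_t}\|x-z\|\ge \Delta - 2b \ge \Delta/2$, and every point in a discovered group has cost at most $(2b)^2$. Hence the fraction of $D^2$ mass lying in undiscovered groups is at least $1 - O(k b^2/\Delta^2) \ge 1/2$.

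Next, I would analyse the conditional distribution on a fixed undiscovered $G_j$. Since $\min_{z\in Z_t}\|x-z\|^2 \in [\Delta^2-2\Delta b,\,\Delta^2+2\Delta b + b^2]$ for all $x\in G_j$, the $D^2$-density restricted to $G_j$ is within a constant factor of the original density $f$ on $G_j$. The original mass of $S_j$ can be computed directly: a point of cluster $C_i$ (on the segment from corner $A$ to $M_j$, parametrised by distance $t \in[0,b]$ from $A$) lies within $b/100$ of $M_j$ iff $t \ge 99b/100$, which has mass
\begin{equation*}
\int_{99b/100}^{b}\frac{e^{-t}}{1-e^{-b}}\,dt = \frac{e^{-99b/100}-e^{-b}}{1-e^{-b}} \ge \tfrac{1}{2}\,e^{-99b/100}
\end{equation*}
for $b$ large enough. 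Summing over the four clusters of $G_j$ and combining with step one, a single $D^2$-sample lies in $\bigcup_{j\text{ undisc.}} S_j$ with probability at least $c \cdot e^{-99b/100}$ for some absolute constant $c>0$.

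Finally, recall $b = \log(f(k))/\sqrt{2}$, so $e^{-99b/100} = f(k)^{-\gamma}$ with $\gamma = 99/(100\sqrt{2}) < 1$. By independence of the $f(k)$ candidate draws in the greedy step,
\begin{equation*}
\Pr[\neg\mathcal{L}] \le \bigl(1 - c\,f(k)^{-\gamma}\bigr)^{f(k)} \le \exp\bigl(-c\,f(k)^{1-\gamma}\bigr),
\end{equation*}
which is at most $1/2$ as soon as $f(k)$ is a sufficiently large constant; by assumption $f(k)$ is super-constant, so this holds for $k$ large enough. The main subtlety is the quantitative comparison in step two between the conditional $D^2$-distribution on an undiscovered group and the underlying $f$-distribution there: although $S_j$ is an exponentially small region, we only need a constant-factor comparison because the ratio $b/\Delta$ is tiny. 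Once that is in hand, the rest is a standard $(1-p)^m\le e^{-pm}$ amplification.
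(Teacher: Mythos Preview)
Your proposal is correct and follows essentially the same approach as the paper: both arguments use that $\Delta \gg b$ to show a single $D^2$-sample lands near some undiscovered centroid $M_j$ with probability at least $c\,e^{-99b/100} = c\,f(k)^{-\gamma}$ for $\gamma = 99/(100\sqrt 2) < 1$, and then amplify over the $f(k)$ independent candidates. The only cosmetic difference is that the paper first union-bounds over all $f(k)$ samples to ensure none land in a discovered group (getting the factor $9/10$) and then lower-bounds the per-sample hit probability by $1/f(k)$, whereas you directly lower-bound the unconditional per-sample probability; both routes yield the same $(1-p)^{f(k)}\le 1/2$ conclusion.
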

\begin{proof}
    At iteration $t<k/4$, it must be that one group $G_j$ is still undiscovered. Then we have that (by triangle inequality) 
    \begin{equation*}
        \cost(G_j)\ge 4(\Delta-2a)^2 > \Delta^2\ ,
    \end{equation*} for $k$ a big enough constant. The total cost of the discovered clusters is at most
    \begin{equation*}
        k\cdot (2a)^2 \le k (2\log f(k))^2 \le \cost(G_j)/(10f(k))\ .
    \end{equation*}
    By union-bound we obtain that with probability at least $9/10$, none of the sampled candidate centers belong to an already discovered group. Conditionned on that event, we can simply compute by triangle inequality that the probability that a specific sample is at distance at most $b/100$ from the centroid is at least (assuming $b$ and $f(k)$ are some big enough constant)
    \begin{equation*}
        \frac{9}{10}\cdot \left(\int_{99b/100}^b \frac{e^{-x}}{1-e^{-b}}dx \right)\ge \frac{9}{10}\cdot (e^{-99b/100}-e^{-b})\ge \frac{9}{10}e^{-99b/100}\ge \frac{9}{10(f(k))^{1/\sqrt{2}}}\ge \frac{1}{f(k)}\ .
    \end{equation*}
    Therefore, with probability at most $(1-1/f(k))^{f(k)}\le 1/e$, we have that none of the sampled candidate centers are close the centroid $M_j$ of a square. Therefore with probability $1/2$ at least one of them is close to a centroid $M_j$ of an undiscovered group. 
\end{proof}

When the event $\mathcal{L}$ happens, it must be (by Claim \ref{cla:Laplace_cost_cases}) that the selected center is at distance at least $\log b$ from the center it belongs to. By Claim \ref{cla:proba_bad_event}, this happens in expectation at least $k/8$ times. This means that the expected cost of the output clustering is at least
\begin{equation*}
    (k/8) \cdot \int_{0}^{\log b} (x-\log b)^2 e^{-x}dx \ge (k/8) \cdot ((\log b)^2+O(\log b)) = \Omega ((k/8)\cdot (\log \log (f(k)))^2\ .
\end{equation*}
Since by Claim \ref{cla:Laplace_cost} we have that the optimum cost is equivalent to $k$ (for $k$ growing to infinity), this concludes the proof of Theorem \ref{thm:greedylowerbound}.


\end{document}